\documentclass[a4paper,11pt]{article}
\usepackage[utf8]{inputenc}
\usepackage[english]{babel}
\usepackage{amssymb}
\usepackage{amsmath}
\usepackage{amsthm}
\usepackage{graphicx}
\usepackage{fullpage}
\usepackage{authblk}
\usepackage{pgfplots}
\usepackage[caption=false]{subfig}
\usepackage[colorlinks,bookmarksopen,bookmarksnumbered,citecolor=blue,linkcolor=magenta]{hyperref}

\newcommand{\myqed}{}

\sloppy

\pgfplotsset{compat=newest}
\usetikzlibrary{positioning, decorations.text, decorations.pathmorphing, calc}

\newcommand{\Q}{\mathbb{Q}}
\newcommand{\R}{\mathbb{R}}
\newcommand{\Z}{\mathbb{Z}}

\newcommand{\TGr}{{\rm \tilde{Gr}}}
\newcommand{\Gr}{{\rm Gr}}
\newcommand{\TAr}{{\rm \tilde{N}}}
\newcommand{\Ar}{{\rm N}}
\newcommand{\Area}{{\rm Area}}

\newcommand{\OPT}{\ensuremath{\mathop{\mathrm{OPT}}\nolimits}}
\newcommand{\dist}{\ensuremath{\mathop{\mathrm{dist}}\nolimits}}

\newtheorem{theorem}{Theorem}
\newtheorem{definition}{Definition}
\newtheorem{lemma}{Lemma}
\newtheorem{corollary}{Corollary}
\newtheorem{claim}{Claim}

\begin{document}

\title{Polynomial-Time Approximation Schemes for\\
Circle and Other Packing Problems%
\thanks{This work was partially supported by CNPq (grants \mbox{303987/2010-3},
\mbox{306860/2010-4}, \mbox{477203/2012-4}, and \mbox{477692/2012-5}), FAPESP
(grants \mbox{2010/20710-4}, \mbox{2013/02434-8}, \mbox{2013/03447-6}, and
\mbox{2013/21744-8}), and Project MaClinC of NUMEC at USP, Brazil.}}

\author[1]{Fl\'avio K.\ Miyazawa}
\author[2]{Lehilton L.\ C.\ Pedrosa}
\author[1]{Rafael C.\ S.\ Schouery}
\author[3]{\authorcr Maxim Sviridenko}
\author[2]{Yoshiko Wakabayashi}
\setlength{\affilsep}{1cm}
\affil[1]{Institute of Computing, University of Campinas, Brazil \authorcr \texttt{\small \{fkm,schouery\}@ic.unicamp.br}\medskip}
\affil[2]{Institute of Mathematics and Statistics, University of S\~ao Paulo, Brazil\authorcr \texttt{\small  \{lehilton,yw\}@ime.usp.br}\medskip}
\affil[3]{Yahoo!\ Labs, New York, NY\authorcr \texttt{\small sviri@yahoo-inc.com}}

\date{}
\maketitle

\begin{abstract}
We give an asymptotic approximation scheme (APTAS) for the problem of packing a
set of circles into a minimum number of unit square bins. To obtain rational
solutions, we use augmented bins of height $1+\gamma$, for some arbitrarily
small number $\gamma > 0$. Our algorithm is polynomial on $\log 1/\gamma$, and
thus $\gamma$ is part of the problem input. For the special case that $\gamma$
is constant, we give a (one dimensional) resource augmentation scheme, that is,
we obtain a packing into bins of unit width and height $1+\gamma$ using no more
than the number of bins in an optimal packing. Additionally, we obtain an APTAS
for the circle strip packing problem, whose goal is to pack a set of circles
into a strip of unit width and minimum height. These are the first approximation
and resource augmentation schemes for these problems.

Our algorithm is based on novel ideas of iteratively separating small and large
items, and may be extended to a wide range of packing problems that satisfy
certain conditions. These extensions comprise problems with different kinds of
items, such as regular polygons, or with bins of different shapes, such as
circles and spheres. As an example, we obtain APTAS's for the problems of
packing $d$-dimensional spheres into hypercubes under the $L_p$-norm.
\end{abstract}

\section{Introduction}

In the \emph{circle bin packing problem}, we are given a list $\mathcal{C}$ of
$n$ circles identified by their indexes, ${\mathcal{C} =\{1,\ldots,n\}}$, where
circle $i$ has radius $r_i \le 1/2$, for $1 \le i \le n$, and an unlimited
number of identical square bins of unit side. A packing is a non-overlapping
placement of circles into a set of bins, such that every circle is fully
contained in a bin. Here, one circle $i$ is packed into one bin if we can
associate coordinates $x_i,y_i$ to its center, such that $r_i \le x_i, y_i \le 1
- r_i$, and for every other circle $j$ with coordinates $x_j,y_j$ that is packed
into the same bin, we have $(x_i - x_j)^2 + (y_i - y_j)^2 \ge (r_i + r_j)^2$.
The objective is to find a packing of $\mathcal{C}$ into a minimum number of
bins. In the \emph{circle strip packing problem}, that is also known as
\emph{circle two-dimensional open dimension
problem}~\cite{HifiM09,WaescherHS2007}, the set of circles $\mathcal{C}$ must be
packed into a strip of unit width and unbounded height, and the objective is to
obtain a packing of minimum height.

There are several results in the literature for packing problems involving
circles, that are tackled using different methods, such as continuous and
nonlinear systems, and discrete methods~\cite{BirginG10}. A large portion of
the works deals with the problem of maximizing the number of circles of a given
radius in a square~\cite{SzaboMCSCG07}. In the context of origami design,
Demaine~\emph{et~al.}~\cite{DemaineFL10} proved that the decision problem that
asks whether a set of circles can be packed into a unit square or into an
equilateral triangle is NP-hard. Therefore, the circle bin packing problem and
the circle strip packing problem are also NP-hard.

We are interested in the design of approximation algorithms for the circle bin
packing and the circle strip packing problems.  As it is usual for packing
problems, the measure we look for is the asymptotic performance. Given an
algorithm~$\mathcal{A}$, and a problem instance~$I$, we denote by
$\mathcal{A}(I)$ the value of the solution produced by $\mathcal{A}$, and
by~$\OPT(I)$ the value of an optimum solution for~$I$. For some $\alpha \ge 1$,
we say that a polynomial-time algorithm $\mathcal{A}$ (for a minimization
problem) is an \emph{asymptotic $\alpha$-approximation algorithm} if, for every
instance $I$, we obtain $\mathcal{A}(I) \le \alpha \OPT(I) + O(1)$. Also, a
family of polynomial-time algorithms $\{\mathcal{A_\varepsilon}\}$ is said to be
an \emph{asymptotic polynomial-time approximation scheme} (APTAS) if, for every
instance $I$, and fixed $\varepsilon > 0$, we have $\mathcal{A_\varepsilon}(I)
\le (1+\nobreak\varepsilon) \OPT(I) + O(1)$. If the constant term $O(1)$ is
omitted from the definitions, that we say that $\mathcal{A}$ is an
\emph{$\alpha$-approximation algorithm}, and $\{\mathcal{A_\varepsilon}\}$ is a
\emph{polynomial-time approximation scheme} (PTAS), respectively.

\subsection{Our results and techniques}

In this paper, we present a new algorithm for a series of packing problems. We
give APTAS's for both the circle bin packing, and the circle strip packing
problems. The bin packing problem is considered when we allow the use of
augmented bins of unit width and height $1+\gamma$, for some arbitrarily small
$\gamma> 0$. This relaxation is necessary due to numeric concerns, as the
coordinates for the circles' centers obtained by our algorithms are given by
roots of polynomial equations, that are possibly irrational. In order to obtain
rational solutions, we approximate the coordinates and increase the height of a
bin slightly. To the best of our knowledge, these are the first approximation
guarantees for these natural problems.

We highlight that the time complexity of the algorithm depends polynomially on
$\log 1/\gamma$, so the value of parameter $\gamma$ may be given as part of the
problem instance. For the case that a bin is enlarged by an arbitrary but
constant value $\varepsilon > 0$, we give a resource augmentation scheme in one
dimension, that is, for any constant $\varepsilon > 0$, we develop a
polynomial-time algorithm $\mathcal{A}_\varepsilon$ that returns a packing of
$\mathcal{C}$ into bins of unit width and height $1+\varepsilon$ with size
$\mathcal{A}_\varepsilon(\mathcal{C}) \le \OPT(\mathcal{C})$, where
$\OPT(\mathcal{C})$ is the optimal value of the problem without resource
augmentation.

Although the algorithm presented here is described only for the circle bin
packing, it can be seen as a \emph{unified framework}, and extends to a wide
range of different packing problems, such as the bin packing of ellipses,
regular polygons, and many others. Indeed, as an illustrative example, we show
how to generalize our results to the bin packing and to the strip packing of
$d$-dimensional $L_p$-norm spheres.

Our algorithm uses some techniques that have appeared in the literature in
several and new interesting ways. As usual in the packing of rectangular items,
our algorithm distinguishes between ``large'' and ``small'' items. However, this
distinction is dynamic, so that an item can be considered small in one
iteration, but large in a later one. There are two main novel ideas that differ
from the approaches for rectangle packing, and that are needed for the circle
packing. First, we reduce the packing of large circles to the problem of solving
a semi-algebraic system, that can be done with the aid of standard quantifier
elimination algorithms from algebra. This helped us to avoid the use of
combinatorial brute-force algorithms that are based on discretization, which
would depend exponentially on $\log 1/\gamma$. Second, to pack small items, we
cut the free space of previous packings into smaller sub-bins, and use the
algorithm for large items, recursively.

\subsection{Related works}

In the literature of approximation algorithms, the majority of the works
consider the packing of simple items into larger recipients, such as rectangular
bins and strips. Most of the works which give approximation guarantees are
interested in rectangular items or $d$-dimensional boxes. The packing problems
involving circles are mainly considered through heuristics, or numerical
methods, and, to our knowledge, there is no approximation algorithm for the
circle bin packing or for the circle strip packing problems.
On the practical side, packing problems have numerous applications, such as
packaging of boxes in containers, or cutting of material. An application of
circular packing is, for example, obtaining a maximal coverage of radio towers
in a geographical region~\cite{SzaboMCSCG07}.

The problem of finding the densest packing of equal circles into a square has
been largely investigated using many different optimization methods. For an
extensive book on this problem, and corresponding approaches,
see~\cite{SzaboMCSCG07}. The case of circles of different sizes is considered
in~\cite{GeorgeGL95}, where heuristics, such as genetic algorithms, are proposed
to pack circles into a rectangular container. The circle strip packing has been
considered using many approaches, such as branch-and-bound, metaheuristics, etc.
For a broad list of algorithms for the circle strip packing, and related circle
packing problems, see~\cite{HifiM09} and references therein.

For the problem of packing rectangles into rectangular bins,
Chung~\emph{et~al.}~\cite{ChungGJ82} presented a hybrid algorithm, called HFF,
combining a one-dimensional packing algorithm~(FFD) with a strip packing
algorithm (FFDH) to obtain an algorithm with asymptotic approximation
ratio~$2.125$. Caprara~\cite{Caprara02} proved that the asymptotic approximation
ratio of the algorithm HFF is~$2.077$, and also presented a better algorithm,
with approximation ratio $1.691$. Bansal~\emph{et~al.}~\cite{BansalCS10}
improved this ratio with a probabilistic algorithm, that can be derandomized,
with asymptotic approximation ratio that can be as close to $1.525$ as desired.
Recently, Bansal and Khan~\cite{BansalK14} gave an asymptotic
$1.405$-approximation. Considering non-asymptotic approximation
ratio, Harren and van Stee~\cite{HarrenS12} showed that HFF has ratio~$3$, and
presented an algorithm with ratio~$2$~\cite{HarrenS09}.
For the bin packing of $d$-dimensional cubes,
Kohayakawa~\emph{et~al.}~\cite{KohayakawaMRW04} showed an asymptotic ratio of
$2-(2/3)^d$, later improved to an APTAS by
Bansal~\emph{et~al.}~\cite{BansalCKS06}. For a survey on bin packing,
see~\cite{CoffmanCGMV13}.

The first approximation algorithm for the rectangle strip packing problem was
proposed by Baker~\emph{et~al.}~\cite{BakerCR80}. They presented the so called
BL (Bottom-Leftmost) algorithm, and showed that it has approximation ratio $3$.
For the special case when all items are squares, the approximation ratio of BL
algorithm is at most $2$. Coffman~\emph{et~al.}~\cite{CoffmanGJT80} presented
three algorithms, denoted by NFDH (Next Fit Decreasing Height), FFDH (First Fit
Decreasing Height), and SF (Split Fit), with asymptotic approximation ratios of
$2$, $1.7$, and $1.5$, respectively. They also showed that, when the items are
squares, FFDH has an asymptotic approximation ratio of $1.5$, and, when all items
have width at most $1/m$, the algorithms FFDH and SFFDH have asymptotic
approximation ratios of $(m+1)/m$ and $(m+2)/(m+1)$, respectively. The best
known ratio for the problem was obtained by Kenyon and R\'emila~\cite{KenyonR00}, who
presented an asymptotic approximation scheme.
Considering non-asymptotic approximation algorithms, Sleator~\cite{Sleator80}
presented a ratio $2.5$. This result was improved to $2$ independently by
Schiermeyer~\cite{Schiermeyer94} and Steinberg~\cite{Steinberg97}, then to
$1.9396$ by Harren and van Stee~\cite{HarrenS09}, and finally to
$5/3+\varepsilon$ by Harren~\emph{et~al.}~\cite{HarrenJPS11}.

For the 3-dimensional strip packing problem, whose items are boxes, Li and
Cheng~\cite{LiC90} were the first to present an asymptotic $3.25$-approximation
algorithm. Their algorithm was shown to have approximation ratio
$2.67$~\cite{MiyazawaW97}, $2+\varepsilon$~\cite{JansenS06} and finally
$1.69$~\cite{BansalHISZ13}. Bansal \emph{et~al.}~\cite{BansalCKS06} showed that
there is no asymptotic approximation scheme for the rectangle bin packing
problem, which implies that there is no APTAS for the 3-dimensional strip
packing problem. When the items are cubes, the first specialized algorithm was
shown to have asymptotic ratio of $2.6875$~\cite{LiC90}, and the best result is
an asymptotic approximation scheme due to
Bansal~\emph{et~al.}~\cite{BansalHISZ13}.

\paragraph{Organization}

The remainder of this paper is organized as follows. In
Section~\ref{sec:algebraic}, we discuss how to decide whether a set of~$n$
circles can all be packed in a rectangular bin using algebraic quantifier
elimination. In Section~\ref{sec:large}, we give approximation algorithms for
the case of ``large'' circles. In Section~\ref{sec:aptas}, we present APTAS's
for the circle bin packing problem, and for the circle strip packing problem. In
Section~\ref{sec:resource}, we give a resource augmentation scheme for the
circle bin packing. In Section~\ref{sec:generalizations}, we generalize the
results of Section~\ref{sec:aptas} to the case of multidimensional spheres, and
to the case of items and bins with different shapes. In
Section~\ref{sec:conclusion}, we give the final remarks.

\section{Circle Packing Through Algebraic Quantifier Elimination}\label{sec:algebraic}

Throughout this paper, we will consider instances for circle packing problems as
in the following definition.

\begin{definition}
A triple $(\mathcal{C}, w, h)$ is an instance for the circle packing problem if
$h, w \in \Q_+$, and $\mathcal{C}~=~\{1, \ldots, n\}$ is a set of circles, such
that each circle $i$ has radius $r_i \in \Q_+$, with $2r_i \le \min\{w,h\}$, for
$1 \le i \le n$.
\end{definition}

In this section, we consider the following \emph{circle packing decision
problem}. An instance for this problem is a triple~${(\mathcal{C}, w, h)}$, and
the objective is to decide whether the circles in $\mathcal{C}$ can be packed
into a bin of size $w \times h$, that is, a rectangle of width~$w$ and
height~$h$. In the case of a positive answer, a realization of the packing
should also be returned. More precisely, for each circle $i$, with $1 \le i \le
n$, we want to find a point $(x_i, y_i) \in \R_+^2$ that represents the center
of $i$ in a rectangle whose bottom-left and top-right corners correspond to
points $(0,0)$ and $(w,h)$, respectively.

The circle packing decision problem can be equivalently formulated as deciding
whether there are real numbers $x_i, y_i \in \R_+$, for $1 \le i \le n$, that
satisfy the constraints
\begin{eqnarray}
 & (x_i - x_j)^2 + (y_i - y_j)^2 \ge (r_i + r_j)^2 & \mbox{ for } 1 \le i < j \le n \label{eq1},\\
 & r_i \le x_i  \le w - r_i & \mbox{ for } 1 \le i \le n, \mbox{ and } \label{eq2}\\
 & r_i \le y_i  \le h - r_i & \mbox{ for } 1 \le i \le n.\label{eq3}
\end{eqnarray}
The set of constraints~\eqref{eq1} guarantees that no two circles intersect, and
the sets of constraints~\eqref{eq2} and~\eqref{eq3} ensure that each circle has
to be packed entirely in the rectangle that expands from the origin~$(0,0)$ to
the point $(w, h)$.

We observe that the set of solutions that satisfy~\eqref{eq1}-\eqref{eq3} is a
semi-algebraic set in the field of the real numbers. Thus, the circle packing
decision problem corresponds to deciding whether this semi-algebraic set is not
empty. We also can rewrite the constraints in~\eqref{eq1}-\eqref{eq3} as
$f_i(x_1, y_1, ..., x_n, y_n) \ge 0$, for $1 \le i \le s$, where $s$ is the
total number of constraints, and $f_i \in \Q[x_1, y_1, ..., x_n, y_n]$ is a
polynomial with rational coefficients. Then, the circle packing problem is
equivalent to deciding the truth of the formula
\begin{equation}
\textstyle
(\exists x_1) (\exists y_1) \dots (\exists x_n) (\exists y_n)
\bigwedge_{i = 0}^s f_i(x_1, y_1, ..., x_n, y_n) \ge 0.\label{eq-formula}
\end{equation}

We can use any algorithm for the more general quantifier elimination problem to
decide this formula. There are several algorithms for this problem, such as the
algorithm of \mbox{Tarski-Seidenberg} Theorem~\cite{Tarski51}, that is not elementary
recursive, or the Cylindrical Decomposition Algorithm~\cite{Collins75}, that is
doubly exponential in the number of variables. Since the formula corresponding
to the circle packing problem contains only one block of variables (of
existential quantifiers), we can use faster algorithms for the corresponding
algebraic existential problem, such as the algorithms of Grigor'ev and
Vorobjov~\cite{GrigorevV88}, or of Basu~\emph{et~al.}~\cite{BasuPR96}. For an
extensive list of algorithms for real algebraic geometry, see~\cite{BasuPR06}.

\paragraph{Sampling points of the solution}
Any of the algorithms above receiving formula~\eqref{eq-formula} as input will
return ``true'' if, and only if, the circles in $\mathcal{C}$ can be packed into
a bin of size~${w \times h}$. When the answer is ``true'', we are also
interested in a realization of such packing. The algorithms
in~\cite{GrigorevV88,BasuPR96} are based on critical points, that is, they also
return a finite set of points that meets every semi-algebraic connected
component of the semi-algebraic set. Thus, a realization of the packing can be
obtained by choosing one of such points (that is a point that corresponds to a
connected component where all polynomials $f_i$, $1 \le i \le s$, are
nonnegative).

Typically, a sample point is represented by a tuple $(f(x), g_0(x), \dots,
g_k(x))$ of $k+2$ univariate polynomials with coefficients in~$\Q$, where $k$ is
the number of variables, and the value of the $i$-th variable is $g_i(x)/g_0(x)$
evaluated at a real root of $f(x)$. Since a point in a semi-algebraic set
may potentially be irrational, we use the algorithm of Grigor'ev and
Vorobjov~\cite{GrigorevV88}, for which we have $g_0(x) = 1$, and thus an
approximate rational solution of arbitrary precision can be readily obtained. In
particular, the theorem given in~\cite{GrigorevV88} implies the following
result.

\begin{corollary}\label{theorem:polysystem}
Let $f_1, ..., f_s \in \Q[x_1,y_1, ..., x_n,y_n]$ be polynomials with
coefficients of bit-size at most~$m$, and maximum degree $2$. There is an
algorithm that decides the truth of formula~\eqref{eq-formula}, with running
time $m^{O(1)}s^{O(n^2)}$. In the case of affirmative answer, then the algorithm
also returns polynomials $f, g_1,h_1,\dots, g_n,h_n \in \Q[x]$ with coefficients
of bit-size at most $m^{O(1)}s^{O(n)}$, and maximum degree $s^{O(n)}$, such that
for a root $x$ of $f(x)$, the assignment $x_1 = g_1(x), y_1 = h_1(x),
..., x_n =g_n(x), y_n =h_n(x)$ is a realization of~\eqref{eq-formula}.
Moreover, for any rational $\alpha > 0$, we can obtain $x'_1, y'_1, \dots, x'_n,
y'_n \in \Q$, such that $|x'_i - x_i| \le \alpha$ and $|y'_i - y_i| \le \alpha$,
$1 \le i \le n$, with running time~$(\log (1/\alpha)m)^{O(1)}s^{O(n^2)}$.
\end{corollary}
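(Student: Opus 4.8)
The plan is to derive this as a direct specialization of the existential-theory-of-the-reals algorithm of Grigor'ev and Vorobjov~\cite{GrigorevV88}, tracking how their general complexity bounds collapse once we fix the degree and count the variables of our particular instance. Their theorem applies to a formula with a single existential block over $k$ variables, involving $s$ polynomials of degree at most $d$ with rational coefficients of bit-size at most $m$; the decision procedure runs in time $m^{O(1)}(sd)^{O(k^2)}$, and the sample points it produces are described by univariate polynomials whose degrees are bounded by $(sd)^{O(k)}$ and whose coefficients have bit-size $m^{O(1)}(sd)^{O(k)}$. First I would instantiate $k = 2n$ (the pairs $x_i,y_i$) and $d = 2$ (every $f_i$ in~\eqref{eq-formula} is quadratic), so that $(sd)^{O(k^2)} = s^{O(n^2)}$ and $(sd)^{O(k)} = s^{O(n)}$; this immediately yields the decision time $m^{O(1)}s^{O(n^2)}$ and the claimed degree and bit-size bounds on the output polynomials.

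Next I would extract the realization. The algorithm returns at least one sample point in every connected component of the solution set, each represented in the standard form $(f, g_0, g_1, \ldots)$, where the coordinates are the values $g_i(x)/g_0(x)$ at a real root $x$ of $f$. Here I would invoke the special feature of the Grigor'ev--Vorobjov construction noted in the excerpt, namely that one may take $g_0 \equiv 1$; relabeling the $2n$ coordinate polynomials as $g_1, h_1, \ldots, g_n, h_n$ then gives directly the stated parametrization $x_i = g_i(x)$, $y_i = h_i(x)$. Choosing a component on which all $f_i \ge 0$ (one exists precisely when the formula is true) produces a genuine realization of~\eqref{eq-formula}.

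The remaining and most delicate part is the rational approximation. I would isolate the relevant real root $x$ of the univariate polynomial $f$ inside a small interval and then refine it to an absolute precision $\beta$ with a standard root-refinement routine, whose cost is polynomial in $\log(1/\beta)$, in $\deg f = s^{O(n)}$, and in the bit-size of the coefficients of $f$. Evaluating $g_i$ and $h_i$ at the rational approximation $\tilde{x}$ then yields the candidate rationals $x_i', y_i'$. The hard part will be the error analysis: to guarantee $|x_i' - x_i| \le \alpha$ and $|y_i' - y_i| \le \alpha$, I must pick $\beta$ small enough to dominate the local Lipschitz behavior of $g_i, h_i$ near $x$, and this Lipschitz constant is controlled by their degree $s^{O(n)}$ and coefficient bit-size $m^{O(1)}s^{O(n)}$. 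A quantitative bound on how $\log(1/\beta)$ must grow with $\alpha$, $m$, and these polynomial parameters---together with a root-separation bound ensuring the correct root is tracked---is what governs the cost of the refinement, and I expect it to collapse to the claimed $(\log(1/\alpha)\,m)^{O(1)} s^{O(n^2)}$ once the polynomial parameters are substituted.
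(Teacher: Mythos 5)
Your proposal takes exactly the paper's route: the paper states this corollary as a direct instantiation of the Grigor'ev--Vorobjov theorem with $k = 2n$ variables and degree $d = 2$ (so $(sd)^{O(k^2)}$ and $(sd)^{O(k)}$ collapse to $s^{O(n^2)}$ and $s^{O(n)}$), relying on the feature $g_0(x) = 1$ of their sample-point representation so that a rational realization is obtained by approximating the real root of $f$ and evaluating the coordinate polynomials there. Your extra detail on root refinement, root separation, and the Lipschitz error analysis is sound and merely fleshes out what the paper compresses into the phrase ``an approximate rational solution of arbitrary precision can be readily obtained.''
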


\section{Approximate Bin Packing of Large Circles}\label{sec:large}

In this section, we consider the special case of the circle bin packing problem
in which the minimum radius of a circle is at least a constant. For this case,
the maximum number of circles that fit in a bin is constant, so we can use the
algorithm of Corollary~\ref{theorem:polysystem} to decide in constant time
whether a given set of circles can be packed into a bin. Since
Corollary~\ref{theorem:polysystem} only gives us rational solutions that are
close to real-valued packings, we will first transform an approximate packing
into a non-overlapping packing in an augmented bin.

Later, we obtain a PTAS for the special case of the circle bin packing problem with
large items.

\subsection{Transforming approximate packings}

We start with the next definition to deal with approximate circle bin packings.
In the following, we denote by $\dist(p,q)$ the Euclidean distance between two
points $p, q$ of the $2$-dimensional space.

\begin{definition}
Given an instance $(\mathcal{C}, w, h)$ for the circle packing problem, and a
number $\varepsilon \ge 0$, we say that a set of points $p_i = (x_i, y_i)$, for
${1 \leq i \leq n}$, is an \emph{$\mathbf{\varepsilon}$-packing} of
$\mathcal{C}$ into a rectangular bin of size~$w \times h$, if the following
hold:
\begin{eqnarray}
& \dist(p_i,p_j)    \geq r_i + r_j - \varepsilon \ge 0  & \mbox{ for } 1 \le i < j \le n,\\
& r_i - \varepsilon \leq x_i \leq w - r_i + \varepsilon & \mbox{ for } 1 \le i \le n,\mbox{ and}\\
& r_i - \varepsilon \leq y_i \leq h - r_i + \varepsilon & \mbox{ for } 1 \le i \le n.
\end{eqnarray}
\end{definition}

We adopt the following strategy to fix intersections of an approximate bin
packing: (a) first, we shift the $x$-coordinate of all circles that intersect
the left or right border until they are fully contained in the bin, (b) then, we
iteratively lift each circle in order of the $y$-coordinate by an appropriate
distance so that it does not intersect any of the circles considered in previous
iterations. First, the next lemma bounds the distance that one circle needs to
be raised to avoid intersection with lower circles in an $\varepsilon
h$-packing. Then, Lemma~\ref{lemma:augmentation} transforms an approximate
packing into a packing with bins of augmented height.

\begin{lemma}\label{lemma:rising}
Let $r_1, r_2, h, \varepsilon$ be positive numbers such that $\varepsilon h \leq
r_1 + r_2 \leq h$, and ${p_1 = (x_1,y_1)}$, ${p_2 = (x_2,y_2)}$ be points in
$\R^2$. If ${y_1 \geq y_2}$, ${\dist(p_1, p_2) \geq r_1 + r_2 - \varepsilon h}$,
and ${p_1' = (x_1, y_1+\sqrt{2\varepsilon} h)}$, then ${\dist(p_1',p_2) \geq r_1
+ r_2}$.
\end{lemma}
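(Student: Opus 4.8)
The plan is to reduce everything to a single algebraic inequality between squared distances, since the Euclidean distance itself is awkward to manipulate directly. First I would introduce the shorthand $R = r_1 + r_2$ together with the horizontal and vertical gaps $a = x_1 - x_2$ and $b = y_1 - y_2$, noting that the hypothesis $y_1 \ge y_2$ gives $b \ge 0$; this is the crucial sign that makes the lift help rather than hurt. Writing $\delta = \sqrt{2\varepsilon}\,h$ for the lift amount, the squared distance after raising $p_1$ expands cleanly as
\[
\dist(p_1', p_2)^2 = a^2 + (b + \delta)^2 = \dist(p_1, p_2)^2 + 2b\delta + \delta^2,
\]
so the lift increases the squared distance by exactly the controllable amount $2b\delta + \delta^2$.

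Next I would feed in the $\varepsilon h$-packing hypothesis $\dist(p_1, p_2) \ge R - \varepsilon h$. Here the assumption $\varepsilon h \le R$ is what I would use to guarantee that the right-hand side is nonnegative, so that squaring is valid and yields $\dist(p_1, p_2)^2 \ge (R - \varepsilon h)^2 = R^2 - 2R\varepsilon h + \varepsilon^2 h^2$. Substituting $\delta^2 = 2\varepsilon h^2$ and discarding the nonnegative terms $2b\delta$ and $\varepsilon^2 h^2$, it then suffices to check that $R^2 - 2R\varepsilon h + 2\varepsilon h^2 \ge R^2$, that is, that $2\varepsilon h(h - R) \ge 0$.

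This final inequality is where the second size hypothesis $r_1 + r_2 \le h$ enters: it forces $h - R \ge 0$, so the term is nonnegative and the chain of inequalities closes, giving $\dist(p_1', p_2)^2 \ge R^2$ and hence $\dist(p_1', p_2) \ge r_1 + r_2$, as required.

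I do not expect a genuine obstacle here; the lemma is essentially a calibration statement, and the only delicate point is explaining why the lift $\sqrt{2\varepsilon}\,h$ is precisely the right magnitude. The computation above makes this transparent: in the worst case $b = 0$ and $R = h$, the deficit to be covered is $2R\varepsilon h - \varepsilon^2 h^2 = 2\varepsilon h^2 - \varepsilon^2 h^2$, and the choice $\delta^2 = 2\varepsilon h^2$ is exactly large enough to absorb it, with a slack of $\varepsilon^2 h^2$. The factor of $2$ under the square root is thus dictated precisely by the $2R\varepsilon h$ cross term produced when the distance bound is squared.
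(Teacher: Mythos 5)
Your proof is correct and follows essentially the same route as the paper's: expand the lifted distance, drop the nonnegative cross term $2b\delta$, square the $\varepsilon h$-packing bound (valid since $\varepsilon h \le r_1+r_2$), and close with $2\varepsilon h(h - (r_1+r_2)) \ge 0$ from $r_1+r_2 \le h$. The only cosmetic difference is that you work with squared distances while the paper carries the square root through the chain of inequalities.
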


\begin{proof}
By direct calculation,
\begin{eqnarray*}
\dist(p_1', p_2)
  &=&    \sqrt{(x_1 - x_2)^2 + (y_1 + \sqrt{2\varepsilon} h - y_2)^2} \\
  &=&    \sqrt{(x_1 - x_2)^2 + (y_1 - y_2)^2 + 2\sqrt{2\varepsilon} h(y_1 - y_2) + 2\varepsilon h^2}\\
  &=&    \sqrt{\dist(p_1,p_2)^2 + 2\sqrt{2\varepsilon} h(y_1 - y_2) +2\varepsilon h^2}\\
  &\geq& \sqrt{(r_1 + r_2 - \varepsilon h)^2 + 2\varepsilon h^2}\\
  &=&    \sqrt{(r_1 + r_2)^2 - 2\varepsilon h(r_1 + r_2) + \varepsilon^2 h^2 + 2\varepsilon h^2}\\
  &\geq& r_1 + r_2,
\end{eqnarray*}
where the last inequality follows from $r_1  + r_2 \leq h$.
\myqed\end{proof}

\begin{lemma}\label{lemma:augmentation}
Given an instance $(\mathcal{C}, w, h)$ for the circle packing problem, and a
corresponding $\varepsilon h$-packing of $\mathcal{C}$ into a bin of size~$w
\times h$ for some $\varepsilon > 0$, we can find a packing of $\mathcal{C}$
into a bin of size~$w \times (1 + n\sqrt{6\varepsilon})h$ in linear time.
\end{lemma}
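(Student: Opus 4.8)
The plan is to realize the two-phase strategy announced just before Lemma~\ref{lemma:rising}: first repair the horizontal violations by a pure translation in $x$, and then repair the overlaps by sweeping the circles upward in order of height. Throughout I would relabel the circles so that $y_1 \le y_2 \le \dots \le y_n$, writing $p_i = (x_i, y_i)$ for the centers of the given $\varepsilon h$-packing. The target bin height $(1+n\sqrt{6\varepsilon})h$ already hints at the accounting: the $\sqrt{6\varepsilon} = \sqrt{2\cdot 3\varepsilon}$ suggests that the horizontal phase will degrade the overlap margin from $\varepsilon h$ to $3\varepsilon h$, after which Lemma~\ref{lemma:rising} applied with parameter $3\varepsilon$ calls for a lift of $\sqrt{2\cdot 3\varepsilon}\,h$ per circle.

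For the horizontal phase I would clamp each abscissa into its feasible interval, setting $x_i' := \min\{\max\{x_i, r_i\},\, w-r_i\}$ and leaving $y_i$ fixed; this interval is nonempty because $2r_i \le w$. Since the $\varepsilon h$-packing guarantees $r_i-\varepsilon h \le x_i \le w-r_i+\varepsilon h$, each center moves horizontally by at most $\varepsilon h$, so $|p_i'-p_i| \le \varepsilon h$, and the triangle inequality shows that each pairwise distance shrinks by at most $2\varepsilon h$. Hence $\dist(p_i',p_j') \ge (r_i+r_j-\varepsilon h) - 2\varepsilon h = r_i + r_j - 3\varepsilon h$ for every pair, while every circle is now horizontally contained, $r_i \le x_i' \le w-r_i$.

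For the vertical phase I would keep the abscissae $x_i'$ and raise circle $k$ to $\tilde y_k := y_k + (k-1)\sqrt{6\varepsilon}\,h + \varepsilon h$, the last term being a uniform shift that pushes the bottom circle back into the bin. For a pair $j<k$ the vertical gap then exceeds its Phase-1 value by $(k-j)\sqrt{6\varepsilon}\,h \ge \sqrt{6\varepsilon}\,h = \sqrt{2\cdot 3\varepsilon}\,h$, and $y_k \ge y_j$; this is exactly the hypothesis of Lemma~\ref{lemma:rising} with parameter $3\varepsilon$, in the mild variant where the lower circle is also allowed to rise, by less. The computation of that lemma, whose final inequality uses only $r_k+r_j \le h$, then yields $\dist \ge r_k + r_j$, so the pair is disjoint. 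For containment, the top of circle $k$ lies at $\tilde y_k + r_k \le (h-r_k+\varepsilon h) + (n-1)\sqrt{6\varepsilon}\,h + \varepsilon h + r_k = h + 2\varepsilon h + (n-1)\sqrt{6\varepsilon}\,h$, which is at most $(1+n\sqrt{6\varepsilon})h$ once $2\varepsilon \le \sqrt{6\varepsilon}$; its bottom sits at $\tilde y_k - r_k \ge (r_k-\varepsilon h) + \varepsilon h - r_k = 0$.

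The step I expect to be most delicate is precisely this use of Lemma~\ref{lemma:rising} with the enlarged margin $3\varepsilon$: the squaring performed in that lemma is valid only when $r_k + r_j - 3\varepsilon h \ge 0$, which can fail for a pair of very small circles. For such a pair I would argue directly, observing that the induced vertical separation alone is at least $\sqrt{6\varepsilon}\,h$, and that $\sqrt{6\varepsilon}\,h \ge 3\varepsilon h > r_k + r_j$ for $\varepsilon$ small, so disjointness holds without the lemma. The only other point to watch is the running time: the $y$-relabeling is a sort, after which both phases are a single linear sweep, so the construction is linear once the circles are presented in order of height.
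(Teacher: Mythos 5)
Your proposal is correct and takes essentially the same route as the paper's proof: clamp the $x$-coordinates into $[r_i,\,w-r_i]$ so that the triangle inequality degrades the margin from $\varepsilon h$ to $3\varepsilon h$, then sort by $y$-coordinate and lift circle $i$ by $(i-1)\sqrt{6\varepsilon}\,h$ plus a uniform $\varepsilon h$ shift, verifying disjointness via Lemma~\ref{lemma:rising} with parameter $3\varepsilon$. Your closing remark actually patches a point the paper glosses over: Lemma~\ref{lemma:rising} with margin $3\varepsilon$ formally requires $r_i+r_j\ge 3\varepsilon h$, which an $\varepsilon h$-packing only guarantees with $\varepsilon h$ in place of $3\varepsilon h$, and your direct bound $\dist(p''_i,p''_j)\ge \sqrt{6\varepsilon}\,h \ge r_i+r_j$ for such pairs closes that gap for every admissible $\varepsilon$ (use $\sqrt{6\varepsilon}\ge 3\varepsilon$ when $\varepsilon\le 2/3$, and $r_i+r_j\le h\le\sqrt{6\varepsilon}\,h$ otherwise), not just for ``$\varepsilon$ small.''
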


\begin{proof}
For $1 \leq i \leq n$, let $p_i = (x_i, y_i)$ be the center of circle $i$
corresponding to the $\varepsilon h$-packing. We start by modifying the given
$\varepsilon h$-packing to obtain a $3\varepsilon h$-packing into a bin of
size~$w \times (h+2\varepsilon h)$, with the additional property that no circle
intersects a border of such rectangular bin.
For each~$1 \leq i \leq n$, let $p_i' = (x_i', y_i')$ be the center of circle
$i$ in the modified packing. The $y$-coordinate is defined as $y_i' = y_i +
\varepsilon$, and the $x$-coordinate is defined as: $x_i' = x_i$ if $i$ does
not intersect the left or right border; $x_i' = r_i$ if $i$ intersects the
left border; and $x_i' = w - r_i$ if $i$ intersects the right border (notice
that $i$ cannot intersect both the left and the right borders, since $2 r_i
\le w$). Clearly, the definition of the centers guarantees that no circle
intersects any border of the augmented bin. To see that the set of points $p_i'$
is a $3\varepsilon h$-packing, just note that any two circles $i, j$ are
lifted by the same distance, so by the triangle inequality $\dist(p_i',p_j') \ge
\dist(p_i,p_j) - 2 \varepsilon h \ge r_i + r_j - 3 \varepsilon h$.

Now, we transform the $3\varepsilon h$-packing into a packing in a bin of
width~$w$ and height ${h + 2\varepsilon h+ (n-1)\sqrt{6\varepsilon}h \le (1 +
n\sqrt{6\varepsilon})h}$. We can assume, without loss of generality, that
circles $1, \ldots, n$ are ordered in nondecreasing order of the $y$-coordinate
of their centers $p_i'$. For every circle~$i$,~$1 \le i \le n$, define its new
center as $p''_i = (x_i', y_i' + (i-1)\sqrt{6\varepsilon}h)$. Notice that the
$y$-coordinate of the last circle is increased by $(n-1)\sqrt{6\varepsilon}h$,
and thus no circle intersects any of the borders. To complete the proof, we show
that no two circles $i$ and $j$, with $1 \leq j < i \leq n$, intersect:
\begin{align*}
\dist(p''_i, p''_j)
  &=    \dist((x_i',y_i' + (i-1)\sqrt{6\varepsilon}h), (x_j', y_j' + (j-1)\sqrt{6\varepsilon}h)) \\
  &=    \dist((x_i',y_i' + (i-j)\sqrt{6\varepsilon}h),(x_j', y_j'))\\
  &\geq \dist((x_i',y_i' + \sqrt{6\varepsilon}h),(x_j', y_j')) \geq r_i + r_j,
\end{align*}
where the last inequality follows from Lemma~\ref{lemma:rising}, and the fact
that we had a $3\varepsilon h$-packing.
\myqed\end{proof}

\subsection{A PTAS for large items}

The optimal value for the circle bin packing problem is defined next.

\begin{definition}
Given an instance $(\mathcal{C}, w, h)$ for the circle bin packing, we denote by
$\OPT_{w\times h}(\mathcal{C})$ the minimum number of rectangular bins of
size~$w \times h$ that are necessary to pack~$\mathcal{C}$.
\end{definition}

We will obtain an approximation algorithm for the bin packing of large circles,
that is, assuming that the radius of each circle is greater than a given
constant. In this case, we can use an algorithm similar to that by Fernandez De La
Vega and Lueker~\cite{FernandezdelaVegaL81} to obtain a bin packing of large
circles. In this case, the maximum number of circles that fit in a bin is at
most a constant,~$M$, so we can partition the set of circles into a small number
of groups with approximate sizes, and enumerate all patterns of groups with no
more than $M$ circles. Then, we may apply the algorithm of
Corollary~\ref{theorem:polysystem} to list which patterns correspond to feasible
packings, and use integer programming in fixed dimension to find out how many
bins of each pattern are necessary to cover all circles.

We split the algorithm in two parts. First, Lemma~\ref{lemma:fixed_radii}
considers the special case where the number of different radii is bounded by a
constant, and obtain a bin packing using at most the optimal number of bins,
$\OPT_{w\times h}(\mathcal{C})$. Then, Theorem~\ref{theorem:packing_large}
reduces the general case to the case of bounded number of radii, and obtain a
bin packing using at most an additional $\varepsilon$ fraction of~$\OPT_{w\times
h}(\mathcal{C})$.
In the following, we will denote the area of the circle of radius $r$ by
$\Area(r)$.

\begin{lemma}\label{lemma:fixed_radii}
Let $(\mathcal{C}, w, h)$ be an instance of the circle bin packing, such that
$w, h \in O(1)$, $\min_{1\le i\le n} r_i \ge \delta$, and $|\{r_1, \dots, r_n\}|
\le K$, for constants $K$ and $\delta$.
For any given number $\gamma > 0$, we can obtain a packing of $\mathcal{C}$ into
at most $\OPT_{w\times h}(\mathcal{C})$ rectangular bins of size~$w \times
(1+\gamma)h$ in polynomial time.
\end{lemma}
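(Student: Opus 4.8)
The plan is to follow the Fernandez de la Vega--Lueker template adapted to circles: enumerate all feasible ``bin patterns,'' then solve a fixed-dimension integer program to cover all circles with the fewest bins. First I would bound the number of circles that fit in one bin. Since every radius is at least $\delta$, each circle has area at least $\pi \delta^2$, so at most $M := \lfloor wh/(\pi\delta^2)\rfloor$ circles can be placed in a $w \times h$ bin; as $w,h,\delta$ are constants, $M$ is a constant. A \emph{pattern} is then a vector $(n_1,\dots,n_K)$ with $\sum_j n_j \le M$, recording how many circles of each of the $K$ distinct radii a bin contains, so there are at most $(M+1)^K$ patterns, again a constant.

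Second, for each pattern I would decide feasibility in the (real) $w \times h$ bin using Corollary~\ref{theorem:polysystem}. Because a pattern involves at most $M$ circles, the formula~\eqref{eq-formula} has $O(1)$ variables, so each test runs in constant time. For the feasible patterns I would extract a rational realization of precision $\alpha$, chosen so that the resulting placement is an $\varepsilon h$-packing with $M\sqrt{6\varepsilon} \le \gamma$: perturbing the exact centers by at most $\alpha$ changes pairwise distances by at most $2\sqrt2\,\alpha$, so taking $\alpha \le \varepsilon h/(2\sqrt2)$ with $\varepsilon \le \gamma^2/(6M^2)$ suffices. Applying Lemma~\ref{lemma:augmentation} then turns each such $\varepsilon h$-packing into an exact, non-overlapping packing in a bin of height $(1+M\sqrt{6\varepsilon})h \le (1+\gamma)h$. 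Since the running time of Corollary~\ref{theorem:polysystem} is polynomial in $\log(1/\alpha)$, this whole step is polynomial in $\log(1/\gamma)$.

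Third, with the constant-size list of feasible patterns in hand, I would write the covering integer program with one variable $z_t \ge 0$ per feasible pattern $t$, constraints $\sum_t a_{tj}\, z_t \ge m_j$ for each radius type $j$ (where $a_{tj}$ is the number of radius-$r_j$ circles in pattern $t$ and $m_j$ the number of such circles in $\mathcal{C}$), and objective $\min \sum_t z_t$. This program has a constant number of variables and constraints, so Lenstra's algorithm for integer programming in fixed dimension solves it in polynomial time. Its optimum equals $\OPT_{w\times h}(\mathcal{C})$: any optimal packing decomposes into feasible patterns and gives a feasible solution of that value, while conversely any feasible solution yields a packing into that many $w\times h$ bins (the surplus slots forced by the $\ge$ constraints are simply left empty, and dropping circles from a feasible pattern keeps it feasible). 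Realizing the optimal solution with the rational, augmented placements from the previous step produces the desired packing into $\OPT_{w\times h}(\mathcal{C})$ bins of size $w\times(1+\gamma)h$.

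The main obstacle I anticipate is the interplay between the real-valued feasibility test and the requirement of an exact rational packing: the integer program must be built from feasibility over the \emph{true} $w \times h$ bin (so that its optimum matches $\OPT_{w\times h}(\mathcal{C})$), yet the placements we can actually output are only rational approximations. Calibrating $\alpha$ and $\varepsilon$ against the constant $M$ so that Lemma~\ref{lemma:augmentation} absorbs all rounding error into the $\gamma$ height augmentation, while keeping the dependence on $\gamma$ logarithmic, is the delicate part; everything else is bookkeeping over constantly many patterns.
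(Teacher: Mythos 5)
Your proposal is correct and follows essentially the same route as the paper's proof: bound the circles per bin by a constant $M$ via the area argument, enumerate the constantly many configurations, decide feasibility of each in the true $w \times h$ bin with Corollary~\ref{theorem:polysystem}, convert the rational approximate realization into an exact packing in the $(1+\gamma)$-augmented bin via Lemma~\ref{lemma:augmentation} with $\varepsilon = \Theta(\gamma^2/M^2)$, and finish with a fixed-dimension covering integer program solved by Lenstra's algorithm. Your calibration of $\alpha$ (absorbing the coordinatewise rounding into an $\varepsilon h$-packing) and your explicit argument that the IP optimum equals $\OPT_{w\times h}(\mathcal{C})$ match the paper's reasoning, with only immaterial differences in constants.
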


\begin{proof}
Notice that a bin of size~$w \times h$ may contain at most $M = \lceil
wh/\Area(\delta) \rceil$ circles of~$\mathcal{C}$. Consider an ordering of
distinct radii $\bar{r}_1, \dots, \bar{r}_K$. We say that a vector of
nonnegative integers~${c = (c_1, \ldots, c_K)}$ with $\sum_{i =  1}^{K} c_i \leq
M$ is a configuration, and, for each $1 \le i \le K$, that $c_i$ is the number
of circles with radius $\bar{r}_i$ of $c$. A configuration $c$ is said to be
feasible if there is a packing into a bin of size~$w \times h$ containing all
circles of $c$.

Let $\varepsilon = \gamma^2/(6M^2)$, and let $\alpha = \varepsilon h/4$. We
enumerate each of the (at most $M^K$) configurations~$c$, and use the algorithm
of Corollary~\ref{theorem:polysystem} to decide whether $c$ is feasible. For
each feasible $c$, we also obtain an approximate packing, such that each circle
of $c$ has rational center $p'$ at distance at most $2 \alpha$ of the center $p$
in the packing realization. Then, for any two circles, with centers $p_1$ and
$p_2$ in the packing realization, and approximate rational centers $p_1'$ and
$p_2'$, we have $\dist(p_1,p_2) - \dist(p_1',p_2') \le 4 \alpha = \varepsilon
h$. Thus, the obtained approximate packing is an $\varepsilon h$-packing of
circles in $c$. For each feasible packing, we use
Lemma~\ref{lemma:augmentation}, and obtain a packing of the circles of $c$ into
a bin of width~$w$ and height $(1 + M \sqrt{6 \varepsilon})h = (1 + \gamma) h$.
Let $\mathcal{X}$ be the set of feasible configurations, and let $n_i$, for $1
\le i \le K$, denote the number of circles in $\mathcal{C}$ with radius
$\bar{r}_i$. Solving the following integer program, we obtain a bin packing of
size $\OPT_{w\times h}(\mathcal{C})$ that contains $x_c$ bins of configuration
$c$ for each $c \in \mathcal{X}$.
\begin{eqnarray*}
\mbox{minimize}   & \textstyle\sum_{c \in \mathcal{X}} x_c \\
\mbox{subject to} & \textstyle\sum_{c \in \mathcal{X}} c_i x_c \geq  n_i, & 1 \le i \le K\\
                  &                                        x_c \in \Z_+   & c \in \mathcal{X}
\end{eqnarray*}

Since this program has a constant number of variables, and bit-size
$O(\log n)$, it can be solved in $O(\log n)$ using fixed dimension
integer programming~\cite{Lenstra83,Eisenbrand03}.
\myqed\end{proof}

Now, the following theorem gives a $(1+\varepsilon)$-approximation for the
particular case in which circles' radii are greater than a constant.

\begin{theorem}\label{theorem:packing_large}
Let $(\mathcal{C}, w, h)$ be an instance of the circle bin packing, such that
$w, h \in O(1)$, and $\min_{1\le i\le n} r_i \ge \delta$, for some constant $\delta$.
For any given constant $\varepsilon > 0$, and number $\gamma > 0$, there is a
polynomial-time algorithm that packs $\mathcal{C}$ into at most
$(1+\varepsilon)\OPT_{w\times h}(\mathcal{C})$ rectangular bins of size~$w
\times (1+\gamma)h$.
\end{theorem}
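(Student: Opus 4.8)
The plan is to reduce the general case to Lemma~\ref{lemma:fixed_radii} by applying a linear-grouping argument in the style of Fernandez de la Vega and Lueker, so that the number of distinct radii becomes bounded by a constant at the cost of only a $(1+\varepsilon)$ factor in the number of bins. First I would observe that since every radius satisfies $r_i \ge \delta$ and $2r_i \le \min\{w,h\} \le O(1)$, a bin of size $w \times h$ holds at most $M = \lceil wh/\Area(\delta)\rceil$ circles, a constant; hence $\OPT_{w\times h}(\mathcal{C}) \ge n/M$, which is the key inequality letting me charge additive grouping errors against the optimum.

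The main construction is the rounding. I would sort the $n$ circles in nonincreasing order of radius and partition them into $K = \lceil 1/\varepsilon^2\rceil$ (or a similarly chosen constant number of) consecutive groups of roughly equal cardinality $\lceil n/K\rceil$. Within each group I round every radius \emph{up} to the largest radius occurring in that group, producing an instance $\mathcal{C}'$ with at most $K$ distinct radii and with $r'_i \ge r_i$ for every $i$. The standard grouping inequality then gives $\OPT_{w\times h}(\mathcal{C}') \le \OPT_{w\times h}(\mathcal{C}) + \lceil n/K\rceil$, because any packing of $\mathcal{C}$ can be converted into a packing of the rounded-down instance (shifting each group's radii down to the previous group's values), and only the single largest group, of size $\lceil n/K\rceil$, is left unaccounted for and is placed in its own extra bins. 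Choosing $K$ so that $\lceil n/K\rceil \le \varepsilon n/M \le \varepsilon\,\OPT_{w\times h}(\mathcal{C})$ — which is possible since $n/M \le \OPT_{w\times h}(\mathcal{C})$ and $K$ is a constant once $\varepsilon$ is fixed — bounds the overhead by an $\varepsilon$ fraction of the optimum.

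Next I would invoke Lemma~\ref{lemma:fixed_radii} on $\mathcal{C}'$, which has $w,h \in O(1)$, minimum radius at least $\delta$, and at most $K$ distinct radii, all constants; this yields a packing of $\mathcal{C}'$ into at most $\OPT_{w\times h}(\mathcal{C}') \le (1+\varepsilon)\OPT_{w\times h}(\mathcal{C})$ rectangular bins of size $w \times (1+\gamma)h$ in polynomial time. Finally, since each rounded radius $r'_i$ is at least the true radius $r_i$, any valid packing of the enlarged circles $\mathcal{C}'$ is immediately a valid packing of the original circles $\mathcal{C}$ (shrinking a circle in place preserves both non-overlap and containment). This gives a packing of $\mathcal{C}$ into at most $(1+\varepsilon)\OPT_{w\times h}(\mathcal{C})$ bins of size $w \times (1+\gamma)h$, as required.

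The step I expect to require the most care is the grouping accounting: I must confirm that rounding radii \emph{up} (so that the output packing is feasible for the original instance) is compatible with charging the error against $\OPT$, which naturally bounds $\OPT$ of the \emph{rounded-down} instance. The clean way around this is the usual two-sided view — compare $\mathcal{C}'$ against the instance obtained by rounding each group \emph{down} to guarantee $\OPT_{w\times h}(\mathcal{C}') \le \OPT_{w\times h}(\mathcal{C}) + \lceil n/K\rceil$ via the group-shift argument — and then use $n/M \le \OPT_{w\times h}(\mathcal{C})$ to absorb the $\lceil n/K\rceil$ term. Verifying that the chosen constant $K$ depends only on $\varepsilon$ (and not on $n$), so that Lemma~\ref{lemma:fixed_radii} runs in polynomial time, completes the argument.
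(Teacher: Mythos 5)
Your proposal follows essentially the same route as the paper's proof: linear grouping in the style of Fernandez de la Vega and Lueker, reducing to Lemma~\ref{lemma:fixed_radii}. The only structural difference is the direction of the rounding: the paper rounds each group \emph{down} to its smallest radius, so that $\OPT_{w\times h}(\mathcal{C}')\le\OPT_{w\times h}(\mathcal{C})$ is immediate, and performs the group shift when converting the packing of $\mathcal{C}'$ back into a packing of $\mathcal{C}$ (paying one new bin per circle of the first group), whereas you round \emph{up}, pay the shift inside the bound $\OPT_{w\times h}(\mathcal{C}')\le\OPT_{w\times h}(\mathcal{C})+\lceil n/K\rceil$, and recover a packing of $\mathcal{C}$ by shrinking circles in place. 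These are mirror images of the same argument, and your accounting is sound in the large-$n$ regime.

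There is, however, one genuine gap: the claim that a constant $K$ can be chosen so that $\lceil n/K\rceil\le\varepsilon n/M$ holds is false when $n$ is small. Take $n=1$: then $\lceil n/K\rceil=1$ while $\varepsilon n/M=\varepsilon/M<1$; more generally the inequality fails whenever $n$ is below a threshold determined by $K$, $M$ and $\varepsilon$. This cannot be waved away, because the theorem carries no additive $O(1)$ term: if $\OPT_{w\times h}(\mathcal{C})=1$ and $\varepsilon<1$, the algorithm must use exactly one bin. The paper closes precisely this hole with an explicit case distinction: if $n\le K$, the number of distinct radii is already at most $K$, so Lemma~\ref{lemma:fixed_radii} is applied directly to $\mathcal{C}$ and yields $\OPT_{w\times h}(\mathcal{C})$ bins with no grouping loss at all. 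The same one-line fix works in your formulation (when $n\le K$ every group is a singleton, the rounding is the identity, and no extra bins arise), but it must be stated, since the inequality you invoke is unavailable in that regime. A second, smaller point: your default $K=\lceil 1/\varepsilon^2\rceil$ does not suffice in general, since the required inequality forces $K$ to scale like $M/\varepsilon$ (the paper takes $K=\lceil 2/(\varepsilon\Area(\delta))\rceil$); this is still a constant, but the dependence on $\delta$ through $M$ should be made explicit so that $\lceil n/K\rceil\le\varepsilon n/M$ can hold for all sufficiently large $n$.
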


\begin{proof}
First, let $K = \lceil 2/(\varepsilon \Area(\delta))\rceil$.
If $n \le K$, then we use the algorithm of Lemma~\ref{lemma:fixed_radii} on
instance $\mathcal{C}$, and obtain a packing of $\mathcal{C}$ into at most
$\OPT_{w\times h}(\mathcal{C})$ bins of size~$w \times (1+\gamma)h$. If $n > K$,
then we let $Q = \lfloor \varepsilon n \Area(\delta) \rfloor > 1$, and execute
the following steps:
\begin{enumerate}
  \item sort the circles in non-increasing order of radius;
  \item partition $\mathcal{C}$ in groups of up to $Q$ consecutive circles greedily;
  \item create an instance $\mathcal{C}'$ by changing the radius of each circle
        in $\mathcal{C}$ to the smallest radius of its group;
  \item use the algorithm of Lemma~\ref{lemma:fixed_radii} to find a
        packing of $\mathcal{C'}$.
\end{enumerate}
We have obtained a packing $P'$ of $\mathcal{C'}$ of size $\OPT_{w\times
h}(\mathcal{C'})$ into rectangular bins of size~$w \times (1+\gamma)h$. Notice
that, with exception of circles in the first group, every circle in
$\mathcal{C}$ can be mapped to a circle in $\mathcal{C'}$ of non-smaller radius,
thus we can obtain a packing for $\mathcal{C}$ with the following steps: pack
each circle in the first group in a new bin; for every other circle, pack at the
position of the mapped circle in $P'$. Thus, we have obtained a packing of
$\mathcal{C}$ that uses at most $\OPT_{w\times h}(\mathcal{C'}) + Q \le
\OPT_{w\times h}(\mathcal{C}) + \varepsilon n \Area(\delta) \le
(1+\varepsilon)\OPT_{w\times h}(\mathcal{C})$ bins.

If $n \le K$, then the number of different radii in $\mathcal{C}$ is at most
$K$, otherwise, if $n > K$, the number of different radii in $C'$ is at most
$\lceil \frac{n}{Q} \rceil = \lceil \frac{n}{\lfloor \varepsilon n \Area(\delta)
\rfloor} \rceil \le \lceil \frac{2n}{\varepsilon n \Area(\delta)} \rceil = K$.
In either case, using Lemma~\ref{lemma:fixed_radii}, we can conclude that the
algorithm is polynomial.
\myqed\end{proof}

\section{An Asymptotic PTAS for Circle Bin Packing}\label{sec:aptas}

In this section, we consider the bin packing problem of circles of any size. The
main idea works as follows. First, we find a subset of circles with
``intermediate'' sizes, such that, after the removal of this set, we can divide
the remaining circles into a sequence of groups where the smallest circle in a
group $i$ is much larger than the largest circle in the next group $i+1$, and so
on. The set of removed circles is chosen so that its area is small (\emph{i.e.},
it is bounded by an $\varepsilon$ fraction of the overall circles' area), and
thus we can pack them into separate bins using any constant-ratio approximation
algorithm. To pack the sequence of groups, we do the following: we pack the
first group (of ``large'' circles) using the algorithm from
Section~\ref{sec:large}, and obtain a packing into bins of the original size;
then, we consider sub-bins with a small fraction of the original size, and solve
the problem of packing the remaining groups (of ``small'' circles) in such
sub-bins recursively. To obtain a solution of the original problem, we place
each obtained small bin into the free space of the packing obtained for large
circles.

The key idea to obtain an APTAS is that, if the size of the small bins is much
smaller than the size of large circles, then the waste of space in the packing
of  the large circles is proportional to a fraction of large circles' area.
Moreover, if the size of a such small bin is also much larger than the size of
small circles, then restricting the packing of small circles to small bins does
not increase much the cost of a solution.

\subsection{The algorithm}

In the following, if $B$ is a circle or rectangle, then we denote by $\Area(B)$
the area of $B$. Also, if~$D$ is a set, then $\Area(D) = \sum_{B \in D} \Area(B)$.
We give first a formal description in Algorithm~1; an informal description is
given thereafter.

\bigskip\noindent{}\textbf{Algorithm 1} \emph{Circle bin packing algorithm}

Consider the parameters $r$ and $\gamma$, such that $r$ is a positive integer
multiple of $3$, and $\gamma > 0$. The algorithm receives an instance
$(\mathcal{C}, w, h)$ for the circle bin packing, and returns a packing
of~$\mathcal{C}$ into a set of bins of size~$w \times  (1+\gamma)h$. It is
assumed that $w \le h$, and $hr/w$ is an integer.

\begin{enumerate}\itemsep=0pt
  \item\label{item:eps} Let $\varepsilon = 1/r$;

  \item\label{item:setsG} For every integer $i \ge 0$, define $G_i = \{j \in
  \mathcal{C} : \varepsilon^{2i}w \ge 2 r_j > \varepsilon^{2(i+1)}w \}$;

  \item\label{item:setsH} For each $0 \le j < r$, define $H_j = \{ \ell \in
  G_i : i \equiv j \pmod{r}\}$;

  \item\label{item:ht} Find an integer $t$ such that $\Area(H_t) \le \varepsilon
  \Area(\mathcal{C})$;

  \item\label{item:nfdh} Place each circle of $H_t$ into its bounding box,
  and pack the boxes in separate bins of size \mbox{$w \times (1+\gamma)h$} using NFDH
  strategy~\cite{MeirM68};

  \item\label{item:setsS} For every integer $j \ge 0$, define $ S_j =
  \bigcup_{i\; = \; t+(j-1)r + 1}^{t + jr - 1} G_i $; \quad
    (\emph{see Figure~\ref{fig:partition}})

  \item\label{item:sizes} Define $w_0 = w$, $h_0 = h$, and $w_j = h_j =
  \varepsilon^{2(t+(j-1)r)+1} w$ for every $j \ge 1$;

  \item\label{item:fzero} Let $F_0 = \emptyset$;

  \item\label{item:stepj} For every $j \ge 0$:

  \begin{enumerate}
    \item\label{item:stepj-teo} Use the algorithm of
    Theorem~\ref{theorem:packing_large} to obtain a packing of circles $S_j$
    into bins of size~$w_j \times (1+\gamma)h_j$. Let $P_j$ be the set of
    such bins;

    \item Let $A_j$ be a set of $\max\{|P_j| - |F_j|, 0\}$ new empty bins of
    size~$w_j \times (1+\gamma)h_j$;

    \item Place each bin of $P_j$ into one distinct bin of $F_j \cup A_j$;

    \item Set $F_{j+1} = \emptyset$, and $U_j = \emptyset$; \quad
    (\emph{$U_j$ is used only in the analysis})

    \item For each bin $B$ of $F_j \cup A_j$:

    \begin{itemize}
      \item Let $V$ be the set of bins corresponding to the cells of the
      grid with cells of size~$w_{j+1} \times (1+\gamma)h_{j+1}$ over $B$;

      \item Add to $F_{j+1}$ all bins in $V$ that do not intersect any
      circle of $S_j$.

      \item Add to $U_j$ all bins in $V$ that intersect a circle of $S_j$.
    \end{itemize}

    \item If all circles are packed, go to step~\ref{item:fim-alg}.
  \end{enumerate}

  \item\label{item:fim-alg} Place the bins $A_0, A_1, \dots$ into the minimum
  number of bins of size~$w \times (1+\gamma)h$.
\end{enumerate}

\begin{figure}
\begin{equation*}
\def\arraystretch{1.2}\arraycolsep=2pt
\begin{array}{|rrrcl|c|}
\hline
                 &                &             &          &             & H_t\!:\\
S_0\mathclose{:} &                & \;\;\; G_0, &   \dots  & G_{t-1}     & G_t\\
S_1\mathclose{:} & G_{t+1},       & G_{t+2},    &   \dots  & G_{t+r-1}   & G_{t+r}\\
S_2\mathclose{:} & G_{t+r+1},     & G_{t+r+2},  &   \dots  & G_{t+2r-1}  & G_{t+2r}\\
S_j\mathclose{:} & G_{t+(j-1)r+1},& G_{t+jr+2}, &   \dots  & G_{t+jr-1}  & G_{t+jr}\\
                 &                & \vdots      &          &             & \;\;\vdots \\
\end{array}
\end{equation*}
\caption{Partitioning of the set of circles.\label{fig:partition}}
\end{figure}

In step~\ref{item:setsG}, we group the set of circles into disjoint subsets,
$G_i$, with exponentially decreasing radii. Then, in step~\ref{item:setsH},
these groups are joined into disjoint bunches, $H_i$, according to the remainder
of the division of the group index by $r$. This induces the creation of $r$
bunches, so in step~\ref{item:ht} we may find a bunch $H_t$ whose area
corresponds to a fraction of at most $1/r = \varepsilon$ of the overall circles'
area. In step~\ref{item:nfdh}, we pack the circles of $H_t$ using the NFDH
strategy, that is, we pack the circles $H_t$ greedily in ``shelves'' and in
decreasing order of radii, as in Figure~\ref{fig:nfdh}.

\begin{figure}[b]
\centering
\begin{tikzpicture}[x= 130pt, y=130pt]
\newcommand{\squarewithcircle}[3]{
  \draw[fill=black!10] (#1 + #3 / 2.0, #2 + #3 / 2.0) circle (#3 / 2.0);
  \draw (#1,#2) rectangle (#1 + #3, #2 + #3);
}
\draw[thick] (0, 0.8) -- (0,0) -- (1,0) -- (1,0.8);

\pgfmathparse{0}
\global\let\x\pgfmathresult
\pgfmathsetmacro{\y}{0}

\foreach \s in {0.35, 0.32, 0.28}{
  \squarewithcircle{\x}{\y}{\s}
  \pgfmathparse{\x +\s}
  \global\let\x\pgfmathresult
}

\pgfmathparse{0}
\global\let\x\pgfmathresult
\pgfmathsetmacro{\y}{0.35}
\foreach \s in {0.22, 0.20, 0.18, 0.16, 0.14}{
  \squarewithcircle{\x}{\y}{\s}
  \pgfmathparse{\x +\s}
  \global\let\x\pgfmathresult
}

\pgfmathparse{0}
\global\let\x\pgfmathresult
\pgfmathsetmacro{\y}{0.57}
\foreach \s in {0.12, 0.12, 0.11, 0.11, 0.10, 0.09, 0.09, 0.08, 0.07, 0.06}{
  \squarewithcircle{\x}{\y}{\s}
  \pgfmathparse{\x +\s}
  \global\let\x\pgfmathresult
}
\end{tikzpicture}
\begin{minipage}[t]{0.8\linewidth}
We pack each circle in the corresponding bounding box, and then use the NFDH
strategy for rectangle bin packing.
\end{minipage}
\caption{Packing medium circles of $H_t$.\label{fig:nfdh}}
\end{figure}

In step~\ref{item:setsS}, the remaining unpacked circles are joined in maximal
sequences, $S_j$, of consecutive groups. This leads to a partition of
$\mathcal{C}$ into sets $H_t, S_0, S_1, \dots$ The considered groups and sets
can be distributed over a table, as depicted in Figure~\ref{fig:partition},
where each row corresponds to a sequence~$S_j$, and the last column corresponds
to bunch $H_t$.

For each~$j$, we consider the subproblem of packing the circles of $S_j$ into
bins of size $w_j \times h_j$, as defined in step~\ref{item:sizes}.
This definition guarantees that the circles in $S_j$ are large when compared to
bins of size $w_j \times h_j$, and so we can use the algorithm for large circles
of Theorem~\ref{theorem:packing_large}. Also, the bins considered in the next
iteration have size $w_{j+1}\times h_{j+1}$, and are much smaller than the
circles of $S_j$. Finally, since the circles in $H_t$ are considered separately,
each remaining unpacked circle (of $S_{j+1}, S_{j+2}, \dots$) fits in a bin of
size $w_{j+1}\times h_{j+1}$.

In step~\ref{item:stepj}, we solve each subproblem iteratively. In each
iteration $j\ge 0$, the algorithm keeps a set $F_j$ of free bins of size $w_j
\times (1+\gamma)h_j$ obtained from previous iterations. We obtain a packing of
circles of $S_j$ into a set of bins $P_j$. Then, we place such bins into the
free space of $F_j$, or into additional bins $A_j$, if necessary. The set of
sub-bins of $F_j \cup A_j$ of size $w_{j+1} \times (1+\gamma)h_{j+1}$ that
intersect circles of $S_j$ are included in the set of used sub-bins $U_j$, and
the remaining free sub-bins are saved in $F_{j+1}$ for the next iteration. The
algorithm finishes when all circles are considered, and the created bins $A_0,
A_1, ...$ are combined into bins of size $w \times (1+\gamma) h$ in
step~\ref{item:fim-alg}.

\medskip

We remark that the assumption that $hr/w$ is integer is without loss of
generality. If this is not the case, then, after the first iteration of the
algorithm, we could extend the height $h_0$ of bins in $P_0$ to the next integer
multiple of $w/r$. The final solution would have the property that circles
packed in the extended area are completely packed in the top cells of size $w_1
\times h_1$. Thus, one can easily modify such a solution by moving these cells
to at most $\lceil 1/r\, |P_0| \rceil \le O(\varepsilon)\OPT(\mathcal{C}) + 1$ new
bins of size $w \times h$. For the sake of clarity, from now on, we assume that
$hr/w$ is integer.

\subsection{Analysis}

Now, we analyze the algorithm. The idea of the analysis is that the generated
solution is almost optimal, except that it is restricted in certain ways. The
main strategy will be modifying an optimal solution, so that it respects the
same kind of constraints of a solution given by the algorithm. Also, the cost
increase due to these modifications will be bounded by a small fraction of the
optimal cost. In what follows, we assume that we have run Algorithm~1, giving as
input an instance for the circle bin packing $(\mathcal{C}, w, h)$, and
parameters $r\in \Z_+$, $\gamma > 0$.

\medskip

First, we give some definitions that will be useful for the analysis. In each
iteration, the algorithm considers small sub-bins, that are obtained by creating
grids of elements with size~${w_j \times h_j}$, for some $j \ge 0$, over the
larger bins of size $w \times h$. This notion is formalized in the following.

\begin{definition}
Consider a bin $B$ of size~$w_B \times h_B$. We say that $B$ \emph{respects} $w
\times h$ if $w_B = w_j$, and $h_B = h_j$ for some $j \ge 0$.
Also, if $D$ is a set of bins, then we say that $D$ respects $w \times h$ if
every $B \in D$ respects $w \times h$.
\end{definition}

Next, the function $\Gr_j(B)$ will denote the set of elements in the grid of
sub-bins of size $w_j \times h_j$ over one bin $B$.

\begin{definition}
Let $j \ge 0$. For a given bin $B$, we denote by $\Gr_j(B)$ the set of elements
of the grid of sub-bins of size~$w_j \times h_j$ over $B$. Also, if $D$ is a set
of bins, then $\Gr_j(D) = \cup_{B \in D} \Gr_j(B)$.
\end{definition}

We also need a definition to account the estimate number of bins of original
size $w \times h$, that are needed to pack a set of sub-bins or circles.

\begin{definition}
If $B$ is a rectangle or circle, then $\Ar(B) = \Area(B) /(wh)$. Also, if $D$
is a set of rectangles or circles, then $\Ar(D) = \sum_{B \in D} \Ar(B)$.
\end{definition}

We remark that, if a set of bins $D$ respects $w \times h$, then bins of $D$ can
be easily combined into bins of size~$w \times h$ using almost the same area.
Thus, $\Ar(D)$ is an estimate on the number of bins of size $w \times h$ needed
to pack $D$. Indeed, if there is a packing of a set of circles $C$ into a set of
bins $D$ that respects $w \times h$, then there is a packing of $C$ in
$\lceil \Ar(D) \rceil$ bins of size~$w \times h$.

\smallskip

Our algorithm also deals with bins of size $w_j \times (1+\gamma) h_j$. Each of
the definitions above has a counterpart for the augmented bins. We say that a
bin $B$ of size $w_B \times h_B$ respects $w\times (1+\gamma)h$ if $w_B = w_j$
and $h_B = (1+\gamma)h_j$ for some $j\ge 0$. Also, we decorate with a tilde the
analogous versions of $\Gr$ and $\Ar$, so $\TGr_j(B)$  represents the grid of
sub-bins of size~$w_j \times (1+\gamma)h_j$ over a bin $B$, and $\TAr(B) =
\Area(B) /((1+\gamma)wh)$.

\subsubsection{Bounding the wasted area}

The solutions obtained by Algorithm~1 may be suboptimal mostly because of the
unused area. For instance, for some $j$, the area of bins in $U_j$ is not fully
used, since there might be elements of $U_j$ that intersect circles of $S_j$
only partially. This waste is bounded by the following lemmas.

\begin{lemma}\label{lemma:area}
Let $C \in S_j$ be a circle packed into a bin $B$, and let $D \subseteq
\TGr_{j+1}(B)$ be the subset of bins in the grid that intersect circle $C$, but
are not contained in~$C$. Then $\TAr(D) \le 16 \varepsilon \Ar(C)$.
\end{lemma}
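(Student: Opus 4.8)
The plan is to exploit the fact that the cells in $D$ are precisely the \emph{boundary cells} of the circle: each one meets the disk $C$ but is not contained in it. First I would observe that a grid cell is convex and connected, so if it contains a point of $C$ and a point outside $C$, then a segment joining them must cross the bounding circumference $\{p : \dist(p,c_C) = r_C\}$, where $c_C$ is the center of $C$. Hence every cell of $D$ contains a point at distance exactly $r_C$ from $c_C$, and therefore lies within distance $d$ of that circumference, where $d$ is the diagonal (diameter) of a cell. Consequently every cell of $D$ is contained in the annulus centered at $c_C$ with inner radius $r_C - d$ and outer radius $r_C + d$. Since the cells of a grid have pairwise disjoint interiors, I can bound the total area by the area of this annulus: $\sum_{B' \in D} \Area(B') \le \pi\bigl((r_C+d)^2 - (r_C-d)^2\bigr) = 4\pi r_C d$.

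Next I would pin down the two quantities $d$ and $r_C$ in terms of the parameters. A cell of $\TGr_{j+1}(B)$ is a rectangle of size $w_{j+1}\times(1+\gamma)h_{j+1}$ with $w_{j+1}=h_{j+1}=s$, where $s=\varepsilon^{2(t+jr)+1}w$, so its diagonal is $d = s\sqrt{1+(1+\gamma)^2}$. The crucial quantitative input comes from the definition of $S_j$: the smallest circle it can contain lies in $G_{t+jr-1}$, so its diameter satisfies $2r_C > \varepsilon^{2(t+jr)}w$, i.e.\ $r_C > \tfrac12\,\varepsilon^{2(t+jr)}w$. Comparing with $s = \varepsilon\cdot\varepsilon^{2(t+jr)}w$ yields the key inequality $s < 2\varepsilon\, r_C$, which says the cells are smaller than the circle by a factor of order $\varepsilon$ and is exactly what makes the boundary waste negligible.

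Finally I would combine these estimates. Using $\TAr(D) = \sum_{B'\in D}\Area(B')/((1+\gamma)wh)$ and $\Ar(C) = \pi r_C^2/(wh)$, the computed area bound gives $\TAr(D) \le 4\pi r_C d / ((1+\gamma)wh)$, so dividing by $\Ar(C)$ leaves $4 d /((1+\gamma) r_C)$. Substituting $d = s\sqrt{1+(1+\gamma)^2}$ and $s/r_C < 2\varepsilon$, and noting that $\sqrt{1+(1+\gamma)^2}/(1+\gamma) \le \sqrt{2}$ for all $\gamma \ge 0$, I obtain $\TAr(D) < 8\sqrt{2}\,\varepsilon\,\Ar(C) \le 16\varepsilon\,\Ar(C)$, with the stated constant $16$ leaving comfortable slack. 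The main obstacle is the geometric annulus containment rather than the arithmetic: one must verify carefully that each boundary cell lies \emph{entirely} inside the thin annulus (which is what the diagonal bound guarantees) and correctly carry the augmented-height factor $(1+\gamma)$ through both the cell diagonal and the definition of $\TAr$, so that the $(1+\gamma)$ terms cancel favorably instead of spoiling the constant.
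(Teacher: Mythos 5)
Your proof is correct and follows essentially the same route as the paper: the paper likewise encloses every boundary cell in a thin annulus around $C$ (between concentric circles of radii $r_C \pm (w_{j+1}+(1+\gamma)h_{j+1})$), bounds $\Area(D)$ by the annulus area, and normalizes using $2r_C \ge \varepsilon^{2(t+jr)}w$ and the $(1+\gamma)$ factor in $\TAr$. Your only deviations are cosmetic: you use the cell diagonal rather than width-plus-height as the annulus half-width (yielding the slightly sharper constant $8\sqrt{2}$ in place of $16$), and you make explicit the intermediate-value argument for why each cell of $D$ meets the circumference, which the paper leaves implicit.
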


\begin{proof}
Let $r_c$ be the radius of $C$. Each element of $D$ has width~$w_{j+1} =
\varepsilon^{2(t+jr)+1} w$, and height $(1+\gamma)h_{j+1} =
(1+\gamma)\varepsilon^{2(t+jr)+1} w$. Also, since $C \in S_j$, we have $2 r_c
\ge \varepsilon^{2(t+jr)}w$.
Consider the circles $C_+$ and $C_-$, centered at the same point as $C$, and
with radii $r_+ = r_c + w_{j+1} +  (1+\gamma)h_{j+1}$ and $r_- = r_c - w_{j+1} -
(1+\gamma)h_{j+1}$. Notice that every element of $D$ is contained in $C_+\setminus
C_-$.
We obtain
\begin{align*}
\Area(D) &\le \Area(C_+) - \Area(C_-) =  \pi (r_+^2 - r_-^2)\\
        &\le \left((1 + 2\varepsilon + (1+\gamma)2\varepsilon)^2  - (1 - 2\varepsilon - (1+\gamma)2\varepsilon)^2\right)\pi r_c^2\\
  &\le (1+\gamma)16\varepsilon \Area(C).
\end{align*}
Therefore, $\TAr(D) = \Area(D)/((1+\gamma)wh) \le 16\varepsilon \Area(C) / (wh) =
16\varepsilon\Ar(C)$.
\myqed\end{proof}

The next lemma is obtained analogously.

\begin{lemma}\label{lemma:area2}
Let $C \in S_j$ be a circle packed into a bin $B$, and let $D \subseteq
\Gr_{j+1}(B)$ be the subset of bins in the grid that intersect circle $C$, but
are not contained in $C$. Then $\Ar(D) \le 16\varepsilon \Ar(C)$.
\end{lemma}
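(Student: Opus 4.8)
The plan is to transcribe the proof of Lemma~\ref{lemma:area} almost verbatim, the only changes being that the grid cells now have height $h_{j+1}$ instead of $(1+\gamma)h_{j+1}$, and that areas are normalized by $wh$ rather than $(1+\gamma)wh$. Writing $r_c$ for the radius of $C$, the first step is to record the smallness estimate for the cells. Since $C \in S_j$, the definitions in steps~\ref{item:setsG} and~\ref{item:setsS} give $2r_c > \varepsilon^{2(t+jr)}w$; combined with $w_{j+1} = h_{j+1} = \varepsilon^{2(t+jr)+1}w = \varepsilon\cdot\varepsilon^{2(t+jr)}w$ from step~\ref{item:sizes}, this yields $w_{j+1} = h_{j+1} \le 2\varepsilon r_c$, so each cell is tiny compared to $C$.

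The substantive step is the same geometric containment used before. I would introduce the concentric circles $C_+$ and $C_-$, centered at the center of $C$, with radii $r_+ = r_c + w_{j+1} + h_{j+1}$ and $r_- = r_c - w_{j+1} - h_{j+1}$. A cell of $\Gr_{j+1}(B)$ that meets the disk $C$ but is not contained in it must cross the boundary circle, hence contains a point $q$ at distance exactly $r_c$ from the center; since the cell has width $w_{j+1}$ and height $h_{j+1}$, every point of the cell is within distance $w_{j+1}+h_{j+1}$ of $q$, and therefore lies in the annulus $C_+\setminus C_-$. This gives $\Area(D) \le \Area(C_+) - \Area(C_-) = \pi(r_+^2 - r_-^2)$.

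Finally I would substitute $w_{j+1} = h_{j+1} \le 2\varepsilon r_c$ to obtain $r_+ \le (1+4\varepsilon)r_c$ and $r_- \ge (1-4\varepsilon)r_c$, so that $r_+^2 - r_-^2 \le \bigl((1+4\varepsilon)^2 - (1-4\varepsilon)^2\bigr)r_c^2 = 16\varepsilon r_c^2$; hence $\Area(D) \le 16\varepsilon\,\pi r_c^2 = 16\varepsilon\,\Area(C)$, and dividing by $wh$ gives $\Ar(D) \le 16\varepsilon\,\Ar(C)$. I do not expect any real obstacle: the argument is a direct adaptation of Lemma~\ref{lemma:area}, and because the non-augmented cells are slightly smaller than the augmented ones, no factor of $(1+\gamma)$ intervenes and the constants come out even cleaner. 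The only point deserving a moment's care is the containment claim for boundary-crossing cells, which is exactly as in the previous lemma; for it to read as an area bound one uses that $r_- \ge 0$, which holds in the relevant regime of small $\varepsilon$.
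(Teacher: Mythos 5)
Your proof is correct and follows exactly the approach the paper intends: the paper gives no separate argument for Lemma~\ref{lemma:area2}, stating only that it is ``obtained analogously'' to Lemma~\ref{lemma:area}, and your transcription with $(1+\gamma)$ removed is precisely that adaptation. Your added care about the boundary-crossing containment and the sign of $r_-$ matches (indeed slightly exceeds) the level of rigor in the paper's own proof of Lemma~\ref{lemma:area}.
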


\subsubsection{Modifying an optimal solution}\label{subsub-mod}

In the following, we will show that requiring that each set of circles $S_j$ be
packed into grid bins of size $w_j \times h_j$ does not increase much the
solution cost. This fact is central to the algorithm, since it allows packing
sets $S_j$'s iteratively.
To show these properties, we will transform an optimal packing $Opt$ of
$\mathcal{C}$ into a packing $D$ with the desired properties. The idea is moving
circles of $S_j$ that intersect lines of the grid of size $w_j \times h_j$ to
free bins that respect the grid.

Notice that, in the optimal solution $Opt$, the circles of $S_0$ already respect
the grid of size~${w_0 \times h_0}$, since $w = w_0$ and $h = h_0$. For each $j
\ge 1$, we know that the diameter of a circle in $S_j$ is at most $\varepsilon
w_j$, so the total area of circles in $S_j$ that intersect the grid lines
is~$O(\varepsilon) \Area(Opt)$. To fix the intersections, we may consider moving
such circles to new bins of size $w_j \times h_j$ with area at most
$O(\varepsilon) \Area(Opt)$. As a first attempt, one could repeat this process
for each set $S_j$, however, that would increase the solution cost by
$O(\varepsilon) \Area(Opt)$ for each $j\ge 1$. Therefore, we need a more involved
approach: first, we move intersecting circles of $S_1$ to newly created sub-bins
of size $w_1\times h_1$; then, for each $j > 1$, we move the intersecting
circles of $S_j$ to sub-bins of size $w_j\times h_j$ placed over the space left
by circles of $S_{j-1}$ that intersected the grid lines, and that were moved in
the previous iteration.

The next algorithm keeps the invariant that, at the start of iteration $j\ge1$,
the set $R_j$ contains free space to pack all the grid intersecting circles of $S_j$.
Steps~(\ref{item:movebegin})-(\ref{item:group}) move intersecting circles to
bins of $R_j$, and steps~(\ref{item:spacebegin})-(\ref{item:spaceend}) prepare
for the next iteration by making sure that there are enough free bins in
$R_{j+1}$ respecting the grid of size $w_{j+1} \times h_{j+1}$.

For $j\ge1$, we will fix the intersections of $S_j$ by considering each bin $B$
of the grid $\Gr_j(Opt)$ iteratively, in step~(\ref{item:eachbin}). We will
consider 4 very thin rectangles over the boundary of $B$, as in
Figure~\ref{fig:lines}. The set of all such rectangles will be $L_j$. Here, we
take advantage of the fact that the circles are non-overlapping, and move each
rectangle containing the intersecting circles to the next free space of $R_j$,
preserving the packing arrangement (alternatively, one could repack all
intersecting circles using a different constant approximation algorithm).

To reserve free space in $R_{j+1}$ for the next iteration, we will split the
rectangles in $L_j$ into sub-bins of size $w_{j+1} \times h_{j+1}$. We use as
much as possible the free sub-bins of $L_j$. If a given sub-bin $B$ of $L_j$ is
used by a circle, then we may use the sub-bin of $Opt$, denoted by $\phi(B)$,
that was originally occupied by the corresponding region of this circle. There
might be cases that the sub-bin $B$ intersects the circle only partially, so
that $\phi(B)$ is potentially used by other circles. In such cases, we create a
new bin of size $w_{j+1} \times h_{j+1}$.

The complete algorithm to modify an optimal solution is formalized in the
following.

\begin{figure}
\centering
\includegraphics[width=4.7cm]{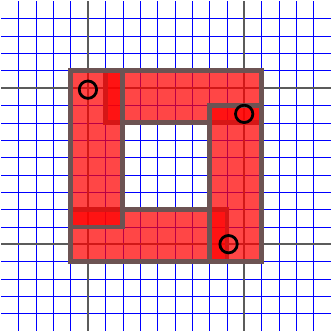}

\begin{minipage}[t]{0.8\linewidth}
Each circle of $W$ has diameter at most $\varepsilon w_j$. To  rearrange the
rectangles into bins of size $w_j \times w_j$ (of $R_j$), we use one side of
length $w_j$. To ensure every circle is in a rectangle, the other side has
length $3 \varepsilon w_j$ (see lower circle).
\end{minipage}
\caption{Removing circles that intersect grid lines.\label{fig:lines}}
\end{figure}

\begin{enumerate}
 \item Let $R_1$ be a set of $12 \varepsilon (wh)/(w_1h_1) |Opt|$ new bins of
 size~$w_1 \times h_1$;

 \item Let $D_0 = Opt \cup R_1$;

 \item For each $j \ge 1$:

 \begin{enumerate}

  \item[] \emph{Move to $R_j$ the circles in $S_j$ that intersect grid lines:}

  \item\label{item:movebegin} Let $L_j = \emptyset$;

  \item\label{item:eachbin} For each bin $B \in \Gr_j(Opt)$:

  \begin{enumerate}
    \item Let $W$ be the set of circles in $S_j$ that intersect the boundary
    of $B$;

    \item Let $V$ be a set of 4 new bins (2 of size~$3 \varepsilon w_j
    \times h_j$, and 2 of size~$w_j \times 3 \varepsilon h_j$) placed over
    the boundary of $B$, so that each circle in $W$ is contained in one bin
    of $V$ (see Figure~\ref{fig:lines});

    \item\label{item:defphi} For each cell $B' \in \Gr_{j+1}(V)$, let
    $\phi(B')$ be the cell of $\Gr_{j+1}(Opt)$ under~$B'$;

    \item Remove each circle of $W$ from the packing $D_j$ and pack it over
    one bin of $V$ preserving the arrangement;

    \item Add $V$ to $L_j$;
  \end{enumerate}

  \item\label{item:group} Make groups of $r/3$ bins (of equal sizes) of $L_j$
  forming new bins of size~$w_j \times h_j$, and place each such bin over
  one distinct element of $R_j$;

  \medskip

  \item[] \emph{Reserve space for $R_{j+1}$:}

  \item\label{item:spacebegin} Let $R_{j+1} = \emptyset$;

  \item Let $N_j = \emptyset$;

  \item\label{item:spaceend} For each bin $B \in \Gr_{j+1}(L_j)$, consider the
  cases:

    \begin{enumerate}
      \item\label{item:case1} If $B$ does not intersect any circle of
      $S_j$, then add $B$ to $R_{j+1}$;

      \item\label{item:case2} If $B$ is contained in some circle of $S_j$,
      then add $\phi(B)$ to $R_{j+1}$;

      \item\label{item:case3} If $B$ intersects, but is not contained in a
      circle of $S_j$, then create a new bin of size~$w_{j+1} \times
      h_{j+1}$ and add it to $R_{j+1}$, and to $N_j$;
    \end{enumerate}

  \item Let $D_j = D_{j-1} \cup N_j$;

  \item If $\mathcal{C}\setminus H_t = S_0 \cup \dots \cup S_j$, make $D =
  D_j$, and stop.
 \end{enumerate}
\end{enumerate}

We will show that $D$ is a bin packing of $\mathcal{C}\setminus H_t$. First, we note that the
procedure is well defined. It is enough to check that $R_j$ has free space to
pack bins of $L_j$. Indeed, it is not hard to show the following claim.

\begin{claim}\label{claim-space}
For every $j \ge 1$, $\Area(L_j) = \Area(R_j) = \Area(R_1)$.
\end{claim}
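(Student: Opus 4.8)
The plan is to prove that all three quantities equal the same explicit value, $12\varepsilon\, wh\,|Opt|$, by pure area bookkeeping. The first equality $\Area(L_j)=\Area(R_1)$ I would establish by direct counting of the rectangles created in the loop of step~(\ref{item:eachbin}), while the second, $\Area(R_j)=\Area(L_{j-1})$ for $j\ge2$, follows from the observation that the reservation loop of steps~(\ref{item:spacebegin})--(\ref{item:spaceend}) adds to $R_{j}$ exactly one sub-bin of size $w_{j}\times h_{j}$ for every cell of $\Gr_{j}(L_{j-1})$, no matter which case applies.

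First I would compute $\Area(R_1)$. By definition $R_1$ consists of $12\varepsilon(wh)/(w_1h_1)|Opt|$ bins of size $w_1\times h_1$, so $\Area(R_1)=12\varepsilon\,wh\,|Opt|$. Next, for $\Area(L_j)$: the loop of step~(\ref{item:eachbin}) runs once for every $B\in\Gr_j(Opt)$, and each pass adds to $L_j$ the four boundary rectangles of $V$---two of size $3\varepsilon w_j\times h_j$ and two of size $w_j\times3\varepsilon h_j$---whose areas sum to $12\varepsilon w_jh_j$. Since $|\Gr_j(Opt)|=|Opt|\,(wh)/(w_jh_j)$, I get $\Area(L_j)=12\varepsilon w_jh_j\cdot|Opt|(wh)/(w_jh_j)=12\varepsilon\,wh\,|Opt|=\Area(R_1)$, which proves the first equality for every $j\ge1$.

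For the recursion I would argue that in iteration $j-1$ the set $R_j$ receives, for each $B\in\Gr_j(L_{j-1})$, precisely one bin of size $w_j\times h_j$: in case~(\ref{item:case1}) it is $B$ itself, in case~(\ref{item:case2}) it is the underlying cell $\phi(B)\in\Gr_j(Opt)$, and in case~(\ref{item:case3}) it is a freshly created bin; all three have size $w_j\times h_j$. Hence $\Area(R_j)=|\Gr_j(L_{j-1})|\,w_jh_j$, and since the grid $\Gr_j$ tiles $L_{j-1}$ exactly, this equals $\Area(L_{j-1})$, which by the previous paragraph equals $\Area(R_1)$. Together with the trivial base case $j=1$, this yields $\Area(R_j)=\Area(R_1)$ for all $j\ge1$.

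The one point requiring care---the main, albeit modest, obstacle---is justifying that $\Gr_j$ tiles $L_{j-1}$ with no leftover, so that the cell count equals $\Area(L_{j-1})/(w_jh_j)$ exactly rather than up to rounding. This reduces to checking that the relevant ratios are integral: from step~(\ref{item:sizes}) and $\varepsilon=1/r$ one has $w_{j-1}/w_j=\varepsilon^{-2r}=r^{2r}$ and $3\varepsilon\,w_{j-1}/w_j=3r^{2r-1}$, both integers (here the hypothesis that $r$ is a multiple of $3$ is used), and likewise $w/w_j$ is a positive power of $r$; thus each rectangle of $L_{j-1}$ is partitioned exactly into $w_j\times h_j$ cells. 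I would note in passing that this same identity is what makes step~(\ref{item:group}) feasible, since the $12\varepsilon|\Gr_j(Opt)|$ bins it forms from groups of $r/3$ rectangles of $L_j$ fit exactly onto the $|R_j|$ elements of $R_j$.
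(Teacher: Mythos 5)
Your overall plan coincides with the paper's: both equalities are obtained by the same area bookkeeping, and your computations of $\Area(R_1)$ and of $\Area(L_j)=12\varepsilon\,wh\,|Opt|$ match the paper's first step exactly. The gap is in the recursion $\Area(R_j)=\Area(L_{j-1})$. From ``each cell $B\in\Gr_j(L_{j-1})$ contributes exactly one bin of size $w_j\times h_j$'' you conclude $\Area(R_j)=|\Gr_j(L_{j-1})|\,w_jh_j$; this conclusion needs the contributed bins to be pairwise \emph{distinct}, since $R_j$ is a set and a bin added twice is counted once. Distinctness is automatic in cases~(\ref{item:case1}) and~(\ref{item:case3}) (the cell itself, respectively a freshly created bin), but not in case~(\ref{item:case2}): there the bin added is $\phi(B)$, a cell of $\Gr_j(Opt)$, and nothing in your argument excludes $\phi(B)=\phi(B')$ for two distinct cells $B\neq B'$. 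Were such a collision to occur, $R_j$ would receive strictly fewer than $|\Gr_j(L_{j-1})|$ bins, the claimed equality would fail, and---worse---step~(\ref{item:group}) of the next iteration would not have enough room in $R_j$ to host $L_j$, breaking the whole construction. Proving this injectivity of $\phi$ is precisely the nontrivial half of the paper's proof: if $\phi(B)=\phi(B')$, then by step~(\ref{item:defphi}) both cells lie over the same cell of $\Gr_j(Opt)$; since case~(\ref{item:case2}) applies, $B$ and $B'$ are contained in circles of $S_{j-1}$ whose moves preserved the arrangement, so that single cell of $\Gr_j(Opt)$ is covered by the original positions of both circles, forcing them to be the same circle; and since each such circle is contained in exactly one rectangle of $L_{j-1}$, it follows that $B=B'$. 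Your proof needs this argument (or an equivalent one) to be complete.

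Your secondary point---that the grid $\Gr_j$ tiles the rectangles of $L_{j-1}$ exactly---is a legitimate observation, and the integrality check is correct, but it is not where the difficulty lies; note also that $3\varepsilon w_{j-1}/w_j = 3r^{2r-1}$ is an integer for \emph{every} integer $r$, so the hypothesis $r\equiv 0\pmod 3$ is not used there; it is needed only to form the groups of $r/3$ rectangles in step~(\ref{item:group}), as you correctly remark at the end.
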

\begin{proof}
First, note that $\Area(L_j) = \Area(R_1)$ for every $j\ge1$. Indeed, we have
$|\Gr_j(Opt)| = |Opt| (wh)/(w_jh_j)$, and for each bin $B$ in
$\Gr_j(Opt)$, we create a set~$V$ such that ${\Area(V) = 4\cdot3\varepsilon
w_jh_j}$, so $\Area(L_j) = 12\varepsilon (wh)|Opt| = \Area(R_1)$.

Also, note that $\Area(R_{j+1}) = \Area(L_j)$, since for each $B \in
\Gr_{j+1}(L_j)$, $\Area(R_{j+1})$ is increased by $\Area(B)$. This is clear if
steps~(\ref{item:case1}) and~(\ref{item:case3}) are executed, so it is enough to
show that, if step~(\ref{item:case2}) is executed for bins $B, B'$ in
$\Gr_{j+1}(L_j)$, with $\phi(B) = \phi(B')$, then~${B = B'}$. By the definition
in step~(\ref{item:defphi}), $B$ and $B'$ must intersect the same region of a
circle in $S_j$. Since each such circle is contained in exactly one rectangle of
$L_j$, it follows that, indeed,~${B = B'}$. Then $\Area(R_{j+1}) = \Area(L_j)$.

Therefore $\Area(R_j) = \Area(R_1)$ for every $j\ge1$.
\myqed\end{proof}

\smallskip

Now, it will be shown that the algorithm produces a modified solution with the
desired properties.

\begin{claim}
At the end of iteration $j \ge 0$, the following statements hold:

\begin{enumerate}
 \item $D_j$ is a packing of $\mathcal{C}$;

 \item for each $\ell = 0,  \dots, j$, there is a packing of $S_\ell$ into a
 set $P'_\ell \subseteq \Gr_\ell(D_j)$ of bins of size~$w_\ell \times h_\ell$;

 \item the bins in $R_{j+1} \subseteq \Gr_{j+1}(D_j)$ do not intersect any
 circle of $\mathcal{C}$.
\end{enumerate}
\end{claim}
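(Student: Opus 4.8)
The plan is to prove the three statements simultaneously by induction on the iteration index $j$, since they are tightly coupled: the fact that $R_{j+1}$ avoids all circles (statement 3) is what lets the next iteration use those bins as free grid space, and the packing of $S_\ell$ into grid bins (statement 2) is what the algorithm relies on to proceed. For the base case $j = 0$, I would observe that $D_0 = Opt \cup R_1$ is a packing of $\mathcal{C}$ since $Opt$ already packs everything and $R_1$ is a set of empty bins; that $S_0$ already respects the grid of size $w_0 \times h_0 = w \times h$ (so $P'_0 \subseteq \Gr_0(D_0)$ is immediate, as noted in the text preceding the modification algorithm); and that $R_1$, being a set of new empty bins, intersects no circle and lies in $\Gr_1(D_0)$.

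For the inductive step, assume the three statements hold at the end of iteration $j-1$, and trace through one pass of the modification algorithm. First I would argue statement 1: the only changes to the packing are moving circles of $W$ from their positions in $D_{j-1}$ onto the bins of $V$ (which preserves non-overlap, since each $V$ faithfully reproduces the arrangement along the boundary of $B$), and adding the new empty bins $N_j$; hence $D_j$ remains a valid packing of $\mathcal{C}$. The well-definedness of the ``move'' operation — that $R_j$ has room for all the relocated rectangles of $L_j$ — is exactly Claim~\ref{claim-space}, which I would invoke to guarantee $\Area(R_j) = \Area(L_j)$ so that the grouping in step~(\ref{item:group}) fits. For statement 2, the newly handled set $S_j$ is packed into grid bins because every circle of $S_j$ that crossed a grid line of $\Gr_j(Opt)$ has been moved into the thin rectangles $V$ (which are themselves grouped into bins of size $w_j \times h_j$ over $R_j$), while every circle of $S_j$ not crossing a grid line already sits inside a single cell of $\Gr_j(D_{j-1})$; together these give $P'_j \subseteq \Gr_j(D_j)$. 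The earlier sets $S_0, \dots, S_{j-1}$ retain their grid packings from the induction hypothesis, since the modification at stage $j$ only touches circles of $S_j$.

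The crux — and the step I expect to be the main obstacle — is statement 3 for the new set $R_{j+1}$, namely that the bins reserved for the next iteration genuinely avoid \emph{all} circles of $\mathcal{C}$. The definition splits each $B \in \Gr_{j+1}(L_j)$ into three cases, and I would check each: in case~(\ref{item:case1}), $B$ is added only when it meets no circle of $S_j$, and since $B$ lies over the thin boundary rectangles $L_j$ (which were emptied of larger circles and are too thin to contain circles from $S_0,\dots,S_{j-1}$), it meets no other circle either; in case~(\ref{item:case2}), $B$ is fully inside a circle of $S_j$, so the cell $\phi(B)$ underneath it in $Opt$ was occupied solely by that same circle's region, and after the move of $S_j$ that region is vacated, leaving $\phi(B)$ free — the subtle point being that $\phi$ is well defined and injective on these cells, which is precisely the content of the injectivity argument in Claim~\ref{claim-space}; in case~(\ref{item:case3}), $B$ is a freshly created empty bin in $N_j$, which trivially avoids every circle. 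The delicate reasoning is ensuring that the cells $\phi(B)$ produced in case~(\ref{item:case2}) are mutually disjoint and disjoint from the bins produced in the other cases, so that $R_{j+1}$ is a genuine family of non-overlapping free bins respecting the grid; this is where I would lean most heavily on the containment geometry (a circle of $S_j$ lies in exactly one rectangle of $L_j$) and on the area bookkeeping already established. Finally, $R_{j+1} \subseteq \Gr_{j+1}(D_j)$ holds because the case-(\ref{item:case3}) bins were explicitly added to $D_j$ via $N_j$, while the case-(\ref{item:case1}) and case-(\ref{item:case2}) bins are cells of grids over sets already present in $D_j$.
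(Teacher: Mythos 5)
Your proposal is correct and follows essentially the same route as the paper's proof: induction on $j$, with statement~1 resting on Claim~\ref{claim-space} and the inductive hypothesis that $R_j$ is circle-free, statement~2 following because grid-crossing circles of $S_j$ end up inside bins of $R_j \subseteq \Gr_j(D_j)$, and statement~3 reduced to the case-(\ref{item:case2}) observation that $\phi(B)$ was covered only by the circle $C$ that has since been moved to $L_j$. You even correctly pinpoint the injectivity of $\phi$ as the delicate point and locate it where the paper proves it, inside Claim~\ref{claim-space}.
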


\begin{proof}
By induction on $j$. For $j = 0$, the statements are clear. So let $j \ge 1$,
and assume that the statements are true for $j - 1$.

\textit{Statement 1: } Clearly, $L_j$ is a packing of the circles that were
removed from the original packing $D_{j-1}$. Since $r$ is a multiple of $3$ and
by Claim~\ref{claim-space}, step~(\ref{item:group}) is well defined, and we can
place rectangles of $L_j$ over bins of $R_j$. After step~(\ref{item:group}), we
have a bin packing of $\mathcal{C}$, since, by the induction hypothesis, the set
$R_j$ did not intersect any circle at the beginning of iteration $j$. This shows
statement~1.

\textit{Statement  2: } Since, at the end of iteration, each circle of $S_j$
that intersected a line of the grid $\Gr_j(D_{j-1})$ is completely contained in
a bin of $R_j \subseteq \Gr_j(D_j)$, we obtain statement~2.

\textit{Statement   3: } If step~(\ref{item:case1}) or step~(\ref{item:case3})
is executed, then we add a free bin to $R_{j+1}$. Thus, we only need to argue
that, whenever step~(\ref{item:case2}) is executed, the bin $\phi(B)$ does not
intersect any circle. Let $C$ be the circle that contains $B$, so that at the
beginning of the iteration,~$\phi(B)$ was contained in $C$. Since $C$ was moved
to $L_j$, $\phi(B)$ does not intersect any circle when step~(\ref{item:case2})
is executed. This completes the proof.
\myqed\end{proof}

Finally, we may calculate the cost of the modified solution $D$.

\begin{claim}
 $\Ar(D) \le (1 + 28\varepsilon)\OPT_{w\times h}(\mathcal{C})$.
\end{claim}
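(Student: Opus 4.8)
The plan is to write $\Ar(D)$ as a sum of three contributions and bound each separately. By construction, $D$ is built from the optimal packing $Opt$ by first adjoining the reservoir $R_1$ (so $D_0 = Opt \cup R_1$), and then, at each iteration $j \ge 1$, adjoining only the set $N_j$ of newly created sub-bins, since $D_j = D_{j-1} \cup N_j$. I would first observe that the bins placed into $R_{j+1}$ in steps~(\ref{item:case1}) and~(\ref{item:case2}) are \emph{existing} cells of $\Gr_{j+1}(L_j)$ or of $\Gr_{j+1}(Opt)$, and hence contribute no new area; only step~(\ref{item:case3}) creates fresh bins, and these are exactly the ones collected in $N_j$. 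This gives the decomposition $\Ar(D) = \Ar(Opt) + \Ar(R_1) + \sum_{j \ge 1} \Ar(N_j)$. The first two terms are immediate: each bin of $Opt$ has size $w \times h$, so $\Ar(Opt) = |Opt| = \OPT_{w\times h}(\mathcal{C})$; and $R_1$ consists of $12\varepsilon (wh)/(w_1 h_1)\,|Opt|$ bins, each of normalized area $w_1 h_1/(wh)$, so $\Ar(R_1) = 12\varepsilon\,|Opt| = 12\varepsilon\,\OPT_{w\times h}(\mathcal{C})$.

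The work is in bounding $\sum_{j \ge 1} \Ar(N_j)$. Let $W_j \subseteq S_j$ denote the circles of $S_j$ moved into $L_j$, that is, those that intersected the grid $\Gr_j(Opt)$; these are precisely the circles packed in the rectangles of $L_j$. A bin enters $N_j$ exactly when some cell $B \in \Gr_{j+1}(L_j)$ intersects but is not contained in a circle of $W_j$. I would apply Lemma~\ref{lemma:area2} to each circle $C \in W_j$: the cells of the grid of size $w_{j+1} \times h_{j+1}$ that meet $C$ without being contained in it have total normalized area at most $16\varepsilon\,\Ar(C)$. Since every $N_j$-cell is charged to at least one such circle, summing over $W_j$ yields $\Ar(N_j) \le 16\varepsilon\,\Ar(W_j)$.

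Next I would exploit disjointness to finish. Because the sets $S_j$ partition $\mathcal{C}\setminus H_t$, the $W_j$ are pairwise disjoint, so $\sum_{j \ge 1} \Ar(W_j) \le \Ar(\mathcal{C})$. As all circles fit into $Opt$, we have $\Area(\mathcal{C}) \le |Opt|\,wh$, hence $\Ar(\mathcal{C}) \le \OPT_{w\times h}(\mathcal{C})$, and therefore $\sum_{j \ge 1} \Ar(N_j) \le 16\varepsilon\,\OPT_{w\times h}(\mathcal{C})$. Combining the three bounds gives $\Ar(D) \le (1 + 12\varepsilon + 16\varepsilon)\,\OPT_{w\times h}(\mathcal{C}) = (1 + 28\varepsilon)\,\OPT_{w\times h}(\mathcal{C})$, as claimed.

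The hard part is precisely this last disjointness observation, and it is the crux of the whole construction. A naive per-level estimate would charge $\Theta(\varepsilon)\,\OPT$ at each of the (possibly many) iterations and destroy the bound; the point of reusing the freed space via $\phi(\cdot)$ and charging the new sub-bins $N_j$ to the \emph{distinct} circles $W_j \subseteq S_j$ is exactly that, across all levels together, these circles have area at most that of a single optimal packing. Care is also needed in the reduction to Lemma~\ref{lemma:area2}: the lemma is stated for a grid over a bin, whereas here the grid is placed over the rectangles of $L_j$ holding the moved circles, so I must note that $W_j$ is packed in $L_j$ and that the same geometric estimate applies unchanged.
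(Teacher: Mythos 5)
Your proof is correct and follows essentially the same route as the paper: the same decomposition $\Ar(D) = \Ar(Opt) + \Ar(R_1) + \sum_{j\ge 1}\Ar(N_j)$, the same application of Lemma~\ref{lemma:area2} to bound each $\Ar(N_j)$, and the same disjointness argument bounding the sum by $\Ar(\mathcal{C}) \le \OPT_{w\times h}(\mathcal{C})$. Your only deviation is a slight refinement --- charging $N_j$ to the moved circles $W_j \subseteq S_j$ rather than to all of $S_j$ --- which yields the identical final bound.
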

\begin{proof}
Let $m$ be the number of iterations of the algorithm that modifies $Opt$. Notice
that $D$ is the disjoint union of $Opt$, $R_1$, $N_1$, \dots, $N_m$. Thus we get
\begin{align*}
 \Ar(D) &=   \textstyle \Ar(Opt) + \Ar(R_1) + \sum_{j = 1}^m \Ar(N_j) \\
         &\le \textstyle \Ar(Opt)+12\varepsilon (wh)|Opt| / (wh) +\sum_{j=1}^m 16\varepsilon\Ar(S_j)\\
         &\le \Ar(Opt)+12\varepsilon \Ar(Opt) +16\varepsilon \Ar(\mathcal{C})\\
         &\le (1 + 28\varepsilon)\OPT_{w\times h}(\mathcal{C}),
\end{align*}
where the first inequality comes from Lemma~\ref{lemma:area2}.
\myqed\end{proof}

By combining the last claims, we obtain the following lemma.

\begin{lemma}\label{lemma:modopt}
There is a packing of $\mathcal{C} \setminus H_t$ into a set of bins $D$ that
respects $w \times h$ with~${\Ar(D) \le (1+28\varepsilon)\OPT_{w\times
h}(\mathcal{C})}$, such that for every $j \ge 0$, there is a packing of $S_j$
into a set of bins $P'_j \subseteq \Gr_j(D)$.
\end{lemma}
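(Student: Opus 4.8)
The plan is to combine the three preceding claims, which together analyze the procedure that transforms an optimal packing $Opt$ of $\mathcal{C}$ into the modified packing $D$. First I would apply that procedure to an optimal solution, obtaining after some number $m$ of iterations the packing $D = D_m$, which by construction is the disjoint union $Opt \cup R_1 \cup N_1 \cup \dots \cup N_m$. The assertion that $D$ respects $w \times h$ is then immediate from the construction: $Opt$ uses bins of size $w_0 \times h_0 = w \times h$, the reservoir $R_1$ uses bins of size $w_1 \times h_1$, and each $N_j$ uses bins of size $w_{j+1}\times h_{j+1}$, all of which respect $w \times h$.

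For the grid-decomposition part of the statement, I would invoke the second statement of the invariant claim established above: for every $\ell = 0, \dots, m$ there is a packing of $S_\ell$ into a set $P'_\ell \subseteq \Gr_\ell(D)$ of bins of size $w_\ell \times h_\ell$. Since $\mathcal{C}\setminus H_t = \bigcup_{j\ge 0} S_j$, the union of these packings is precisely a packing of $\mathcal{C}\setminus H_t$ realizing the required property $P'_j \subseteq \Gr_j(D)$ for every $j \ge 0$.

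For the cost bound I would simply quote the preceding claim, which gives $\Ar(D) \le (1+28\varepsilon)\OPT_{w\times h}(\mathcal{C})$. Its proof decomposes $\Ar(D)$ as $\Ar(Opt) + \Ar(R_1) + \sum_{j=1}^m \Ar(N_j)$, uses Claim~\ref{claim-space} to evaluate $\Ar(R_1) = 12\varepsilon\,\Ar(Opt)$, and applies Lemma~\ref{lemma:area2} to bound each $\Ar(N_j) \le 16\varepsilon\,\Ar(S_j)$. Since $\Ar(Opt) = \OPT_{w\times h}(\mathcal{C})$ and $\Ar(\mathcal{C}) \le \OPT_{w\times h}(\mathcal{C})$, the three terms telescope to $(1+12\varepsilon)\OPT_{w\times h}(\mathcal{C}) + 16\varepsilon\,\OPT_{w\times h}(\mathcal{C}) = (1+28\varepsilon)\OPT_{w\times h}(\mathcal{C})$.

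The main obstacle is not the final assembly but the guarantee that the modification procedure is well defined, namely that at the start of each iteration $j$ the reservoir $R_j$ contains enough free space, respecting the grid of size $w_j \times h_j$, to absorb all grid-intersecting circles of $S_j$. This is exactly Claim~\ref{claim-space}, and I would verify its two delicate points with care: that $\phi$ is injective on the sub-bins triggering case~(\ref{item:case2}), so no region of $Opt$ is reused when reserving space for $R_{j+1}$, and that $r$ is a multiple of $3$, so that the four thin boundary rectangles of $L_j$ regroup exactly into bins of size $w_j \times h_j$ in step~(\ref{item:group}). Granting Claim~\ref{claim-space} together with the invariant claim, the lemma follows directly by taking $D = D_m$ and restricting the packing to $\mathcal{C}\setminus H_t$.
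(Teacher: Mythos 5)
Your proposal is correct and follows essentially the same route as the paper: the paper's own proof of this lemma is precisely the assembly of the three preceding claims (well-definedness via Claim~\ref{claim-space}, the inductive invariant giving the packings $P'_j \subseteq \Gr_j(D)$, and the cost bound $\Ar(D) \le (1+28\varepsilon)\OPT_{w\times h}(\mathcal{C})$), which is exactly what you do, including the same decomposition $\Ar(D) = \Ar(Opt) + \Ar(R_1) + \sum_j \Ar(N_j)$ with the bounds $12\varepsilon$ and $16\varepsilon$.
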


In addition to requiring that each set of circles $S_j$ be packed into bins of
the grid with cells of size~$w_j \times h_j$, we also require that the bins used
to pack $S_{j+1},S_{j+2}, \dots$ do not intersect circles of $S_1, \dots,
S_{j-1}$. Again, this restriction only increases the cost of a solution by a
small fraction of the optimal value, as shown by the next lemma.

\begin{lemma}\label{lemma:modopt2}
There is a packing of $\mathcal{C} \setminus H_t$ into a set of bins $D$ that
respects $w \times h$ with $\Ar(D) \le (1+44\varepsilon)\OPT_{w\times
h}(\mathcal{C})$, such that for every $j \ge 0$, there is a packing of $S_j$
into a set of bins $P'_j \subseteq \Gr_j(D)$. Moreover, if $B \in P'_j$, then
$B$ does not intersect any circle in $S_\ell$ for $\ell < j$.
\end{lemma}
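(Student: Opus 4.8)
The plan is to start from the packing $D$ produced by Lemma~\ref{lemma:modopt}, which already respects $w \times h$, satisfies $\Ar(D) \le (1+28\varepsilon)\OPT_{w\times h}(\mathcal{C})$, and comes with packings $P'_j \subseteq \Gr_j(D)$ of each $S_j$. The only property missing is that a bin $B \in P'_j$ may meet a circle of some $S_\ell$ with $\ell < j$. I would fix exactly these violations by relocating the offending small circles into freshly created grid bins, and then charge the cost of these new bins to the area of the large circles that caused the conflict.

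For each $j \ge 1$, call a bin $B \in P'_j$ \emph{bad} if it intersects some circle $C \in S_\ell$ with $\ell < j$. The first observation is that such a $B$ cannot be contained in $C$: every circle of $S_\ell$ with $\ell < j$ has diameter $2 r_C > \varepsilon^{2(t+\ell r)} w$, which far exceeds $w_j = h_j$, while $B$ actually holds circles of $S_j$ in its interior that are forbidden to overlap $C$. Hence $B$ must cross the boundary $\partial C$. I would remove from each bad bin the circles of $S_j$ it contains, place them into a brand new empty bin of size $w_j \times h_j$, and collect all such new bins in a set $N'_j$. Setting $D' = D \cup \bigcup_{j\ge1} N'_j$ and letting $P''_j$ consist of the surviving (non-bad) bins of $P'_j$ together with $N'_j$, we obtain a packing of $\mathcal{C}\setminus H_t$ that respects $w\times h$ in which no bin of $P''_j$ meets a circle of $S_\ell$ with $\ell < j$: the surviving bins were not bad, and each bin of $N'_j$ is empty apart from the relocated $S_j$ circles. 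Since every bin of $N'_j$ is a single cell of $\Gr_j(D')$, we also have $P''_j \subseteq \Gr_j(D')$.

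It then remains to bound $\sum_{j}\Ar(N'_j)$, which equals the total area of the bad bins divided by $wh$. The bad bins at level $j$ are contained in the level-$j$ cells crossing the boundary of some circle of $\bigcup_{\ell<j} S_\ell$, so it suffices to bound, for a fixed circle $C \in S_\ell$ and each level $j > \ell$, the area of the cells of $\Gr_j$ that intersect $\partial C$ without being contained in $C$. This is precisely the annulus estimate of Lemma~\ref{lemma:area2}, now carried out with cells of width $w_j = \varepsilon^{2(t+(j-1)r)+1}w$ against a circle of diameter $2r_C > \varepsilon^{2(t+\ell r)}w$; here the ratio of cell size to diameter is $O(\varepsilon^{2r(j-\ell-1)+1})$, so the same computation gives a bound of the form $16\varepsilon\,\varepsilon^{2r(j-\ell-1)}\Area(C)$, which recovers Lemma~\ref{lemma:area2} at $j=\ell+1$ and decays geometrically for larger $j$. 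Summing the geometric series over $j > \ell$ yields at most $\tfrac{16\varepsilon}{1-\varepsilon^{2r}}\Area(C)$, and summing over all circles gives a total of at most $\tfrac{16\varepsilon}{1-\varepsilon^{2r}}\Area(\mathcal{C})$. Dividing by $wh$ and using $\Ar(\mathcal{C}) \le \OPT_{w\times h}(\mathcal{C})$ together with the fact that $\varepsilon^{2r}$ is negligible for $r\ge 3$, I would conclude $\sum_j \Ar(N'_j) \le 16\varepsilon\,\OPT_{w\times h}(\mathcal{C})$, whence $\Ar(D') \le (1+44\varepsilon)\OPT_{w\times h}(\mathcal{C})$.

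The main obstacle is this last area bound. Unlike Lemma~\ref{lemma:area2}, which compares only adjacent levels, here a single large circle of $S_\ell$ is crossed by cells at \emph{every} finer level $j>\ell$, and one must verify that these contributions sum to only $O(\varepsilon)\Area(C)$ rather than growing with the number of levels. The geometric decay in $j-\ell$, which comes from the factor $\varepsilon^{2r}$ separating consecutive sets $S_j$, is exactly what makes the series converge and keeps the extra cost at $16\varepsilon\,\OPT_{w\times h}(\mathcal{C})$, so that the final constant is $28+16 = 44$.
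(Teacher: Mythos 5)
Your proposal is correct in substance and uses the same basic mechanism as the paper --- start from Lemma~\ref{lemma:modopt}, relocate circles whose grid cells meet earlier (larger) circles into brand-new cells, and charge the new cells to the annulus bound of Lemma~\ref{lemma:area2} --- but the bookkeeping is genuinely different, and the difference is instructive. You repair all violating pairs $(\ell,j)$ with $\ell<j$ directly, so a single circle $C\in S_\ell$ is charged by every finer level $j>\ell$, forcing you to sum the geometric series $16\varepsilon\sum_{k\ge0}\varepsilon^{2rk}\Ar(C)=\tfrac{16\varepsilon}{1-\varepsilon^{2r}}\Ar(C)$. The paper instead proceeds by induction on $j$ and, at step $j$, clears only the cells of $\Gr_j(D)$ that cross circles of $S_{j-1}$, moving their \emph{entire} contents (circles of $S_j$ and of all deeper levels) to fresh cells. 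Because the grids are nested, after step $j$ every circle of every level $\ge j$ sits inside a level-$j$ cell that touches no circle of $S_{j-1}$, so non-adjacent pairs never need to be repaired at all: each circle pays the $16\varepsilon\Ar(C)$ annulus cost exactly once, the total extra cost is $16\varepsilon\Ar(\mathcal{C})\le 16\varepsilon\OPT_{w\times h}(\mathcal{C})$, and the constant $28+16=44$ comes out exactly. Your route buys a self-contained, non-inductive argument (and it is sound: you correctly verify that bad cells must cross circle boundaries rather than lie inside circles, and relocation into fresh bins creates no new conflicts), but it pays $\tfrac{16\varepsilon}{1-\varepsilon^{2r}}>16\varepsilon$; the step where you discard the factor $(1-\varepsilon^{2r})^{-1}$ as negligible is, strictly speaking, false, so as written you obtain $\bigl(1+28\varepsilon+\tfrac{16\varepsilon}{1-\varepsilon^{2r}}\bigr)\OPT_{w\times h}(\mathcal{C})$, marginally above the stated $(1+44\varepsilon)\OPT_{w\times h}(\mathcal{C})$. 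This is cosmetic --- restate the bound with $45\varepsilon$, or adopt the paper's adjacent-level induction to recover $44$ exactly --- but it is precisely the point where the paper's one-payment-per-circle trick earns its keep.
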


\begin{proof}
We start with a solution $D'$ given by Lemma~\ref{lemma:modopt}, and
corresponding bin packings $P'_j$ for each $j$. Without loss of generality, we
assume that for each $B \in P'_j$, there is a circle $i \in S_j$ packed in $B$,
since otherwise we could simply remove $B$ from $P'_j$. We execute the following
steps:
\begin{enumerate}
  \item Let $P''_0 = P'_0$, and $D = D'$;

  \item For each $j\ge 1$:

  \begin{enumerate}

    \item Let $V \subseteq \Gr_{j}(D)$ be the set of circles that intersect
    a circle $C \in S_{j-1}$, but are not contained in $C$.

    \item Create a set $V'$ of $|V|$ new bins of size~$w_j \times h_j$, and
    move the circles in $V$ to $V'$ preserving the arrangement.

    \item Let $P''_j = P'_j \cup V'$, and add $V'$ to $D$.

    \item If $\mathcal{C}\setminus H_t = S_0 \cup \dots \cup S_j$, stop.
  \end{enumerate}
\end{enumerate}

Clearly, the output of this procedure is a bin packing of $\mathcal{C}$, and a
simple induction shows that at the end of iteration $m$, for every $j \le m$,
set $P''_j \subseteq \Gr_j(D)$ is a bin packing of $S_j$, and for every $B \in
P'_j$, $B$ does not intersect any circle in $S_\ell$ for $\ell < j$. Using
Lemma~\ref{lemma:area2}, we obtain
\begin{align*}
\Ar(D)  &\le \textstyle \Ar(D') + \sum_{j\ge0} 16 \varepsilon\Ar(S_j)\\
         &\le (1 + 28\varepsilon)\OPT(\mathcal{C}) + 16 \varepsilon \OPT(\mathcal{C})\\
         &= (1 + 44\varepsilon)\OPT(\mathcal{C}). \qedhere
\end{align*}
\end{proof}

\subsubsection{Obtaining an APTAS}

Now we calculate the cost of the solution generated by Algorithm~1.

\begin{lemma}\label{lemma:aprox}
Algorithm~1 produces a packing of $\mathcal{C}$ into bins of size~$w \times
(1+\gamma)h$ using at most $(1+O(\varepsilon))\OPT_{w\times h}(\mathcal{C}) + 2$
bins.
\end{lemma}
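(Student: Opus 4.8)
The plan is to charge the final bin count to three sources: the medium circles $H_t$ packed in step~\ref{item:nfdh}, the area genuinely occupied by the circles, and the free space left over at termination. Writing $a_j = \TAr(A_j)$, $f_j = \TAr(F_j)$ and $u_j = \TAr(U_j)$, the first ingredient is an area-conservation identity. Every bin of $F_j\cup A_j$ is subdivided into the level-$(j+1)$ grid, and each resulting cell is routed either to $F_{j+1}$ or to $U_j$, so $f_{j+1} + u_j = f_j + a_j$ (using that subdivision preserves total $\TAr$, which holds since the sizes $w_j\times h_j$ tile under the $hr/w$ integrality assumption). Since $A_j$, $F_j$ and $P_j$ consist of equally sized level-$j$ bins, the rule $|A_j|=\max\{|P_j|-|F_j|,0\}$ makes this consistent, and telescoping from $F_0=\emptyset$ gives $\sum_j a_j = \TAr(F_{m+1}) + \sum_j u_j$, where $m$ is the last iteration.

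Second, I would bound $\sum_j u_j$ with Lemma~\ref{lemma:area}. For a circle $C\in S_j$, the level-$(j+1)$ cells meeting $C$ split into those contained in $C$, of total area at most $\Area(C)$, and those meeting its boundary, which contribute $\TAr\le 16\varepsilon\,\Ar(C)$ by Lemma~\ref{lemma:area}. Summing gives $u_j\le(1+16\varepsilon)\Ar(S_j)$, hence $\sum_j u_j\le(1+16\varepsilon)\Ar(\mathcal{C})\le(1+16\varepsilon)\OPT_{w\times h}(\mathcal{C})$, the final step being the area lower bound on the number of bins. The circles of $H_t$ are handled separately by NFDH, a constant-factor area approximation, so from $\Area(H_t)\le\varepsilon\Area(\mathcal{C})\le\varepsilon\,wh\,\OPT_{w\times h}(\mathcal{C})$ they occupy only $O(\varepsilon)\OPT_{w\times h}(\mathcal{C})+O(1)$ bins.

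Third, and this is the heart of the argument, I would bound the residual free space $\TAr(F_{m+1})$ by comparing the run of Algorithm~1 with the nested near-optimal packing $D$ of Lemma~\ref{lemma:modopt2}, in which each $S_j$ is packed into $P'_j\subseteq\Gr_j(D)$ using no bin that meets an earlier $S_\ell$, and $\Ar(D)\le(1+44\varepsilon)\OPT_{w\times h}(\mathcal{C})$. Since $P'_j$ feasibly packs $S_j$ into $|P'_j|$ bins of size $w_j\times h_j$, Theorem~\ref{theorem:packing_large} gives $|P_j|\le(1+\varepsilon)|P'_j|$. I would then prove by induction on $j$ that the algorithm's committed-plus-free area up to level $j$ never exceeds that of $D$ beyond an $O(\varepsilon)\OPT_{w\times h}(\mathcal{C})$ slack: because the algorithm reuses every free sub-bin before opening a new one, and its augmented bins of height $(1+\gamma)h_j$ have the same $\TAr$ as the unaugmented bins of $D$ while offering at least as much reusable waste, the free space it carries forward dominates that of $D$ at every level. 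This yields $\sum_j a_j\le\Ar(D)+O(\varepsilon)\OPT_{w\times h}(\mathcal{C})\le(1+O(\varepsilon))\OPT_{w\times h}(\mathcal{C})$.

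Finally, I would assemble the count. The bins $A_0$ already have size $w\times(1+\gamma)h$, and the sizes $w_j\times(1+\gamma)h_j$ nest (each is an integer subdivision of the previous), so step~\ref{item:fim-alg} combines all $A_j$ into at most $\sum_j a_j + 1$ bins of size $w\times(1+\gamma)h$; adding the $O(\varepsilon)\OPT_{w\times h}(\mathcal{C})+1$ bins for $H_t$ gives $(1+O(\varepsilon))\OPT_{w\times h}(\mathcal{C})+2$. I expect the main obstacle to be the inductive free-space domination of the third step: one must track, level by level, that opening new bins only when $|P_j|>|F_j|$, together with the per-level near-optimality $|P_j|\le(1+\varepsilon)|P'_j|$, never lets the greedy online algorithm fall behind the offline nested solution $D$, and one must absorb both the $(1+\varepsilon)$ per-level loss and the grid-fitting discrepancies into the single $O(\varepsilon)\OPT_{w\times h}(\mathcal{C})$ term.
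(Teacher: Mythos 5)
Your scaffolding coincides with the paper's: the conservation identity $\TAr(F_m)=\sum_{j<m}\TAr(A_j)-\sum_{j<m}\TAr(U_j)$, the cell-waste bound from Lemma~\ref{lemma:area}, the nested modified optimum $D$ of Lemma~\ref{lemma:modopt2} together with $|P_j|\le(1+\varepsilon)|P'_j|$ from Theorem~\ref{theorem:packing_large}, and the final rounding that produces the additive~$2$. But the step you yourself call the heart of the argument has a genuine gap. The invariant you propose --- that ``the free space the algorithm carries forward dominates that of $D$ at every level'' --- is neither what is needed nor true in general. Because the algorithm opens bins lazily ($|A_j|=\max\{|P_j|-|F_j|,0\}$), its committed area up to level $j$ is typically \emph{smaller} than $\Ar(D)$, and consequently so is its leftover free space; domination fails exactly in the benign case, and when the algorithm has committed as much as $D$ you are already in the situation you are trying to rule out. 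Relatedly, pairing the standalone bound $\sum_j\TAr(U_j)\le(1+16\varepsilon)\Ar(\mathcal{C})$ with the conservation identity cannot close the argument: $\Ar(\mathcal{C})$ may be far below $\OPT_{w\times h}(\mathcal{C})$ (circles do not tile), and $\TAr(F_{m+1})$ need not be $O(\varepsilon)\OPT_{w\times h}(\mathcal{C})$ --- already when only $S_0$ is nonempty, the terminal free space is a constant fraction of the opened bins.

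The mechanism that actually closes the induction requires no comparison of free spaces, because the free space \emph{cancels} at exactly the moments that matter. If $|A_{m+1}|=0$, nothing new is committed. If $|A_{m+1}|>0$, then $\TAr(A_{m+1})+\TAr(F_{m+1})=\TAr(P_{m+1})$, so the conservation identity gives
\[
\textstyle\sum_{j=0}^{m+1}\TAr(A_j)\;=\;\sum_{j=0}^{m}\TAr(U_j)+\TAr(P_{m+1})
\;\le\;(1+16\varepsilon)\sum_{j=0}^{m}\Ar(S_j)+(1+\varepsilon)\Ar(P'_{m+1}),
\]
using Lemma~\ref{lemma:area} and $\TAr(P_{m+1})\le(1+\varepsilon)\Ar(P'_{m+1})$. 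Now the extra property of Lemma~\ref{lemma:modopt2} enters exactly once and is indispensable: the bins of $P'_{m+1}$ intersect no circle of $S_0,\dots,S_m$, so inside $D$ those circles and those bins occupy disjoint regions, whence $\sum_{j=0}^{m}\Ar(S_j)+\Ar(P'_{m+1})\le\Ar(D)$. This yields $\sum_{j=0}^{m+1}\TAr(A_j)\le(1+\varepsilon)\Ar(D)+15\varepsilon\sum_{j=0}^{m+1}\Ar(S_j)\le(1+O(\varepsilon))\OPT_{w\times h}(\mathcal{C})$, after which your treatment of $H_t$ and the two ceilings goes through as written. In short, you assembled the right ingredients, but the inductive step must be driven by this cancellation-plus-disjointness argument; the free-space domination you appeal to is unprovable (indeed false), so as it stands the proposal does not establish the lemma.
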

\begin{proof}
Let $D$ be the packing of $\mathcal{C}$ obtained from Lemma~\ref{lemma:modopt2},
and let sets $P'_j$, $j \ge 0$, be the packings obtained for each $S_j$. Notice
that for each $j\ge 0$, $\OPT_{w_j\times h_j}(S_j) \le  |P'_j|$. Recall that
Algorithm~1 uses Theorem~\ref{theorem:packing_large} to obtain a packing $P_j$
of $S_j$ with $|P_j| \le (1+\varepsilon)\OPT_{w_j\times h_j}(S_j)$ bins.
Therefore, $|P_j| \le (1+\varepsilon)|P'_j|$, and thus $\TAr(P_j)\le
(1+\varepsilon)\Ar(P'_j)$.

First, we show, by induction on $m$, that for every $m \ge 0$, we have
$\TAr(F_m) = \sum_{j = 0}^{m-1} \TAr(A_j) - \sum_{j = 0}^{m-1} \TAr(U_j)$. For
$m = 0$, the claim is clear, so suppose that the claim is true for some $m\ge
0$. We have,
\begin{align*}
\TAr(F_{m+1}) &= \TAr(F_m \cup A_m) - \TAr(U_m) \\
             &= \TAr(F_m) + \TAr(A_m) - \TAr(U_m) \\
             &= \textstyle\sum_{j = 0}^{m-1} \TAr(A_j) - \sum_{j = 0}^{m-1} \TAr(U_j) + \TAr(A_m) - \TAr(U_m) \\
             &= \textstyle\sum_{j = 0}^m \TAr(A_j) - \sum_{j = 0}^{m} \TAr(U_j).
\end{align*}

Now, we show that  for every $m \ge 0$, we have $\sum_{j=0}^m \TAr(A_j) \le
(1+\varepsilon)\Ar(D) + 15\varepsilon \sum_{j=0}^m \Ar(S_j)$. Again, by
induction on $m$. The case $m = 0$ is clear, so suppose that this is true for
some $m\ge 0$. We consider two cases. If $|A_{m+1}| = 0$, then by the induction
hypothesis we have
\begin{align*}
\textstyle \sum_{j=0}^{m+1} \TAr(A_j) &=   \textstyle
                             \sum_{j=0}^m \TAr(A_j)\\
                          &\le  \textstyle
                             (1+\varepsilon)\Ar(D) + 15\varepsilon \sum_{j=0}^m \Ar(S_j)\\
                          &\le  \textstyle
                             (1+\varepsilon)\Ar(D) + 15\varepsilon \sum_{j=0}^{m+1} \Ar(S_j).
\end{align*}
If $|A_{m+1}| > 0$, then we know that
that $\TAr(A_{m+1}) +\TAr(F_{m+1}) = \TAr(P_{m+1})$, so
we get
\begin{align*}
\sum_{j=0}^{m+1} \TAr(A_j) &= \sum_{j=0}^m \TAr(A_j) + \TAr(A_{m+1}) \\
                          &= \left(\sum_{j = 0}^{m} \TAr(U_j) + \TAr(F_{m+1})\right)
                            + \left( \TAr(P_{m+1}) - \TAr(F_{m+1}) \right) \\
                          &\le (1+ 16\varepsilon) \sum_{j = 0}^m\Ar(S_j) + (1+\varepsilon)\Ar(P'_{m+1})\\
                          &= (1+\varepsilon)\left( \sum_{j = 0}^m\Ar(S_j)+ \Ar(P'_{m+1})\right)
                          + 15 \varepsilon \sum_{j = 0}^{m}\Ar(S_j)\\
                          &\le (1+\varepsilon)\Ar(D)
                          + 15 \varepsilon \sum_{j = 0}^{m+1}\Ar(S_j).
\end{align*}
The first inequality comes from Lemma~\ref{lemma:area},
and the second inequality comes from the fact that bins in $P'_{m+1}$ do not
intersect circles in $S_1, \dots\, S_m$.

It follows that the total area of bins used for circles in $\mathcal{C}\setminus H_t$ is
\begin{align*}
\textstyle \sum_{j\ge 0} \TAr(A_j)
      &\le \textstyle (1+\varepsilon)\Ar(D) + 15 \varepsilon \sum_{j\ge0}\Ar(S_j)\\
      &\le (1+\varepsilon)(1+44\varepsilon)\OPT(\mathcal{C}) + 15 \varepsilon \OPT(\mathcal{C})\\
      &\le (1+105\varepsilon)\OPT(\mathcal{C}).
\end{align*}

For the circles in $H_t$, we used the NFDH algorithm to pack the bounding boxes
of the circles, thus the density of the packing in each used bin of size~$w
\times (1+\gamma)h$, with the exception of the last, is at least $1/4$. Since
each circle occupies a fraction of $\pi/4$ of its bounding box, the total number
of bins used for $H_t$ is bounded by
\[
  \lceil 4\cdot4/\pi \TAr(H_t) \rceil \le \lceil 4\cdot4/\pi \varepsilon \TAr(\mathcal{C}) \rceil
\le \lceil 16/\pi \varepsilon \OPT(\mathcal{C}) \rceil.
\]

Therefore, noticing that the sets of bins $A_0,A_1, \dots$ respects $w \times
(1+\gamma)h$, we obtain that the total number of bins of size~$w \times
(1+\gamma)h$ used to pack $\mathcal{C}$ is
\[
\lceil (1+105\varepsilon)\OPT(\mathcal{C}) \rceil + \lceil 16/\pi \varepsilon \OPT(\mathcal{C}) \rceil
\le (1+111\varepsilon)\OPT(\mathcal{C}) + 2.\qedhere
\]
\end{proof}

Next lemma shows that running time of Algorithm~1 is polynomial if the ratio of
a bin's dimensions is bounded by a constant.

\begin{lemma}\label{lemma:time}
Suppose $h/w = O(1)$. For any given constant $r$, and number $\gamma > 0$, the running
time of Algorithm~1 is polynomial.
\end{lemma}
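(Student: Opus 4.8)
The plan is to bound separately (i) the number of iterations of the main loop in step~\ref{item:stepj}, (ii) the work performed in a single iteration, and (iii) the bit-size of every number manipulated, and then to observe that the preprocessing (steps~\ref{item:eps}--\ref{item:fzero}) and the final packing (step~\ref{item:fim-alg}) are dominated by these. Throughout, $\varepsilon = 1/r$ and the index $t \in \{0,\dots,r-1\}$ chosen in step~\ref{item:ht} are constants, and $h/w = O(1)$ by hypothesis. I would first bound the number of iterations: the loop runs until every circle is packed, so the number of iterations is one more than the largest $j$ with $S_j \neq \emptyset$. A nonempty group $G_i$ can exist only for $i = O(\log(w/r_{\min})/\log(1/\varepsilon))$, where $r_{\min}$ is the smallest radius; since $w, r_{\min} \in \Q_+$, this is $O(L)$ with $L$ the input bit-size, and hence the number of iterations is $O(L/r) = O(L)$, polynomially many.

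For the per-iteration cost, the central point is that step~\ref{item:stepj-teo} runs in polynomial time. Rescaling the subproblem for $S_j$ by $1/w_j$ turns the bins into $1 \times (h_j/w_j)$ rectangles of constant aspect ratio ($h_j = w_j$ for $j \ge 1$, and $h_0/w_0 = h/w = O(1)$), while the scaled diameter of every circle of $S_j$ lies in the constant range $(\varepsilon^{2r-1}, \varepsilon]$ for $j \ge 1$ (and $(\varepsilon^{2t}, 1]$ for $j = 0$). Thus the hypotheses of Theorem~\ref{theorem:packing_large} hold with a constant $\delta$, and its running time, which through Lemma~\ref{lemma:fixed_radii} and Corollary~\ref{theorem:polysystem} depends polynomially on $\log(1/\gamma)$ and on the radii bit-sizes, is polynomial. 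The rest of an iteration builds grids $\TGr_{j+1}$ over the bins of $F_j \cup A_j$; since $w_j/w_{j+1} = h_j/h_{j+1} = \varepsilon^{-2r}$ is constant, each bin yields only $\varepsilon^{-4r} = O(1)$ cells (the coarsest grid, over a $w_0 \times h_0$ bin, has $r^{O(t)}\,h/w = O(1)$ cells), so this work is linear in the number of bins processed.

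It remains to bound the number of bins touched over the whole execution, and here lies the main obstacle. Using $|P_j| \le (1+\varepsilon)\OPT_{w_j\times h_j}(S_j) \le (1+\varepsilon)|S_j|$ (a single circle of $S_j$ fits in a $w_j\times h_j$ bin), the total number of placed bins satisfies $\sum_j |P_j| \le (1+\varepsilon)\sum_j |S_j| \le (1+\varepsilon)n = O(n)$, whence $\sum_j |A_j| = O(n)$ and, by Lemma~\ref{lemma:area}, $\sum_j |U_j| = O(n)$. The difficulty is the free-bin sets $F_j$: the recurrence $|F_{j+1}| \le \varepsilon^{-4r}(|F_j| + |A_j|)$ would make $|F_j|$ grow geometrically across the (possibly $\Omega(L)$ many) empty iterations, giving an exponential bound. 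This is resolved by observing that at most $\sum_{k\ge j}|P_k| = O(n)$ free bins will ever be used in the remaining iterations, so one may cap each $F_j$ at $O(n)$ bins and discard the surplus at no cost, since the discarded sub-bins are merely free space inside already-counted coarser bins. Equivalently, runs of empty $S_j$ are batched: from a nonempty $S_j$ one subdivides directly to the level of the next nonempty $S_{j'}$, materializing only the $|P_{j'}| = O(n)$ sub-bins actually needed. With this implementation every iteration handles $O(n)$ bins and does $\mathrm{poly}(n)$ work.

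Finally, all numbers stay of polynomial bit-size: $w_j = \varepsilon^{2(t+(j-1)r)+1}w$ and the precision $\alpha$ fed to Corollary~\ref{theorem:polysystem} have bit-size $O(L + jr\log(1/\varepsilon)) = O(L)$, and $\log(1/\gamma)$ enters only polynomially. The NFDH packing of $H_t$ in step~\ref{item:nfdh} and the combination of the $O(n)$ bins $A_0, A_1, \dots$, which respect $w\times(1+\gamma)h$, into bins of size $w\times(1+\gamma)h$ in step~\ref{item:fim-alg} are plainly polynomial. Summing over the $O(L)$ iterations, each of cost $\mathrm{poly}(n, L, \log(1/\gamma))$, yields the claimed polynomial running time.
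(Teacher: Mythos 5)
Your proof is correct, and its core coincides with the paper's own argument: both reduce step~(\ref{item:stepj-teo}) to Theorem~\ref{theorem:packing_large} by rescaling the subproblem for $S_j$ by $1/w_j$, observing that the scaled bins have constant aspect ratio ($h_j = w_j$ for $j \ge 1$, and $h_0/w_0 = h/w = O(1)$) and that the scaled radii are bounded below by a constant depending only on $r$ (namely of order $\varepsilon^{2r-1}$), so each invocation is polynomial. Where you genuinely diverge is in the bookkeeping of free cells, which is the other obstruction (the grids $\Gr_j$ over the original bins have exponentially many cells, so the sets $F_j$ cannot be stored naively). The paper never materializes $F_j$ at all: it runs the loop only for nonempty $S_j$, and to place a bin of $P_j$ it searches for an element of $\TGr_j(A_0 \cup \dots \cup A_{j-1})$ not lying in $U_0, \dots, U_{j-1}$, using the fact that a cell can intersect a circle of $S_\ell$ only if it lies among the constantly many cells of $U_\ell$ along that circle's border; hence $O(n)$ candidate cells are inspected per placement and the algorithm itself is left untouched. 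You instead keep $F_j$ explicitly but capped at $O(n)$ entries, arguing that since at most $\sum_{k \ge j} |P_k| \le (1+\varepsilon)n$ free cells can ever be consumed in later iterations, and each retained free cell yields at least one free sub-cell per subsequent level, discarding the surplus never forces the capped implementation to open a bin that the idealized algorithm would not --- so the cost analysis of Lemma~\ref{lemma:aprox} still applies to your modified implementation. This capping argument is sound, but note it is genuinely needed: your modification changes the algorithm, whereas the paper's search trick does not, and a lazier variant that only materializes sub-cells of fully free coarse regions (ignoring free sub-cells inside partially used cells) would \emph{not} preserve the cost bound. On the other hand, your write-up is more explicit than the paper's on two points it leaves implicit: the $O(L)$ bound on the number of levels via the bit-size of the smallest radius, and the polynomial bit-size of the scaled radii and precisions fed to Corollary~\ref{theorem:polysystem}.
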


\begin{proof}
Notice that in an implementation of Algorithm~1, we only run
step~(\ref{item:stepj}) for nonempty sets $S_j$. In each such iteration, we
only need to account for the running time of step~(\ref{item:stepj-teo}), and
the running time to pack the bins of $P_j$ in free space of current bins.

We consider the following equivalent and alternative procedure for
step~(\ref{item:stepj-teo}):
\begin{enumerate}
  \item scale the radius of each circle in $S_j$ by $1/w_j$ and obtain a
  set $S'_j$;

  \item run algorithm of Theorem~\ref{theorem:packing_large} with the scaled
  $S'_j$ and bins of width~$1$ and height ${h_j/w_j \le h/w \in O(1)}$, and
  obtain a packing $P'_j$;

  \item scale the obtained packing $P'_j$ by $w_j$, and obtain a packing
  $P_j$.
\end{enumerate}

Notice that for every $j$, the radius of the smallest circle in $S'_j$ is at
least a constant ${\delta = \varepsilon^{2(t+jr)}w /(2 w_j) =
\varepsilon^{2(t+jr)}w /(2 \varepsilon^{2(t+(j-1)r)+1} w) = \varepsilon^{2r -
1}}$, thus, Theorem~\ref{theorem:packing_large} implies that the total running
time of step~(\ref{item:stepj-teo}) is polynomial.

To pack a bin of $P_j$,  for $j\ge 0$, we need to find one element $\TGr_j(A_0
\cup \dots \cup A_{j-1})$ that is not in $U_0, \dots, U_{j-1}$, that is, we need
to find a grid cell that does not intersect any circle of $S_0 \cup \dots \cup
S_{j-1}$. To verify whether a grid cell intersects a circle of $S_\ell$, with $0
\le \ell \le j-1$, it is enough to list the elements of $U_\ell$ that intersect
the border of the circle. There is at most a constant number of such elements
per circle, so at most $O(n)$ elements are listed until we find one free cell.
Therefore, Algorithm~1 can be implemented in polynomial time.
\myqed\end{proof}

Combining Lemmas~\ref{lemma:aprox} and~\ref{lemma:time} we obtain our main
theorem.

\begin{theorem}\label{theorem:binpack}
Let $(\mathcal{C}, w, h)$ be an instance of the circle bin packing. For any
constant \mbox{$\varepsilon > 0$}, and number $\gamma > 0$, we can obtain in
polynomial time a packing of $\mathcal{C}$ into at most
$(1+\varepsilon)\OPT_{w\times h}(\mathcal{C})+2$ rectangular bins of size~$w
\times (1+\gamma)h$.
\end{theorem}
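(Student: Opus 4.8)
The plan is to obtain the theorem as a bookkeeping combination of Lemma~\ref{lemma:aprox} (the approximation guarantee) and Lemma~\ref{lemma:time} (polynomiality), once the free parameter $r$ of Algorithm~1 is fixed and the two structural assumptions in its description are discharged. First I would set $r$ to be a positive integer multiple of $3$ with $r \ge 111/\varepsilon$; since the target $\varepsilon > 0$ is constant, $r = O(1/\varepsilon)$ is a constant as well. Running Algorithm~1 with this $r$ and the given $\gamma$ uses internal parameter $1/r$, so Lemma~\ref{lemma:aprox} bounds the number of bins of size $w \times (1+\gamma)h$ by $(1 + 111/r)\OPT_{w\times h}(\mathcal{C}) + 2 \le (1+\varepsilon)\OPT_{w\times h}(\mathcal{C}) + 2$, which is precisely the claimed bound.

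Next I would discharge the hypotheses ``$w \le h$'' and ``$hr/w$ integer'' under which Algorithm~1 was stated. The condition $w \le h$ is free: if $w > h$, reflect the instance across the main diagonal, run the algorithm on the reflected instance, and reflect the output packing back; Euclidean distances and the containment constraints are invariant under this reflection, so non-overlap and feasibility are preserved. For the integrality of $hr/w$, I note that in the principal case of square bins ($w = h$, in particular the unit square of the original problem) we have $hr/w = r \in \Z$, so the assumption holds automatically and no modification is needed. For general rectangular bins, I would invoke the argument already given after Algorithm~1: after the first iteration one rounds $h_0$ up to the next multiple of $w/r$ and relocates the at most $\lceil |P_0|/r \rceil \le O(\varepsilon)\OPT(\mathcal{C}) + 1$ top cells created in the extended region, an operation whose cost is absorbed into the $(1+O(\varepsilon))\OPT + O(1)$ accounting.

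Finally I would establish polynomiality. With $r$ fixed to a constant and $\gamma > 0$ given, Lemma~\ref{lemma:time} yields a polynomial running time whenever the bin aspect ratio is bounded, $h/w = O(1)$; this holds in the target applications, and in particular for the unit-square bins where $h/w = 1$. Combining the bin count from Lemma~\ref{lemma:aprox} with this running-time bound then proves the theorem.

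The part requiring the most care is the reconciliation of the additive terms rather than any genuine difficulty: I must check that choosing $r \ge 111/\varepsilon$ collapses the explicit factor $111/r$ of Lemma~\ref{lemma:aprox} into the single target factor $\varepsilon$, and that the rounding of the fractional bin estimates (the ceilings appearing in the proof of Lemma~\ref{lemma:aprox}) together with the WLOG reductions contribute no more than the stated additive constant $2$. Since both contributing lemmas are already established, no new geometric or algebraic argument is needed at this stage.
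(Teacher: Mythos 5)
Your proposal correctly handles the bookkeeping for the case of bounded aspect ratio: choosing $r$ a multiple of $3$ with $r \ge 111/\varepsilon$, invoking Lemma~\ref{lemma:aprox} for the bin count, the reflection argument for $w \le h$, and the remark after Algorithm~1 for the integrality of $hr/w$ are all fine, and this is exactly what the paper dismisses in one line as the ``immediate'' case ($h/w < 1/\varepsilon^2$). The genuine gap is that you have silently restricted the theorem to instances with $h/w = O(1)$. The theorem is stated for an arbitrary instance $(\mathcal{C}, w, h)$ with $w, h \in \Q_+$; since these are encoded in binary, the aspect ratio $h/w$ can be exponential in the input size, so it cannot be treated as a constant. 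Your appeal to ``the target applications, and in particular the unit-square bins where $h/w = 1$'' proves a weaker statement, not the one claimed. Moreover, the dependence on $h/w = O(1)$ in Lemma~\ref{lemma:time} is not cosmetic: it is inherited from Theorem~\ref{theorem:packing_large} and Lemma~\ref{lemma:fixed_radii}, whose enumeration of configurations is polynomial only because the number $M = \lceil wh/\Area(\delta)\rceil$ of large circles fitting in one bin is constant; for a very tall bin this blows up, so the algorithm itself, not just its analysis, fails.

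What is missing is precisely the content of the paper's proof: a reduction from unbounded to bounded aspect ratio. For $h/w \ge 1/\varepsilon^2$, the paper cuts each bin of an optimal solution into sub-bins of size $w \times w/\varepsilon$, removes the circles crossing the cuts (their total area is at most $2\varepsilon wh\,|Opt|$ since each crossing circle lies in a strip of height $2w$ around a cut line), repacks them via bounding boxes and NFDH into few extra sub-bins, and concludes $\OPT_{w\times w/\varepsilon}(\mathcal{C}) \le (1+O(\varepsilon))\,|Opt|\,(h/w)\,\varepsilon$. It then runs Algorithm~1 with the short bins $w \times w/\varepsilon$ (aspect ratio $1/\varepsilon = O(1)$, so Lemma~\ref{lemma:time} applies), and finally stacks groups of $\lfloor h/(w/\varepsilon)\rfloor$ short bins back into bins of size $w \times (1+\gamma)h$, checking that the division by $\lfloor h/(w/\varepsilon)\rfloor \ge 1/\varepsilon$ keeps both the multiplicative loss at $1+O(\varepsilon)$ and the additive term at $2$. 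Without this splitting-and-restacking argument (or some substitute for it), your proof does not establish the theorem as stated.
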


\begin{proof}
If $h/w < 1/\varepsilon^2 \in O(1)$, then the theorem is immediate, so assume
$h/w \ge 1/\varepsilon^2$.
Consider an optimal solution $Opt$ of bins of size~$w \times h$. We will
transform this solution into a packing of bins of size~$w \times w/\varepsilon$.
First, split each bin of $Opt$ in sub-bins of size~$w \times w/\varepsilon$.
Then, remove all circles that intersect  consecutive sub-bins. The total area of
removed circles is at most $|Opt| (w\cdot 2w) \lfloor h/(w/\varepsilon) \rfloor
\le |Opt| 2 wh \varepsilon $. Finally, place the removed circles into their
bounding boxes and pack them into additional bins of size~$w \times
w/\varepsilon$ using the NFDH strategy. Since each additional bin has density of
at least $\pi/16$ (with exception of the last), the number of such bins is
bounded by $\lceil (16/\pi\, |Opt| 2 wh \varepsilon)/(w(w/\varepsilon)) \rceil
$. Therefore, we know that $\OPT_{w\times w/\varepsilon}(\mathcal{C})$ is
bounded by $ |Opt|\lceil h/(w/\varepsilon)\rceil + \lceil (16/\pi\, |Opt| 2 wh
\varepsilon)/(w(w/\varepsilon)) \rceil \le (1+O(\varepsilon))|Opt|(h/w) \,
\varepsilon$, where we have used $h/w \ge 1/\varepsilon^2$ in the inequality.

Now, we use Lemma~\ref{lemma:aprox} and obtain a packing of $\mathcal{C}$ into
bins of size~$w \times (1+\gamma)w/\varepsilon$ of cost at most
\[
(1+O(\varepsilon))\OPT_{w\times w/\varepsilon}(\mathcal{C}) + 2
\le
(1+O(\varepsilon))|Opt|(h/w) \, \varepsilon+2.
\]
By joining each group of $\lfloor h /(w/\varepsilon) \rfloor$ bins, we obtain a
packing into bins of size~$w \times (1+\gamma)h$ of cost at most
\[
\left\lceil \frac{(1+O(\varepsilon))|Opt|(h/w) \, \varepsilon+2}{
\lfloor h /(w/\varepsilon) \rfloor}
\right\rceil
\le
(1+O(\varepsilon))|Opt| + 2,
\]
where the inequality follows from the fact that $h/w \ge 1/\varepsilon^2$, and
assuming $\varepsilon$ sufficiently small. To complete the proof, it is enough
to notice that the running time is given by Lemma~\ref{lemma:time}.
\myqed\end{proof}

Now, it is straightforward to extend Theorem~\ref{theorem:binpack} to the circle
strip-packing.

\begin{theorem}\label{theorem:strip}
Let $\mathcal{C}$ be a set of circles. For any given constant $\varepsilon > 0$,
we can obtain in polynomial time a packing of $\mathcal{C}$ in a strip of unit
width and height $(1+\varepsilon)\mathrm{OPTS}(\mathcal{C})+O(1/\varepsilon)$,
where $\mathrm{OPTS}(\mathcal{C})$ is the height of the minimum packing of
$\mathcal{C}$ in a strip of unit width.
\end{theorem}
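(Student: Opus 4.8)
The plan is to reduce strip packing to the bin packing of Theorem~\ref{theorem:binpack} by packing into tall bins and stacking them. I set $T := 1/\varepsilon$, use bins of width $w = 1$ and height $h = T$, and fix the augmentation parameter $\gamma = \varepsilon$. Every circle has radius at most $1/2$ (it must fit in the unit-width strip), so it fits in such a bin; moreover $h/w = T = O(1)$ for constant $\varepsilon$, so Theorem~\ref{theorem:binpack} produces in polynomial time a packing of $\mathcal{C}$ into $N \le (1+\varepsilon)\OPT_{1 \times T}(\mathcal{C}) + 2$ bins of size $1 \times (1+\gamma)T$. Stacking these $N$ bins one above another yields a packing of $\mathcal{C}$ into a strip of unit width and height $N(1+\gamma)T$.

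The key step is to bound $\OPT_{1\times T}(\mathcal{C})$ by an $O(\varepsilon)$ fraction of $H^* := \mathrm{OPTS}(\mathcal{C})$. First I would take an optimal strip packing of height $H^*$ and cut it into $\lceil H^*/T\rceil = \lceil \varepsilon H^*\rceil$ horizontal slabs of height $T$. Since every circle has diameter at most $1 \ll T$, each circle crosses at most one of the horizontal cut lines, and all circles crossing a fixed line are contained in a band of height at most $2$ and width $1$; hence their total area is at most $2$ per line, and at most $2\lceil \varepsilon H^*\rceil$ over all cut lines. After discarding the crossing circles, each slab contains a valid packing into a single bin of size $1\times T$, so the surviving circles occupy at most $\lceil \varepsilon H^*\rceil$ bins. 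The discarded circles, of total area at most $2\lceil \varepsilon H^*\rceil$, are placed in their bounding boxes and packed by the NFDH strategy into further bins of size $1\times T$; as each filled bin reaches a constant density, only $O(\varepsilon^2 H^*)+O(1)$ extra bins are required. Altogether $\OPT_{1\times T}(\mathcal{C}) \le \varepsilon H^*(1+O(\varepsilon)) + O(1)$.

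Combining the two estimates with $\gamma = \varepsilon$, the height of the resulting strip is
\begin{align*}
N(1+\gamma)T
 &\le \bigl[(1+\varepsilon)\bigl(\varepsilon H^*(1+O(\varepsilon))+O(1)\bigr)+2\bigr]\,(1+\varepsilon)/\varepsilon\\
 &= (1+O(\varepsilon))H^* + O(1/\varepsilon).
\end{align*}
Replacing $\varepsilon$ by $\varepsilon/c$ for a suitable constant $c$ converts the factor $(1+O(\varepsilon))$ into $(1+\varepsilon)$, giving a packing of height $(1+\varepsilon)\mathrm{OPTS}(\mathcal{C})+O(1/\varepsilon)$. The procedure is polynomial, as it calls the algorithm of Theorem~\ref{theorem:binpack} once and otherwise performs only cutting, NFDH, and stacking in linear time.

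I expect the main obstacle to be the area accounting in the second paragraph. The reduction attains a $(1+\varepsilon)$ guarantee precisely because the bins are chosen tall, of height $T=1/\varepsilon$: this makes the area lost to circles straddling slab boundaries an $O(\varepsilon)$ fraction of the total, whereas unit-height bins would lose a constant factor. One must also verify that the per-bin augmentation $\gamma$ contributes only a multiplicative $(1+\gamma)$ factor to the final height rather than accumulating additively across the stack, which is why the single choice $\gamma=\varepsilon$ is enough.
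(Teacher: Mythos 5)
Your proof is correct and is essentially the argument the paper intends: the paper states Theorem~\ref{theorem:strip} with no written proof, calling it a ``straightforward'' extension of Theorem~\ref{theorem:binpack}, and your reduction---bins of height $T=1/\varepsilon$, slab-cutting an optimal strip packing with an area/NFDH argument for the circles crossing cut lines, then stacking the resulting bins---is exactly that extension, mirroring the slab-cutting and density arguments the paper itself uses inside the proof of Theorem~\ref{theorem:binpack}. The additive $O(1/\varepsilon)$ term you obtain (the constant number of extra bins, each of height $(1+\varepsilon)/\varepsilon$) matches the form of the theorem's bound, confirming this is the intended route.
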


\section{A Resource Augmentation Scheme for Circle Bin Packing}
\label{sec:resource}

In this section, we discuss the use of augmented bins by our algorithm. The
reason to use bins of height $1+\gamma$ is that the algebraic algorithms used in
Section~\ref{sec:algebraic} only give approximate packing with rational
coordinates, and thus we enlarge the size of the bin to avoid intersections.
Other approaches could be considered to obtain approximate packings, as well.
One alternative would be discretizing the possible locations for circle centers.
The idea is to create a grid of points, for which the distance between two
adjacent locations is $\gamma$, for some small $\gamma > 0$. Then, we could
modify an optimal solution so that the center of each center is moved to the
closest point in the grid, and obtain an $O(\gamma)$-packing. To solve the
problem of packing a constant-sized set of circles in an augmented bin, we could
simply try all possible combinations, and check whether we obtain an
$O(\gamma)$-packing.

Notice that this discretization algorithm does not provide a certificate that
there exists a packing into the original (non-augmented) bin, that is, we may
obtain an approximate packing for a set of circles that cannot be packed in a
bin. When using a quantifier elimination algorithm, we obtain such a
certificate, and the need for approximate packings is due to numeric reasons
only.
Moreover, algebraic algorithms have a much weaker dependency on the size of the
resource augmentation parameter, $\log 1/\gamma$. Namely, algebraic algorithms
have time complexity $O((\log 1/\gamma)^{O(1)})$, while discretization
algorithms would have time complexity $O((1/\gamma)^{O(1)})$. This means that,
if $\gamma$ is part of the input, then the algebraic algorithms would be
polynomial, while the discretization algorithms would not.

Again, the sole reason for having augmented bins in
Theorem~\ref{theorem:binpack} is that we insisted in obtaining a rational
solution. Thus, on the one hand, if we allowed a more general model of
computation, in which one could compute and operate over polynomial solutions,
we would obtain an APTAS for the circle bin packing problem without resource
augmentation. On the other hand, if we admit resource augmentation, and consider
the particular case that a bin is enlarged by a constant size, then we may
strengthen Theorem~\ref{theorem:binpack}, and obtain a packing with no more bins
than in an optimal solution, as in the following result.

\begin{theorem}\label{theorem:binpack_resource}
Let $(\mathcal{C}, w, h)$ be an instance of the circle bin packing. For any
given constant~${\varepsilon > 0}$, we can obtain in polynomial time a
packing of $\mathcal{C}$ into at most $\OPT_{w\times h}(\mathcal{C})$
rectangular bins of size~$w \times (1+\varepsilon)h$.
\end{theorem}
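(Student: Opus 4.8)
The plan is to split the circles by size relative to the constant augmentation and to treat the two parts with different techniques, spending the extra height $\varepsilon h$ as a global budget. As in Algorithm~1 we may assume $w\le h$. Fix a small constant $\eta>0$ (to be chosen as a function of $\varepsilon$), and call a circle \emph{large} if $2r_i>\eta h$ and \emph{small} otherwise. A bin of size $w\times h$ then contains at most $M=O(1/\eta^{2})$ large circles, a constant. The guiding observation is that the multiplicative $(1+\varepsilon)$ loss and the additive $O(1)$ loss of Theorem~\ref{theorem:binpack} both come from boundary effects that, for a \emph{bounded} number of items per bin, can instead be absorbed into a constant amount of extra height.

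First I would pack the large circles into at most $\OPT_{w\times h}(\mathcal{C})$ bins \emph{exactly}, using the extra height to repair a rounding step. Round every large radius \emph{down} to the nearest value on a geometric grid of ratio $1+\eta$ between $\eta h/2$ and $\min\{w,h\}/2$; this leaves only $K=O(\log(1/\eta)/\eta)=O(1)$ distinct radii, and since the rounded circles are no larger, the rounded instance satisfies $\OPT_{w\times h}(\mathrm{rounded})\le\OPT_{w\times h}(\mathcal{C})$. Lemma~\ref{lemma:fixed_radii} then packs the rounded large circles into at most $\OPT_{w\times h}(\mathcal{C})$ bins of height $(1+\gamma)h$ for an arbitrarily small $\gamma$. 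Restoring the true radii inflates each radius by a factor at most $1+\eta$; since $2r_i\le\min\{w,h\}$ gives $r_i+r_j\le h$ for any two circles in a bin, the computed centers satisfy $\dist(p_i,p_j)\ge(r_i+r_j)-\eta h$, i.e. they form an $\eta h$-packing of the true circles. With at most $M$ circles per bin, Lemma~\ref{lemma:augmentation} converts this into a genuine packing in a bin of height $(1+M\sqrt{6\eta})h$. Choosing $\eta$ so that $\gamma+M\sqrt{6\eta}\le\varepsilon/2$ yields a packing of all large circles into $\le\OPT_{w\times h}(\mathcal{C})$ bins of size $w\times(1+\varepsilon/2)h$, with no multiplicative and no additive overhead.

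It remains to place the small circles into the space still available in these $\le\OPT_{w\times h}(\mathcal{C})$ bins. Summed over the bins, the free interior area of the height-$h$ part equals $\OPT_{w\times h}(\mathcal{C})\,wh$ minus the total large-circle area, which is at least the total small-circle area; in addition we reserve, in each bin, a top strip of height $(\varepsilon/2)h$, contributing a further area of $(\varepsilon/2)\OPT_{w\times h}(\mathcal{C})\,wh$ to act as slack. To realize a packing, I would carve the free region of each large-circle bin into a grid of small sub-bins (as in steps~(\ref{item:setsS})--(\ref{item:stepj}) of Algorithm~1) and pack the small circles recursively, using the area bounds of Lemmas~\ref{lemma:area} and~\ref{lemma:area2} to control waste near boundaries. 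Because every small circle has diameter at most $\eta h\le\varepsilon h$, each one fits inside a single top strip, so any circle the recursion cannot seat in the interior may be diverted to the strips; the strip area budget guarantees this diversion never forces the height past $(1+\varepsilon)h$ and never opens a new bin.

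The hard part is exactly this last step: bounding the \emph{waste} of the small-circle packing so that it is charged against the constant augmentation rather than against new bins. The interior free space is an irregular region bounded by circular arcs, so one must show that the recursive sub-bin packing loses only an $O(\eta)$-fraction of the area near the large circles' boundaries (via Lemma~\ref{lemma:area2}) and an $O(\varepsilon)$-fraction near the sub-bin grid lines, and that these losses, together with the single partially filled sub-bin per recursion level, are dominated by the reserved strip area $(\varepsilon/2)\OPT_{w\times h}(\mathcal{C})\,wh$. Making this accounting tight, and checking that the diverted boundary-crossing small circles respect non-overlap inside the strips, is where the main technical effort lies; the polynomial running time is then immediate from Lemma~\ref{lemma:fixed_radii} and the implementation already described in Lemma~\ref{lemma:time}.
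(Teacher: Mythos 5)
Your large-circle step, which otherwise mirrors the paper's strategy (round radii down to finitely many values, apply Lemma~\ref{lemma:fixed_radii}, restore the radii, and lift via Lemma~\ref{lemma:augmentation}), contains a parameter bug that makes it impossible as stated: you tie the rounding granularity to the largeness threshold. Large circles have $2r_i>\eta h$, so a unit-width bin can actually hold $M=\Theta(1/\eta^{2})$ of them, and geometric rounding of ratio $1+\eta$ yields, after restoring radii, an $\eta h$-packing; Lemma~\ref{lemma:augmentation} then inflates the height by $M\sqrt{6\eta}=\Theta(\eta^{-3/2})$, which \emph{grows} as $\eta$ shrinks. In particular, for any $\eta\le\varepsilon$ one has $M\sqrt{6\eta}=\Omega(\varepsilon^{-3/2})$, so your condition ``$\gamma+M\sqrt{6\eta}\le\varepsilon/2$'' can never be satisfied. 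The paper's Lemma~\ref{lemma:packing_large_resource} avoids this by decoupling the two scales: with $M$ already fixed by the threshold $\delta$, it rounds radii down to an \emph{arithmetic} grid of step $\alpha=\varepsilon^{2}/(6M^{2})$, so the resulting $2\alpha$-packing lifts to height $(1+O(\varepsilon))h$ regardless of $M$, while the number of distinct radii stays constant.

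The deeper gap is structural and sits exactly in the step you defer as ``the hard part.'' With a single threshold there is no size gap between the smallest large circles (diameter just above $\eta h$) and the largest small circles (diameter just below $\eta h$). Lemma~\ref{lemma:area2}, which you invoke to control boundary waste, applies only when the grid cells have side at most an $\varepsilon$-fraction of the diameter of the circle they touch; but your cells must also be able to host the small circles, hence have side at least $\eta h$, comparable to the smallest large circles. In that regime the cells ruined by the boundary of a large circle can have total area a \emph{constant} fraction of that circle's area, so the usable free space falls short of the small circles' area by up to $\Theta(\OPT_{w\times h}(\mathcal{C})\,wh)$, and the circles you must divert to the strips overwhelm the reserved budget of $(\varepsilon/2)\OPT_{w\times h}(\mathcal{C})\,wh$. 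Hence the sentence ``the strip area budget guarantees this diversion never forces the height past $(1+\varepsilon)h$'' is not merely unproven: without further structure it is false. The paper closes this hole by keeping the entire machinery of Algorithm~1 rather than a one-shot large/small split: step~\ref{item:ht} removes a medium band $H_t$ of area at most $2\varepsilon\Area(\mathcal{C})$ precisely to create the geometric size gaps that make Lemma~\ref{lemma:area2} valid at every level; $H_t$ itself is packed by NFDH into strips of height $11\varepsilon h$ placed atop at most $\OPT_{w\times h}(\mathcal{C})$ bins (its circles have radius at most $\varepsilon h$ since one may take $t>0$); each level of ``currently large'' circles is packed by Lemma~\ref{lemma:packing_large_resource} with no $(1+\varepsilon)$ factor; and, instead of forcing small circles into the free space of a pre-committed packing, the $(1+105\varepsilon)$ multiplicative \emph{area} overhead inherited from the proof of Lemma~\ref{lemma:aprox} is absorbed by repacking the containers $A_0,A_1,\dots$ into bins of height $(1+105\varepsilon)(1+\varepsilon)h$ by first-fit decreasing, whose bin count is $\max\{|A_0|,\;\lceil\sum_{j\ge0}\Area(A_j)/((1+105\varepsilon)(1+\varepsilon)wh)\rceil\}\le\OPT_{w\times h}(\mathcal{C})$. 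In short, the extra height must pay for a globally accounted $O(\varepsilon)$ area overhead of the full hierarchical algorithm, not for an ad hoc overflow of a two-class decomposition.
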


A few sources are responsible for the additional bins used in the algorithm of
Theorem~\ref{theorem:binpack}. In Theorem~\ref{theorem:packing_large}, which was
used for the packing of large circles, we spent additional bins for the first
group of $Q$ circles. In Algorithm~1, the elements in the set of intermediate
circles $H_t$ are packed in separate new bins. Also, we wasted the space of
sub-bins that partially intersected large circles. In the following, we will
discuss how to slightly modify our algorithm, and avoid the use of additional
bins for each case. This will be done by augmenting the bin with small strips of
height $O(\varepsilon)h$, thus obtaining Theorem~\ref{theorem:binpack_resource}.

First we consider an alternative approach for the packing of large circles of
Theorem~\ref{theorem:packing_large}. Instead of mapping large circles to small
circles, we simply round down the radius of each circle, and obtain an
approximate packing.

\begin{lemma}\label{lemma:packing_large_resource}
Let $(\mathcal{C}, w, h)$ be an instance of the circle bin packing, such that
$w, h \in O(1)$, and $\min_{1\le i\le n} r_i \ge \delta$, for some constant
$\delta$.
For any given constant $\varepsilon > 0$, there is a polynomial-time
algorithm that packs $\mathcal{C}$ into at most $\OPT_{w\times h}(\mathcal{C})$
rectangular bins of size~$w \times (1+\varepsilon)h$.
\end{lemma}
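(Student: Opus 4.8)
The plan is to mirror the proof of Theorem~\ref{theorem:packing_large}, but to replace the ``grouping'' reduction (which rounds radii \emph{up} and therefore spends $Q$ extra bins on the first group) by a \emph{rounding-down} of radii, and to absorb the resulting discrepancy into the height augmentation via Lemma~\ref{lemma:augmentation}. The crucial gain is a monotonicity property: if $\mathcal{C}'$ is obtained from $\mathcal{C}$ by only decreasing radii, then any packing of $\mathcal{C}$ is also a packing of $\mathcal{C}'$ (place each smaller circle at the center of its larger counterpart), so $\OPT_{w\times h}(\mathcal{C}') \le \OPT_{w\times h}(\mathcal{C})$, with \emph{no} additive loss.

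Concretely, I first fix small constants $\beta$ and $\alpha$ (chosen at the end) and define $\mathcal{C}'$ by rounding each radius $r_i$ down to the nearest integer multiple of $\beta$, writing $r_i' \le r_i$. Since $r_i - r_i' \le \beta$ and $r_i \ge \delta$, choosing $\beta \le \delta/2$ ensures $r_i' \ge \delta/2$, so $\mathcal{C}'$ is a valid instance with radii bounded below by the constant $\delta/2$ and with at most $K = O(\min\{w,h\}/\beta)$ distinct radii, a constant. As $M = \lceil wh/\Area(\delta/2)\rceil$ is still constant, I enumerate all configurations of $\mathcal{C}'$, decide feasibility with Corollary~\ref{theorem:polysystem}, and solve the covering integer program exactly as in Lemma~\ref{lemma:fixed_radii}. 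This packs all of $\mathcal{C}'$ using exactly $\OPT_{w\times h}(\mathcal{C}') \le \OPT_{w\times h}(\mathcal{C})$ bins, together with rational centers within $\alpha$ per coordinate of a true real-valued packing of the rounded circles.

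Next I reinterpret these centers as a placement of the \emph{original} circles $\mathcal{C}$. For any two circles the true rounded packing gives $\dist(p_i,p_j) \ge r_i' + r_j'$, so the approximate rational centers satisfy $\dist(p_i',p_j') \ge r_i' + r_j' - 4\alpha \ge r_i + r_j - (2\beta + 4\alpha)$, and the border constraints degrade by the same additive amount. Hence, setting $\varepsilon'' = (2\beta + 4\alpha)/h$, this placement is an $\varepsilon'' h$-packing of $\mathcal{C}$ in each bin of size $w\times h$. Applying Lemma~\ref{lemma:augmentation} to every bin converts it into a genuine non-overlapping packing of $\mathcal{C}$ into bins of height $(1 + M\sqrt{6\varepsilon''})h$ without changing the number of bins. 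It therefore suffices to choose $\beta$ and $\alpha$ so small that $M\sqrt{6\varepsilon''} \le \varepsilon$, for instance $2\beta + 4\alpha \le \varepsilon^2 h/(6M^2)$, which is achievable with $\beta,\alpha$ constant. The resulting packing uses at most $\OPT_{w\times h}(\mathcal{C})$ bins of size $w\times(1+\varepsilon)h$, and runs in polynomial time since $K$, $M$, and the number of configurations are constant and the integer program has fixed dimension.

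I expect the only real subtlety to be the bookkeeping that combines the two independent error sources, the radius rounding $\beta$ and the coordinate approximation $\alpha$, into a single $\varepsilon'' h$-packing whose Lemma~\ref{lemma:augmentation} inflation stays below $\varepsilon$; once the monotonicity $\OPT_{w\times h}(\mathcal{C}') \le \OPT_{w\times h}(\mathcal{C})$ is observed, the rest is a direct adaptation of Lemma~\ref{lemma:fixed_radii} and Theorem~\ref{theorem:packing_large}.
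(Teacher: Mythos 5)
Your proposal is correct and takes essentially the same route as the paper's proof: round the radii \emph{down} to a constant number of values (so that $\OPT_{w\times h}(\mathcal{C}')\le\OPT_{w\times h}(\mathcal{C})$ by monotonicity, with no additive loss), run the configuration/integer-programming machinery of Lemma~\ref{lemma:fixed_radii}, restore the original radii to obtain an approximate packing, and absorb the error into the height via Lemma~\ref{lemma:augmentation}. The only (immaterial) difference is bookkeeping: the paper invokes Lemma~\ref{lemma:fixed_radii} as a black box and then applies Lemma~\ref{lemma:augmentation} a second time, settling for height $(1+O(\varepsilon))h$, whereas you inline the construction and fold the radius-rounding and coordinate-approximation errors into a single application, getting $(1+\varepsilon)h$ exactly.
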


\begin{proof}
Let $M = \lceil wh/\Area(\delta) \rceil$  and $\alpha = \varepsilon^2/(6M^2)$. We
obtain a modified instance $\mathcal{C}'$ such that the radius of each circle is
rounded down to a number in the sequence $\delta,\; \delta+ \alpha,\; \delta+
2\alpha, \dots$ Thus, in $\mathcal{C}'$, the number of different radii is at
most a constant. We use Lemma~\ref{lemma:fixed_radii}, and obtain a packing into
at most $\OPT_{w\times h}(\mathcal{C}') \le \OPT_{w\times h}(\mathcal{C})$ bins
of size $w \times (1+\varepsilon) h$. By restoring the original radii, we get a
$2\alpha$-packing of $\mathcal{C}$. Finally, we use
Lemma~\ref{lemma:augmentation}, and obtain a packing of $\mathcal{C}$ into no
more than $\OPT_{w\times h}(\mathcal{C})$ bins of size $w \times
(1+O(\varepsilon)) h$.
\myqed\end{proof}

Now we proceed to prove Theorem~\ref{theorem:binpack_resource}.

{\def\proofname{Proof of Theorem~\ref{theorem:binpack_resource}}
\begin{proof}
First, we pack the circles of $H_t$ into at most $\OPT(\mathcal{C})$ strips of
height~${O(\varepsilon)h}$. We assume, without loss of generality, that $t > 0$,
since otherwise we could modify the algorithm and find some $t' > 0$, such that
$\Area(H_{t'}) \le 2 \varepsilon \Area(\mathcal{C})$. Since $t > 0$, each circle
of $H_t$ has radius at most $\varepsilon h$. Therefore, we can pack $H_t$ into
strips of size $w \times 11 \varepsilon h$ using the NFDH strategy. Using
density arguments, we obtain that the total number of strips is
\begin{align*}
  \lceil 4 \cdot 4/\pi \cdot \Area(H_t) / (11\varepsilon w h) \rceil
  &\le
  \lceil 4 \cdot 4/\pi \cdot 2 \varepsilon \Area(\mathcal{C}) / (11 \varepsilon w h) \rceil\\
  &\le
  \lceil 32 / (11 \pi) \OPT(\mathcal{C}) \rceil
  \le \OPT(\mathcal{C}).
\end{align*}

Now, we only need few more changes to Algorithm~1. First, we set parameter
$\gamma = \varepsilon$. Then, in step~\ref{item:stepj-teo}, we replace
Theorem~\ref{theorem:packing_large} by Lemma~\ref{lemma:packing_large_resource}.
Finally, in step~\ref{item:fim-alg}, rather than packing sub-bins of $A_0, A_1,
...$ into bins of size $w \times (1+\varepsilon) h$, we pack $A_0, A_1, ...$
into bins of size ${w \times (1+105\varepsilon) (1+\varepsilon) h}$. Notice that
the simple first-fit greedy algorithm that packs such sub-bins in decreasing
order of height has the property that, if a new bin is created to pack a
sub-bin~$B$, then either~${B \in A_0}$; or~$B \not\in A_0$ and every other
created bin is fully used. Therefore, it is not hard to see that the total
number of bins is bounded by ${\max\{|A_0|, \; \lceil
\sum_{j\ge0}\Area(A_j)/((1+105\varepsilon) (1+\varepsilon) wh )\rceil \}}$.

By Lemma~\ref{lemma:packing_large_resource}, we may bound the first term as
$|A_0| \le \OPT(S_0) \le \OPT(\mathcal{C})$. To bound the second term,
we may repeat the proof of Lemma~\ref{lemma:aprox}, and
obtain
\[
\textstyle
\sum_{j\ge 0} \TAr(A_j) \le (1+105\varepsilon)\OPT(\mathcal{C}),
\]
which implies
\[
\textstyle
\sum_{j\ge 0} \Area(A_j) \le (1+\varepsilon) (1+105\varepsilon)wh\OPT(\mathcal{C}).
\]
Therefore
\[
 \left\lceil \frac{\sum_{j\ge0}\Area(A_j)}{(1+105\varepsilon) (1+\varepsilon) wh} \right\rceil
 \le
 \left\lceil \frac{(1+\varepsilon) (1+105\varepsilon)wh\OPT(\mathcal{C})}{(1+105\varepsilon) (1+\varepsilon) wh} \right\rceil
 =
 \OPT(\mathcal{C}).
\]
To complete the proof, we combine the bins used for $A_0, A_1, ...$ with the
strips used for $H_t$.
\myqed\end{proof}
}

\section{Generalizations}\label{sec:generalizations}

In this section, we discuss possible generalizations to our algorithm that
preserve the approximation ratio. We notice that, although Algorithm~1 only
considers the case of circles, it can be applied to many other packing problems.
Indeed, Algorithm~1 induces a \emph{unified framework} for packing problems that
satisfy certain assumptions, such as the existence of a polynomial-time
algorithm to approximately pack ``large'' items (in an augmented bin), and bounded
wasted volume of sub-bins that partially intersect an item.

As an illustrative example, in this section, we will consider the problem of
packing $d$-dimensional $L_p$-norm spheres%
\footnote{To comply with the majority of works in the literature, in this paper
we use the term \emph{circle}, rather than disk, to refer to the interior of a
region. Similarly, and for the sake of consistency with the multidimensional
packing literature, we use the term \emph{sphere}, rather than ball, to refer to
the interior of a solid.}
in $d$-dimensional boxes, and give a high-level summary of necessary changes in
the algorithm. Moreover, we will discuss how our algorithm can be used for bins
of different shapes, provided that we can satisfy some conditions.

\subsection{Packing $d$-dimensional $L_p$-norm spheres}

For a vector $v \in \R^d$, we denote by $v(i)$ the $i$-th coordinate of $v$. A
$d$-dimensional box of size $v$ is a hyperrectangle with sides of length $v(i) >
0$, for each~${1 \leq i \leq d}$. In this section, we consider the problem of
packing $L_p$-norm spheres in $d$-dimensional boxes of a given size, for a
rational $p \geq 1$. In fact, we allow a slightly more general concept of norm,
that we call the \emph{weighted $L_p$-norm}, when each dimension can be
stretched by a given factor. This is formally defined below.

\begin{definition}
Let $d$ be a positive integer, $p$ be a positive rational, and $\omega \in \R^d$
be a vector such that $\omega(i) \ge 1$, for $1 \le i \le d$. The \emph{weighted
$L_{p,\omega}$- and $L_{\infty,\omega}$-norms} are defined respectively as
\[
||x||_{p,\omega} = \left(\textstyle\sum_{i=1}^d \omega(i) |x(i)|^p\right)^{1/p},
\quad\mbox{and}\quad
||x||_{\infty,\omega} = \textstyle\max_{1 \leq i \leq d} \omega(i)|x(i)|.
\]

Also, let $r \in \R_+$, and $c \in \R^d$. The \emph{$d$-dimensional $L_{p,
\omega}$-norm sphere} of radius $r$ and centered at $c$ is the set of points $x
\in \R^d$ such that $||x-c||_{p,\omega} < r$. The \emph{$d$-dimensional
$L_{\infty,\omega}$-norm sphere} is defined analogously.
\end{definition}

Figure~\ref{fig:spheres} contains examples of the considered spheres. Notice
that the assumption that $\omega(i)~\ge~1$, for $1 \le i \le d$, ensures that
each sphere of radius $r$ fits in a box with sides of length~$2r$. This is
without loss of generality, since we could normalize the radii of the
spheres otherwise. Also, we will consider only the case that $p \ge 1$. This
restriction implies that the spheres correspond to convex regions in the
$d$-dimensional space (see Figures~\ref{fig:spheres}(a) and \ref{fig:spheres}(f)
for examples when $p <1$).

% \MARGINFIXME{increase resolution of figure in the final version!}
\begin{figure}[htb!]
  \centering
  \def\samples{500}
  \def\w{.19\linewidth}
  \def\multiplier{2}
  \def\fillcolor{black!10}
  \def\lw{0.2pt}

  \newcommand{\subsphere}[1]{{\subfloat[]{\begin{minipage}[c]{\w}\centering{#1}\end{minipage}}}}

  \subsphere{\begin{tikzpicture}
    \def\f{(1 - sqrt(abs(\x)))^2}
    \draw[samples=\samples, smooth, line width=\lw, fill=\fillcolor] (-1,0) -- plot[domain=-1:1] (\x,{\f}) -- (1,0) -- plot[domain=1:-1] (\x,{-\f}) -- cycle;
  \end{tikzpicture}}
  \subsphere{\begin{tikzpicture}
    \draw[line width=\lw, fill=\fillcolor] (-1,0) -- (0,1) -- (1,0) -- (0,-1) -- cycle;
  \end{tikzpicture}}
  \subsphere{\begin{tikzpicture}
    \def\f{sqrt(1 - (\x)^2)}
    \draw[samples=\samples, smooth, line width=\lw, fill=\fillcolor] (-1,0) -- plot[domain=-1:1] (\x,{\f}) -- (1,0) -- plot[domain=1:-1] (\x,{-\f}) -- cycle;
  \end{tikzpicture}}
  \subsphere{\begin{tikzpicture}
    \def\f{sqrt(sqrt(1 - (\x)^4))}
    \draw[samples=\samples, smooth, line width=\lw, fill=\fillcolor] (-1,0) -- plot[domain=-1:1] (\x,{\f}) -- (1,0) -- plot[domain=1:-1] (\x,{-\f}) -- cycle;
  \end{tikzpicture}}
  \subsphere{\begin{tikzpicture}
    \def\f{1}
    \draw[samples=\samples, smooth, line width=\lw, fill=\fillcolor] (-1,0) -- plot[domain=-1:1] (\x,{\f}) -- (1,0) -- plot[domain=1:-1] (\x,{-\f}) -- cycle;
  \end{tikzpicture}}

  \subsphere{\begin{tikzpicture}
    \def\f{((1 - sqrt(abs(\x)))^2)/\multiplier}
    \draw[samples=\samples, smooth, line width=\lw, fill=\fillcolor] (-1,0) -- plot[domain=-1:1] (\x,{\f}) -- (1,0) -- plot[domain=1:-1] (\x,{-\f}) -- cycle;
  \end{tikzpicture}}
  \subsphere{\begin{tikzpicture}
    \draw[line width=\lw, fill=\fillcolor] (-1,0) -- (0,1.0/\multiplier) -- (1,0) -- (0,-1.0/\multiplier) -- cycle;
  \end{tikzpicture}}
  \subsphere{\begin{tikzpicture}
    \def\f{(sqrt(1 - (\x)^2))/\multiplier}
    \draw[samples=\samples, smooth, line width=\lw, fill=\fillcolor] (-1,0) -- plot[domain=-1:1] (\x,{\f}) -- (1,0) -- plot[domain=1:-1] (\x,{-\f}) -- cycle;
  \end{tikzpicture}}
  \subsphere{\begin{tikzpicture}
    \def\f{(sqrt(sqrt(1 - (\x)^4)))/\multiplier}
    \draw[samples=\samples, smooth, line width=\lw, fill=\fillcolor] (-1,0) -- plot[domain=-1:1] (\x,{\f}) -- (1,0) -- plot[domain=1:-1] (\x,{-\f}) -- cycle;
  \end{tikzpicture}}
  \subsphere{\begin{tikzpicture}
    \def\f{1.0/\multiplier}
    \draw[samples=\samples, smooth, line width=\lw, fill=\fillcolor] (-1,0) -- plot[domain=-1:1] (\x,{\f}) -- (1,0) -- plot[domain=1:-1] (\x,{-\f}) -- cycle;
  \end{tikzpicture}}

  \bigskip

  \begin{minipage}{0.8\linewidth}
      In order, the columns correspond to $p = 0.5, 1, 2, 4, \infty$,
      and the rows correspond to $\omega = (1,1), (1,2)$.
      All spheres have radius $1$.
  \end{minipage}
  \caption{Examples of $2$-dimensional $L_{p,\omega}$-spheres.}\label{fig:spheres}
\end{figure}

Now we show how to pack $d$-dimensional $L_{p,\omega} $-norm spheres, for
given $d$, $p$ and $\omega$. For that, we need three main steps:

\begin{enumerate}
  \item obtain an algorithm that decides whether a set of spheres can be
  packed in a given box, and provides a packing within an arbitrarily small
  error precision;

  \item show how to transform this approximate packing into a non-intersecting
  packing in an augmented bin;

  \item and show that the wasted volume after discarding bins that partially
  intersect spheres is a small factor of the spheres' volume.
\end{enumerate}

First, consider the case that $p$ is rational, and let $a$ and $b$ be positive
integers such that $p = a/b$. Also, let $v \in \R^d_+$ be the box size, and
$\omega\in \R^d_+$ be the norm weight. We want to obtain an algorithm similar to
that of Corollary~\ref{theorem:polysystem}. The convexity of the spheres implies
that deciding if a set $\mathcal{C} = \{1, \ldots, n\}$ of
$L_{\omega,p}$-spheres of radii $r_1, r_2, \ldots, r_n$ can be packed in a box
of size~$v$ can be encoded by the following system of inequalities, where
variables $x_1, x_2, \ldots, x_n \in \R^d$ correspond to centers of the spheres.

\begin{align}
  \textstyle\sum_{k = 1}^d (\omega(k)|x_i(k) - x_j(k)|)^p  \geq (r_i + r_j)^p &&& \mbox{for } 1 \leq i < j \leq n,\quad\mbox{and}\label{eq:sys1}\\
  r_i \leq x_i(k) \leq v(k) - r_i &&& \mbox{for } 1 \leq i \leq n,\; 1\leq k \leq d.
  \label{eq:sys2}
\end{align}

In order to transform~\eqref{eq:sys1}-\eqref{eq:sys2} into a polynomial system,
we have to deal with two problems: avoiding the modulus operator, and dealing
with non-integer exponents. Consider an expression~$e$. We can replace $|e|$ by
a new variable $z$, if we add the constraints $z^2 = e^2$, and $z \geq 0$. Also,
if $e \ge 0$, we can replace $e^{a/b}$ by a new variable $y$, by adding the
constraints $e^a = y^b$, and $y \ge 0$. Therefore, we can
transform~\eqref{eq:sys1}-\eqref{eq:sys2} into a system of polynomial
inequalities and equalities by adding new variables $y_{ij}$ and $z_{ij}$ in
$\R^d$ for every $1 \leq i < j \leq n$, and considering the following system.

\begin{align*}
  \textstyle\sum_{k = 1}^d y_{ij}(k)  \geq (r_i + r_j)^p &&& \mbox{for } 1 \leq i < j \leq n,\\
  r_i \leq x_i(k) \leq v(k) - r_i &&& \mbox{for } 1 \leq i \leq n, \; 1 \leq k \leq d,\\
  y_{ij}(k)^b = (\omega(k)z_{ik}(k))^a &&& \mbox{for } 1 \leq i < j \leq n, \; 1 \leq k \leq d,\\
  z_{ij}(k)^2 = (x_i(k) - x_j(k))^2 &&& \mbox{for } 1 \leq i < j \leq n, \; 1 \leq k \leq d,\quad\mbox{and}\\
  y_{ij}(k), z_{ij}(k) \geq 0 &&& \mbox{for } 1 \leq i < j \leq n, \; 1 \leq k \leq d.
\end{align*}

We can solve this system in a way similar to Corollary~\ref{theorem:polysystem},
and obtain an arrangement of the spheres that intersect by at most a small
value. Now, we modify this arrangement to obtain a feasible packing in an
augmented box, like in Lemma~\ref{lemma:rising}. The $d$-dimensional box will be
augmented only in its first dimension. Therefore, to avoid intersection of two
spheres, one of these will have the first coordinate shifted. The following
lemma gives a bound on the shift length that is necessary.

\begin{lemma}\label{lemma:rising2}
Let $r_1, r_2, h, \varepsilon$ be positive numbers such that $\varepsilon h \leq
r_1 + r_2 \leq h$, and $\varepsilon < 1$,  and  let $x_1$,~$x_2$ be points in
the $d$-dimensional space. If ${x_1(1) \geq x_2(1)}$, ${||x_1 - x_2||_{p,\omega}
\geq r_1 + r_2 - \varepsilon^2 h}$, and ${x_1' = (x_1(1) + t, x_1(2), \ldots,
x_1(d))}$, where $t = (2^a \varepsilon)^{1/p}h/\omega(1)$, then ${||x_1' -
x_2||_{p,\omega} \geq r_1 + r_2}$.
\end{lemma}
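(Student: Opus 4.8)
The plan is to mirror the one-dimensional shifting argument of Lemma~\ref{lemma:rising}, replacing the Euclidean computation by the separability of the weighted $L_p$-norm raised to the $p$-th power. Since $x_1'$ differs from $x_1$ only in the first coordinate, and $||\cdot||_{p,\omega}^p$ is a coordinatewise sum, all contributions of coordinates $2,\dots,d$ cancel, and with $\delta = x_1(1)-x_2(1)\ge 0$ (nonnegative by the hypothesis $x_1(1)\ge x_2(1)$, which is exactly what guarantees that shifting by the positive amount $t$ \emph{increases} the first-coordinate gap) I get
\[
||x_1'-x_2||_{p,\omega}^p - ||x_1-x_2||_{p,\omega}^p = \omega(1)^p\big((\delta+t)^p - \delta^p\big).
\]
So the lemma reduces to controlling the scalar quantity $(\delta+t)^p-\delta^p$.

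For the lower bound I would use superadditivity: for $p\ge 1$ and nonnegative reals, $(\delta+t)^p \ge \delta^p + t^p$, since $u\mapsto u^p$ is superadditive on $[0,\infty)$. Hence the increase of the $p$-th power of the norm is at least $\omega(1)^p t^p = 2^a\varepsilon h^p$, the factor $1/\omega(1)$ in the definition of $t$ having been chosen precisely to neutralize $\omega(1)^p$. Combining this with the hypothesis $||x_1-x_2||_{p,\omega}\ge r_1+r_2-\varepsilon^2 h$ — whose right-hand side is positive because $\varepsilon<1$ and $r_1+r_2\ge \varepsilon h$, so the inequality survives raising to the $p$-th power — yields
\[
||x_1'-x_2||_{p,\omega}^p \ge (r_1+r_2-\varepsilon^2 h)^p + 2^a\varepsilon h^p.
\]

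It then remains to verify $(r_1+r_2-\varepsilon^2 h)^p + 2^a\varepsilon h^p \ge (r_1+r_2)^p$, equivalently $2^a\varepsilon h^p \ge (r_1+r_2)^p - (r_1+r_2-\varepsilon^2 h)^p$. Here I would upper bound the right-hand side by convexity of $u\mapsto u^p$: writing the difference as $\int p\,u^{p-1}\,du$ over an interval of length $\varepsilon^2 h$ ending at $r_1+r_2\le h$, it is at most $p\,h^{p-1}\varepsilon^2 h = p\varepsilon^2 h^p$. The proof closes with a crude constant chase: since $\varepsilon<1$ gives $\varepsilon^2\le\varepsilon$, and $p=a/b\le a< 2^a$, one has
\[
(r_1+r_2)^p - (r_1+r_2-\varepsilon^2 h)^p \le p\varepsilon^2 h^p \le p\varepsilon h^p < 2^a\varepsilon h^p,
\]
as required.

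The calculations are elementary; there is no real obstacle, only two points to get right. The first is adopting the convention $||x||_{p,\omega}^p = \sum_k(\omega(k)|x(k)|)^p$ as used in the packing encoding~\eqref{eq:sys1}, so that the $1/\omega(1)$ in $t$ exactly cancels the weight on the first coordinate. The second is choosing the slack generously: the apparently wasteful constant $2^a$ serves only to dominate $p$ after both the superadditivity (lower) and convexity (upper) estimates are applied, which is why no sharper analysis is attempted.
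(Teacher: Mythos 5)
Your proof is correct, and its skeleton is the same as the paper's: isolate the first coordinate (using $x_1(1)\ge x_2(1)$ so that the shift $t$ adds to the coordinate gap), apply superadditivity $(\delta+t)^p\ge\delta^p+t^p$ for $p\ge1$, and reduce everything to showing that the gain $\omega(1)^pt^p=2^a\varepsilon h^p$ dominates the deficit created by the slack $\varepsilon^2h$. The two arguments differ only in this closing scalar estimate. The paper first converts to multiplicative form, $r_1+r_2-\varepsilon^2h\ge(1-\varepsilon)(r_1+r_2)$ (this is where $\varepsilon h\le r_1+r_2$ enters), and then proves $1-(1-\varepsilon)^p<2^a\varepsilon$ by a binomial expansion of $(1-\varepsilon)^a$, using $p\le a$. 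You keep the additive form and bound the difference of $p$-th powers directly, $(r_1+r_2)^p-(r_1+r_2-\varepsilon^2h)^p\le p\,h^{p-1}\varepsilon^2h$, via monotonicity of the derivative $pu^{p-1}$ and $r_1+r_2\le h$, closing with $p\le a<2^a$ and $\varepsilon^2\le\varepsilon$. Both routes use the rationality $p=a/b$ only through $p\le a$ and are equally elementary, but yours has a small advantage: it spends the $\varepsilon^2h$ slack exactly where the hypotheses support it. The paper's printed chain relaxes $r_1+r_2-\varepsilon^2h$ to $r_1+r_2-\varepsilon h$ too early, and its subsequent step $(r_1+r_2-\varepsilon h)^p\ge(1-\varepsilon)^p(r_1+r_2)^p$ needs $h\le r_1+r_2$, which points against the hypothesis $r_1+r_2\le h$; the intended (and easily repaired) argument is $\varepsilon^2h=\varepsilon\cdot\varepsilon h\le\varepsilon(r_1+r_2)$, a repair your version never needs. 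Your remark on adopting the convention $\|x\|_{p,\omega}^p=\sum_k(\omega(k)|x(k)|)^p$ is also apt, since that is the form the paper itself uses both in~\eqref{eq:sys1} and inside its own proof, despite the slightly different display in the definition.
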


\begin{proof}
By direct calculation,
\begin{align*}
\textstyle
||x_1' - x_2||_{p,\omega}
    &\textstyle = \left(\sum_{k = 1}^d (\omega(k)|x_1'(k) - x_2(k)|)^p\right)^{1/p}\\
    &\textstyle = \left((\omega(1)|x_1'(1) - x_2(1)|)^p + \sum_{k = 2}^d (\omega(k)|x_1'(k) - x_2(k)|)^p\right)^{1/p}\\
    &\textstyle = \left((\omega(1)|x_1(1) + t - x_2(1)|)^p + \sum_{k = 2}^d (\omega(k)|x_1(k) - x_2(k)|)^p\right)^{1/p}\\
    &\textstyle = \left((\omega(1)|x_1(1) - x_2(1)| + \omega(1)t)^p + \sum_{k = 2}^d (\omega(k)|x_1(k) - x_2(k)|)^p\right)^{1/p}\\
    &\textstyle \geq \left((\omega(1)|x_1(1) - x_2(1)|)^p + t^p\omega(1)^p + \sum_{k = 2}^d (\omega(k)|x_1(k) - x_2(k)|)^p\right)^{1/p}\\
    &\textstyle = \left((||x_1 - x_2||_{p,\omega})^p + t^p\omega(1)^p\right)^{1/p}\\
    &\textstyle \geq \left((r_1 + r_2 - \varepsilon h)^p + t^p\omega(1)^p\right)^{1/p}\\
    &\textstyle \geq \left((r_1 + r_2 - \varepsilon(r_1 + r_2))^p + t^p\omega(1)^p\right)^{1/p}\\
    &\textstyle = \left((1-\varepsilon)^p(r_1 + r_2)^p + t^p\omega(1)^p\right)^{1/p}
\end{align*}

We now bound $t$ to obtain the desired result. First, notice that
because $1 - \varepsilon < 1$, we have that $1 - (1 - \varepsilon)^p \leq 1 - (1
- \varepsilon)^a$, and so
\begin{align*}
 1 - (1 - \varepsilon)^a
    &= 1 - \left(1 + \sum_{i = 1}^a \binom{a}{i}(-\varepsilon)^i\right)
    = - \sum_{i = 1}^a \binom{a}{i}(-\varepsilon)^i \\
    &\leq \sum_{i = 1}^a \binom{a}{i}\varepsilon^i
    \leq \varepsilon\sum_{i = 1}^a \binom{a}{i}
    < \varepsilon 2^a,
\end{align*}
from where we conclude that $1 - (1 - \varepsilon)^p < 2^a\varepsilon$. Now,
using the definition of $t$, we have that
\[
t = \frac{(2^a \varepsilon)^{1/p}h}{\omega(1)} \geq \frac{(1 - (1 -
\varepsilon)^p)^{1/p}h}{\omega(1)} = \left(\frac{(1 - (1 -
\varepsilon)^p)h^p}{\omega(1)^p}\right)^{1/p}.
\]
That is, $t^p\omega(1)^p \geq (1 - (1 - \varepsilon)^p)h^p \geq (1 - (1 -
\varepsilon)^p) (r_1 + r_2)^p$. Thus, we conclude that
\begin{align*}
||x_1' - x_2||_{p,\omega}
&\geq \left((1-\varepsilon)^p(r_1 + r_2)^p + t^p\omega(1)^p\right)^{1/p} \\
&\geq \left((1-\varepsilon)^p (r_1 + r_2)^p + (1 - (1 - \varepsilon)^p)(r_1 +
r_2)^p\right)^{1/p} = r_1 + r_2,
\end{align*}
and the results follows.
\myqed\end{proof}

For the case $p = \infty$, we cannot repeat the same lifting strategy to obtain
a packing in an augmented bin. As an example, even if two spheres intersect in a
small portion, it could be necessary to raise one of them by twice the radius of
the other sphere (see Figure~\ref{fig:square_intersection}). Therefore, instead
of using algebraic quantifier elimination algorithm to pack a set of spheres in
a given box, we can use an algorithm based on discretization. This has been
already done, for example, by Bansal~\emph{et~al.}~\cite{BansalCKS06} for
hypercubes, and the generalization is straightforward.

\begin{figure}[htb!]\centering
  \def\fillcolor{black!10}
  \def\lw{0.1pt}
  \begin{tikzpicture}[scale=1.5]
    \draw[line width=\lw, fill=\fillcolor] (0,0) -- (0,1) -- (1,1) -- (1,0) -- cycle;
    \draw[line width=\lw, fill=\fillcolor] (0.9,1) -- (0.9,2.5) -- (2.4,2.5) -- (2.4,1) -- cycle;
    \draw[dashed, line width=0.3pt] (0.9,0) -- (0.9,1.5) -- (2.4,1.5) -- (2.4,0) -- cycle;
    \node[circle, fill=black, inner sep=0pt, minimum size=3pt, outer sep=1pt, label={below: \tiny $(x,y)$}] (o) at (1.65,0.75) {};
    \node[circle, fill=black, inner sep=0pt, minimum size=3pt, outer sep=1pt, label={above: \tiny $(x,y+r_1)$}] (d) at (1.65, 1.75) {};
    \draw[->, >=stealth] (o) -- (d);
  \end{tikzpicture}

  \bigskip

  \begin{minipage}{0.8\linewidth}
  Fixing intersection of squares by lifting fails even if the intersection is
  small. In this example, one needs to raise one square by the length of its
  radius to avoid intersection. The square on the left has radius $0.5$, and the
  square on the right has radius $0.75$. To avoid intersection, we have to raise
  the square on the right by $1$.
  \end{minipage}

  \caption{An example where lifting fails.}\label{fig:square_intersection}
\end{figure}

Now, what is missing to extend Algorithm~1 is bounding the wasted volume of grid
elements that partially intersect spheres. The following lemma bounds the
distance between any two points in a grid element of side length $\ell$. Then,
we observe that the grid elements that are wasted must intersect the border of a
packed sphere.

\begin{lemma}\label{lemma:diaglp}
If $x$ and $y$ are two points in a hyperrectangle of side length $\ell$, then
\[
||x - y||_{p,\omega} \leq \ell (\textstyle\sum_{k = 1}^d \omega(k)^p )^{1/p},
\quad\mbox{and}\quad
||x - y||_{\infty,\omega} \leq \ell (\textstyle\max_{1 \leq k \leq d}\omega(k)).
\]
\end{lemma}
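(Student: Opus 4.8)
The plan is to reduce both inequalities to a single elementary coordinatewise estimate. First I would observe that if $x$ and $y$ both lie in an (axis-aligned) hyperrectangle all of whose sides have length $\ell$, then for each coordinate $k$ the projections $x(k)$ and $y(k)$ lie in a common interval of length $\ell$, and hence $|x(k) - y(k)| \le \ell$ for every $1 \le k \le d$. This single observation is what drives the whole argument; everything else is monotonicity.

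For the weighted $L_{p,\omega}$-norm I would substitute this bound directly into the definition, reading the norm in the form in which the weight sits inside the $p$-th power, exactly as it appears in the packing constraints~\eqref{eq:sys1}. Using that $t \mapsto t^p$ is nondecreasing on the nonnegative reals (recall $p \ge 1$ and $\omega(k)\ge 1\ge 0$), each summand satisfies $(\omega(k)|x(k)-y(k)|)^p \le (\omega(k)\ell)^p = \omega(k)^p \ell^p$, so that
\[
||x - y||_{p,\omega} \le \left(\textstyle\sum_{k=1}^d \omega(k)^p \ell^p\right)^{1/p} = \ell\left(\textstyle\sum_{k=1}^d \omega(k)^p\right)^{1/p},
\]
where the last step factors $\ell^p$ out of the sum and invokes monotonicity of $t \mapsto t^{1/p}$. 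For the $L_{\infty,\omega}$-norm the argument is shorter still: $||x-y||_{\infty,\omega} = \max_{1\le k\le d} \omega(k)|x(k)-y(k)| \le \max_{1\le k\le d} \omega(k)\,\ell = \ell\max_{1\le k\le d} \omega(k)$, since each factor $|x(k)-y(k)|$ is bounded by $\ell$ and the weights are nonnegative.

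There is essentially no hard step here: the lemma is a convexity-free, purely coordinatewise bound. The only point that deserves care is matching the precise placement of the weight $\omega(k)$ so that the target expression $\ell(\sum_k \omega(k)^p)^{1/p}$ emerges with the weights raised to the $p$-th power; this is consistent with the norm used in~\eqref{eq:sys1}. I would also double-check that the hyperrectangle is axis-aligned, so that the coordinatewise interval argument is valid — were the box rotated, the bound could fail, but throughout the paper the grid cells are axis-aligned boxes, so this holds automatically.
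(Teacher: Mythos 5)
Your proof is correct and follows essentially the same route as the paper's: both start from the coordinatewise bound $|x(k)-y(k)|\le\ell$ and substitute it directly into the norm expression, treating the finite-$p$ and $\infty$ cases separately by monotonicity. Your side remark about reading the weight inside the $p$-th power (as in the packing constraints rather than in the displayed definition) is exactly the convention the paper's own proof uses, so the two arguments match line for line.
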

\begin{proof}
Note that, for every $1 \leq k \leq d$, $|x(i) - y(i)| \leq \ell$.
If $p \in \Q$, then we get
\begin{align*}
\textstyle
||x - y||_{p,\omega}
  &=    (\sum_{k = 1}^d (\omega(k) |x(i) - y(i)|)^p )^{1/p}\\
  &\leq (\sum_{k = 1}^d (\omega(k)\ell\         )^p )^{1/p}
   =    \ell (\sum_{k = 1}^d \omega(k)^p )^{1/p}.
\end{align*}
Now, if $p = \infty$, we get
\begin{align*}
\textstyle
||x - y||_{p,\omega}
  &=    \max_{1 \leq k \leq d}(\omega(k) |x(i) - y(i)|)\\
  &\leq \max_{1 \leq k \leq d}\omega(k) \ell
  =    \ell\max_{1 \leq k \leq d} \omega(k).\qedhere
\end{align*}
\end{proof}

\begin{lemma}\label{lemma:lpinterssphere}
Let $Q$ be a hyperrectangle of side length $\ell$, and define
\[
t = \begin{cases}
      \ell \, (\sum_{k = 1}^d \omega(k)^p )^{1/p}  & \mbox{if } p \in \Q,\\
      \ell \, (\max_{1 \leq k \leq d} \omega(k))   & \mbox{if } p = \infty.
    \end{cases}
\]
Also let $C$ be an $L_{p,\omega}$-norm sphere of radius $r$ centered at a point
$c$, such that $r \ge t$,  $C_+$ be the $L_{p,\omega}$-norm sphere of radius ${r
+ t}$ centered at $c$, and $C_-$ be the interior of the $L_{p,\omega}$-norm
sphere of radius ${r - t}$ centered at $c$. If $C$ intersects $Q$, but $Q$ is
not contained in $C$, then $Q \subseteq C_+$ and $Q \cap C_- = \emptyset$.
\end{lemma}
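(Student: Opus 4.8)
The plan is to reduce everything to two applications of the triangle inequality for the norm $||\cdot||_{p,\omega}$, using Lemma~\ref{lemma:diaglp} to control the diameter of $Q$. Recall that for $p \ge 1$ the map $||\cdot||_{p,\omega}$ is a genuine norm (its unit ball is convex), so the triangle inequality is available; moreover, Lemma~\ref{lemma:diaglp} says precisely that $||x-y||_{p,\omega} \le t$ for any two points $x,y$ of a hyperrectangle of side length $\ell$, where $t$ is the quantity defined in the statement. In other words, $t$ is an upper bound on the $L_{p,\omega}$-diameter of $Q$, and this is the only geometric fact about $Q$ that I would use.

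First I would prove $Q \subseteq C_+$. Since $C$ intersects $Q$, fix a point $q_0 \in C \cap Q$, so that $||q_0 - c||_{p,\omega} < r$. For an arbitrary $q \in Q$, the diameter bound gives $||q - q_0||_{p,\omega} \le t$, and hence by the triangle inequality
\[
||q - c||_{p,\omega} \le ||q - q_0||_{p,\omega} + ||q_0 - c||_{p,\omega} < t + r,
\]
which is exactly the condition for $q$ to lie in $C_+$. As $q$ was arbitrary, $Q \subseteq C_+$.

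Next I would prove $Q \cap C_- = \emptyset$. Here I use the hypothesis that $Q$ is not contained in $C$: fix a point $q_1 \in Q \setminus C$, so that $||q_1 - c||_{p,\omega} \ge r$. For an arbitrary $q \in Q$, the diameter bound gives $||q_1 - q||_{p,\omega} \le t$, and the triangle inequality $||q_1 - c||_{p,\omega} \le ||q_1 - q||_{p,\omega} + ||q - c||_{p,\omega}$ yields
\[
||q - c||_{p,\omega} \ge ||q_1 - c||_{p,\omega} - t \ge r - t.
\]
Since $C_-$ is the \emph{interior} of the sphere of radius $r-t$, i.e.\ the set of points at norm strictly less than $r-t$, no point of $Q$ lies in $C_-$, proving $Q \cap C_- = \emptyset$. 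The assumption $r \ge t$ only serves to make $C_-$ a non-degenerate sphere of nonnegative radius; the argument goes through verbatim in the case $p = \infty$ upon invoking the corresponding $L_{\infty,\omega}$ diameter bound of Lemma~\ref{lemma:diaglp}.

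I do not foresee a genuine obstacle: the statement merely asserts that a set of $L_{p,\omega}$-diameter at most $t$ that meets $C$ but is not swallowed by it must sit in the $t$-neighborhood of $\partial C$. The only points requiring care are bookkeeping of strict versus non-strict inequalities—the spheres are open, so the intersection witness $q_0$ gives the strict bound that propagates to $Q \subseteq C_+$—and confirming that the single estimate $t$ from Lemma~\ref{lemma:diaglp} simultaneously handles both the $p \in \Q$ and the $p = \infty$ cases, which it does by construction.
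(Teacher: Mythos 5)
Your proof is correct and rests on the same two ingredients as the paper's own: the diameter bound of Lemma~\ref{lemma:diaglp} and the triangle inequality for $||\cdot||_{p,\omega}$. The genuine difference is the choice of witness points. The paper picks a single point $x \in Q$ lying exactly on the boundary of $C$, i.e.\ with $||c-x||_{p,\omega} = r$, and derives both conclusions from that one point; note that the existence of such a point is not purely formal --- it follows from ``$C$ meets $Q$'' and ``$Q \not\subseteq C$'' only via the connectedness of $Q$ and the continuity of the norm, an intermediate-value argument the paper leaves implicit. You instead use two witnesses: $q_0 \in C \cap Q$ with $||q_0 - c||_{p,\omega} < r$ to get $Q \subseteq C_+$, and $q_1 \in Q \setminus C$ with $||q_1 - c||_{p,\omega} \ge r$ to get $Q \cap C_- = \emptyset$, each of which exists immediately from the hypotheses. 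This buys you two small things: you never need the topological existence argument, and the strict inequality carried by $q_0$ places $Q$ inside the \emph{open} sphere $C_+$ --- the paper's spheres are open by definition, yet its own estimate $||c - y||_{p,\omega} \le r + t$ technically only yields membership in the closed ball, so your bookkeeping of strict versus non-strict inequalities is in fact slightly more faithful to the statement. Both arguments handle $p \in \Q$ and $p = \infty$ identically through Lemma~\ref{lemma:diaglp}, and both are equally short; there is no gap in your proposal.
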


\begin{figure}[t]
  \centering
  \begin{tikzpicture}[scale = 1.5]
    \draw (0,0) circle (1);
    \node[circle, fill=black, inner sep=0pt, minimum size=3pt, label={225:\small $x$}] (x) at ({sqrt(2)/2}, {sqrt(2)/2}) {};
    \node[circle, fill=black, inner sep=0pt, minimum size=3pt, label={225:\small $y$}] (y) at ($(x)-(0.3,0.3)$) {};
    \node[circle, fill=black, inner sep=0pt, minimum size=3pt, label={225:\small $z$}] (z) at ($(x)+(0.3,0.3)$) {};
    \draw[dashed] ($(x)-(0.3,0.3)$) rectangle ($(x)+(0.3,0.3)$);
    \draw[dotted] (0,0) circle ({1 - 0.3*sqrt(2)});
    \draw[dotted] (0,0) circle ({1 + 0.3*sqrt(2)});
    \node at ({1 - 0.3*sqrt(2) + 0.15},0) {$C_-$};
    \node at ({1 + 0.09},0) {\small $C$};
    \node at ({1 + 0.3*sqrt(2) + 0.15},0) {\small $C_+$};
  \end{tikzpicture}

  \bigskip

  \begin{minipage}{0.8\linewidth}
Point $x$ is at the boundary of $C$. The dashed box has side $2\ell$, and is
centered at $x$. Point $y$ is at a distance $\ell\sqrt{2}$ of $x$, and is not
contained in $C_-$. Similarly, $z$ is not contained in $C_+$. For this
configuration, every box of side $\ell$ that intersects $x$ must be contained in
the region delimited by the dashed square.
  \end{minipage}
  \caption{Bounding the box intersection to a sphere boundary.}\label{fig:circle-square}
\end{figure}
\begin{proof}
Since $C \cap Q \neq \emptyset$, and $Q \not\subseteq  C$, there exists $x \in
Q$ such that $||c - x||_{p,\omega} = r$. Also, consider an arbitrary point $y
\in Q$ (see Figure~\ref{fig:circle-square}). Using the triangle inequality, and
Lemma~\ref{lemma:diaglp}, we have that
\[
||c - y||_{p,\omega} \leq ||c - x||_{p,\omega} + ||x - y||_{p,\omega} \leq r + t,
\]
from where we conclude that $y \in C_+$, and hence $Q \subseteq C_+$. Again from
the triangle inequality, and Lemma~\ref{lemma:diaglp}, we get
\[
r = ||c - x||_{p,\omega} \leq ||c - y||_{p,\omega} + ||x - y||_{p,\omega} \leq ||c - y||_{p,\omega} + t,\]
so $||c - y||_{p,\omega} \geq r - t$, and we conclude that $y \notin C_-$.
Thus,  we get $Q \cap C_- = \emptyset$.
\myqed\end{proof}

From Lemma~\ref{lemma:lpinterssphere}, one may obtain a statement similar to
Lemma~\ref{lemma:area}. Using arguments of Section~\ref{sec:aptas}, combined
with the discussion in this subsection, it is now possible to obtain an
asymptotic approximation scheme for the problem of packing
$L_{\omega,p}$-spheres in the minimum number of $d$-dimensional bins of a given
size. It is straightforward to extend these results for the strip packing
problem of $L_{\omega,p}$-spheres, when the recipient contains an unbounded
dimension, and the objective is to minimize span of packed spheres in this
dimension.

We finish this subsection by noticing that the bin packing of
$L_{\infty,\omega}$-norm spheres is, in fact, a particular case of the general
hyperrectangle bin packing problem when all hyperrectangles are congruent.

\subsection{Generalizing the bin}

In this subsection, we sketch how our results can be generalized to deal with
bins of different shapes, that are described by semi-algebraic sets. This
generalization includes, for example, polytopes and $L_{p,\omega}$-norm spheres.
The idea is solving the corresponding system of polynomial inequalities,
together with the packing constraints.
We will consider a bin $B$ that corresponds to the $d$-dimensional
semi-algebraic set described by polynomials $f_1, \ldots, f_s \in \Q[z_1,
\ldots, z_d]$. More precisely, a point $z \in \R^d$ is contained in $B$ if, and
only if, $f_i(z) \geq 0$ for all $1 \le i \le s$. In the following, we will
derive a quantified formula with polynomial inequalities and equalities to
decide if a set $\mathcal{C}$ of spheres can be packed in $B$.

We begin by defining variables $x_1, \ldots, x_n \in \R^d$ to represent the
centers of the spheres. Consider the inequalities
\begin{align}
  ||x_i - x_j||_{p,\omega} \geq r_i + r_j, &&& \mbox{ for } 1 \leq i < j \leq n, \label{ineq1} \\
  f_j(x_i) \geq 0, &&& \mbox{ for } 1 \leq i \leq n, \; 1 \leq j \leq s.\label{ineq2}
\end{align}
These constraints requires that there exists a feasible arrangement of the
spheres in the \mbox{$d$-dimensional} space, and that each center is contained
in the bin. Now, we will ensure that every point of the sphere is indeed in the
bin. To do this, for each sphere $i$, we will consider a mapping from the
$(d+1)$-dimensional space to this sphere, centered at $x_i$, that is, the
function $g_i : \R^d \setminus \{0\} \times \R \rightarrow \R^d$ defined as
\[
  g_i(a, \lambda) = r_i\frac{a}{||a||_{p,\omega}(1 + \lambda^2)} + x_i.
\]
This maps any pair $(a,\lambda)$ of the domain to a point $y_i = g_i(a,\lambda)$
in the $L_{p,\omega}$-sphere centered at~$x_i$ with radius $r_i$, or at its
boundary. Analogously, each point in the sphere, with exception of the center, is
mapped to a point of the domain. Therefore, it is enough to make sure that every
such mapped point $y_i$ is in the bin. This is done by adding the following
constraints.
\begin{align}
  ||a||_{p,\omega}(1 + \lambda^2)\left(y_i(k) - x_i(k)\right) = r_i a(k), &&& \mbox{ for } 1 \leq i \leq n, \; 1 \leq k \leq d, \label{ineq3}\\
  f_j(y_i) \geq 0, &&& \mbox{ for } 1 \leq i \leq n, \; 1 \leq j \leq s. \label{ineq4}
\end{align}
Notice that if $||a||_{p,\omega} = 0$, then inequality~\eqref{ineq3} is
trivially satisfied, since in this case we have~${a(k) = 0}$ for every $1 \leq k
\leq d$, so $y_i$ can be set as any point of the bin to satisfy
inequality~\eqref{ineq4}.
Now, conjoining all the constraints, we obtain a quantified first-order formula.
Therefore, deciding if there is a packing of $\mathcal{C}$ in $B$ is equivalent
to deciding the formula
\[
(\exists x_1 \in \R^d)  \dots (\exists x_n \in \R^d)
(\forall a \in \R^d) (\forall \lambda \in \R)
(\exists y_1 \in \R^d)  \dots (\exists y_n \in \R^d)
\;
\mbox{\eqref{ineq1}-\eqref{ineq4}}.
\]

Although constraints~\eqref{ineq1} and~\eqref{ineq3} contain non-polynomial
terms, it is possible to obtain an equivalent formula with only polynomial
inequalities and equalities using the process discussed before. Therefore, one
can use any quantifier elimination algorithm to solve it, provided that it
supplies a realization of the points at a given precision.

Also in this case, the obtained center coordinates can be irrational, so only
approximate values may be available. To use resource augmentation, here we will
scale both the recipient and the radii of the spheres by a factor $1 + \gamma$
before solving the algebraic system. This is done by replacing polynomial $f_i$
by polynomial $f_i' = f_i(x/(1+\gamma))$, for $1 \le i \le s$, and radius $r_i$
by $r_i' = (1+\gamma)r_i$, for $1 \le i \le n$. After obtaining an approximate
packing, we scale back the radii of the spheres to their original values. The
precision of the obtained packing is adjusted according to parameter~$\gamma$,
so that intersection is avoided.

Notice that after the first iteration of the packing algorithm, the remaining
space of the bin will be partitioned into a grid of boxes. Therefore, the volume
of the boxes that intersect the boundary of the bin will be lost. This is not a
problem because we use a fine-grained grid, for which the wasted volume will be
bounded by a small fraction of the bin volume.

\section{Final Remarks}\label{sec:conclusion}

We presented the first approximation algorithms for the circle bin packing
problem using augmented bins, and the circle strip packing problem. We obtained
asymptotic approximation schemes for circle packings exploring novel ideas, such
as iteratively distinguishing large and small items, and carefully using the
free space left after packing large items. We believe that our algorithm can
lead to further results for related problems, and we have already presented some
possible generalizations. Also, our use of algebraic quantifier elimination
algorithms exemplifies how results from algebra can be successfully used in the
context of optimization. Using these algorithms helped us to avoid
discretization algorithms, whose running time would depend exponentially on the
size of resource augmentation parameter,  $\log 1/\gamma$, and allowed the
packing of more general items, in a simple and concise manner.

We highlight that our algorithm is not restricted to circles, and indeed it can
be seen as a unified framework to different packing problems. These problems
need to satisfy certain conditions, such as requiring that items may be
partitioned into groups of large and small items (as in
Figure~\ref{fig:partition}), and the packing of large items in an augmented bin
can be approximately solved. One may consider, for example, the problem of
packing regular polygons, by using a discretization algorithm to deal with large
items. Notice that one could even consider instances with items of mixed shapes.
A very natural generalization of our algorithm is considering the packing of
$L_{p}$-norm spheres, as done in Section~\ref{sec:generalizations}. Moreover,
minor modifications of the algorithm allow using bins of different shapes. In
Section~\ref{sec:generalizations}, the illustrative example considers a whole
set of semi-algebraic sets as possible bins, what comprises, for instance, the
problem of packing spheres in spheres. For the particular case that a bin is
enlarged by a constant, we avoid the approximation, and obtain a resource
augmentation scheme.

Finally, we note that, although the quantifier elimination algorithms we used
give a precise representation of a packing in a non-augmented bin, the returned
solution may possibly contain irrational coordinates. To provide solutions with
rational numbers, we used approximate coordinates with arbitrary precision. This
is the \emph{only} reason why we used augmented bins in our APTAS, and thus
resource augmentation can be avoided in a more general computational model. We
left open the question to determine if it is always possible to obtain a
rational solution to the problem of packing a set of circles of rational radii
in a non-augmented bin of rational dimensions.

\paragraph{Acknowledgement}
We would like to thank Frank Vallentin for providing us with insights and
references on the cylindric algebraic decomposition and other algebraic notions.

\bibliographystyle{abbrv}
\bibliography{circle}

\end{document}